\documentclass[11pt]{article}
\usepackage{amsmath,amsthm,amssymb}
\usepackage{graphicx,psfrag,epsf}
\usepackage{enumerate}

\usepackage{mathdots}
\usepackage[latin1]{inputenc}
\usepackage{natbib}
\usepackage{url}
\usepackage{subfigure} 
\usepackage{ctable}
\usepackage{color}
\usepackage{colortbl}
\usepackage[format=hang, indention=-1.4cm, font=footnotesize, labelfont=bf]{caption}
\usepackage{lscape}
\usepackage{bbm}
\usepackage{todonotes}
\usepackage{bbold}
\usepackage{enumitem}

\usepackage{hyperref}

\def\bx{\boldsymbol{x}}
\def\bX{\boldsymbol{X}}

\def\bS{\boldsymbol{S}}

\def\bu{\boldsymbol{u}}
\def\bU{\boldsymbol{U}}

\def\bI{\boldsymbol{I}}
\def\bmu{\boldsymbol{\mu}}
\def\btheta{\boldsymbol{\theta}}
\def\bSigma{\boldsymbol{\Sigma}}
\def\bOmega{\boldsymbol{\Omega}}

\def\bPsi{\boldsymbol{\Psi}}
\def\bzero{\boldsymbol{0}}

\newcommand{\real}{I\hspace{-1.0mm}R}
\newcommand{\indic}{\text{1\hspace{-1.0mm}I}}

\theoremstyle{definition}
\newtheorem{defi}{Definition}[section]

\newtheorem{theorem}{Theorem}[section]
\newtheorem{cor}{Corollary}[section]
\newcommand{\blind}{0}

\addtolength{\oddsidemargin}{-1.0in}%
\addtolength{\evensidemargin}{-.5in}%
\addtolength{\textwidth}{2.0in}%
\addtolength{\textheight}{1.3in}%
\addtolength{\topmargin}{-.8in}%

\def\permille{\ensuremath{{}^\text{o}\mkern-5mu/\mkern-3mu_\text{oo}}}

\begin{document}

\def\spacingset#1{\renewcommand{\baselinestretch}%
{#1}\small\normalsize} \spacingset{1}


\if0\blind
{
  \title{\bf The multivariate tail-inflated normal distribution \\ and its application in
finance}
  \author{Antonio Punzo\\
    \small Dipartimento di Economia e Impresa\\
		\small Universit\`{a} di Catania\\ \\
    Luca Bagnato\\
    \small Dipartimento di Scienze Economiche e Sociali\\
		\small Universit\`{a} Cattolica del Sacro Cuore}
  \maketitle
} \fi


\bigskip
\begin{abstract}
This paper introduces the multivariate tail-inflated normal (MTIN) distribution, an elliptical heavy-tails generalization of the multivariate normal (MN).
The MTIN belongs to the family of MN scale mixtures by choosing a convenient continuous uniform as mixing distribution.
Moreover, it has a closed-form for the probability density function characterized by only one additional ``inflation'' parameter, with respect to the nested MN, governing the tail-weight.
The first four moments are also computed; interestingly, they always exist and the excess kurtosis can assume any positive value.
The method of moments and maximum likelihood (ML) are considered for estimation.
As concerns the latter, a direct approach, as well as a variant of the EM algorithm, are illustrated.
The existence of the ML estimates is also evaluated.
Since the inflation parameter is estimated from the data, robust estimates of the mean vector and covariance matrix of the nested MN distribution are automatically obtained by down-weighting.
Simulations are performed to compare the estimation methods/algorithms and to investigate the ability of AIC and BIC to select among a set of candidate elliptical models. 
For illustrative purposes, the MTIN distribution is finally fitted to multivariate financial data where its usefulness is also shown in comparison with other well-established multivariate elliptical distributions.
\end{abstract}

\noindent%
{\it Keywords:} Elliptical distributions, Financial applications, Heavy-tailed distributions, Maximum likelihood, Scale mixtures, Uniform distribution 


\section{Introduction}
\label{sec:intro}

Statistical inference dealing with continuous multivariate data is commonly focused on the multivariate normal (MN) distribution, with mean vector $\bmu$ and covariance matrix $\bSigma$, due to its computational and theoretical convenience.
However, for many real phenomena, the tails of this distribution are lighter than required.
This is often due to the presence of mild outliers \citep[][p.~4]{Ritt:Robu:2015}. 
Outliers are defined as ``mild'', with respect to the MN distribution (which is the reference distribution), when they do not really deviate from the MN and are not strongly outlying, but rather they produce an overall distribution that is too heavy-tailed to be modeled by the MN (\citealp{Mazz:Punz:Mixt:2017}, \citealp{morris19}, and \citealp{PunTor20}); for a discussion about the concept of reference (or target) distribution, see \citet{Davi:Gath:Thei:1993}.
Therefore, mild outliers can be modeled by means of a heavy-tailed elliptically symmetric (or elliptically contoured or, simply, elliptical) distribution embedding the MN distribution as a special case \citep[][p.~79]{Ritt:Robu:2015}.

A classical way to define such a larger model is by means of the MN scale mixture (also known as MN variance mixture); it is a finite/continuous mixture of MN distributions on $\bSigma$ obtained via a convenient discrete/continuous mixing distribution with positive support.
The MN scale mixture simply alters the tail behavior of the MN distribution while leaving the resultant distribution elliptically contoured.
This is made possible by the additional parameters of the mixing distribution which, in the scale mixture, govern the deviation from normality in terms of tail weight.
Regardless of the mixing distribution considered, expressing a multivariate elliptical distribution as a MN scale mixture enables (among other): natural/efficient computation of the maximum likelihood (ML) estimates via expectation-maximization (EM) based algorithms (\citealp{Lang:Sins:Norm:1993} and \citealp{Bala:Leiv:Sanh:Vilc:Esti:2009}), robust ML estimation of the parameters $\bmu$ and $\bSigma$ based on down-weighting of mild outliers (\citealp{Peel:McLa:Robu:2000}, \citealp{Punz:McNi:Robu:2016}, and \citealp{Farc:Punz:Robu:2019}), and efficient Monte Carlo calculations (\citealp{Rell:Tech:1970}, \citealp{Andr:Mall:Scal:1974}, \citealp{Efro:Olsh:HowB:1978} and \citealp{Choy:Chan:Scal:2008}).
Moreover, robustness studies have often used MN scale mixtures for simulation and in the analysis of outlier models; see \citet{West:Outl:1984,West:Onsc:1987} and the references therein.

The MN scale mixture family includes many classical distributions, such as the multivariate $t$ (M$t$; \citealp{Lang:Litt:Tayl:Robu:1989} and \citealp{Kotz:Nada:Mult:2004}), obtained by using a convenient gamma as mixing distribution, the multivariate contaminated normal (MCN) of \citet[][see also \citealp{Aitk:Wils:Mixt:1980}]{Tuke:Asur:1960}, defined under a convenient Bernoulli mixing distribution (see, e.g., \citealp{Yama:Robu:2004} and \citealp{Punz:Blos:McNi:High:2019}), and the multivariate symmetric generalized hyperbolic (MSGH; \citealp{McNe:Frey:Embr:Quan:2005}) obtained if a generalized inverse Gaussian is considered as mixing distribution. 

In the present paper, we add a new member to the MN scale mixture family: the multivariate tail-inflated normal (MTIN) distribution.
It is obtained by using a convenient continuous uniform as mixing distribution.
In Section~\ref{subsec:pdf} we show that the proposed distribution has a closed-form probability density function (pdf) which is characterized by a single additional parameter, with respect to $\bmu$ and $\bSigma$ of the nested MN, governing the tail weight.
We also make explicit the representations of the MTIN distribution as MN scale mixture (Section~\ref{subsubsec:Multivariate normal scale mixtures}) and as elliptical distribution (Section~\ref{subsubsec:Elliptical distributions}).
In Section~\ref{sec:moments} we provide the first four moments of the MTIN distribution, which are those of practical interest (see also \appendixname~\ref{app:Teorema sui momenti della MTIN}).
We estimate the parameters by the method of moments (\appendixname~\ref{sec:Method of moments}) and maximum likelihood (ML; Section~\ref{sec:ML}); for the latter, we illustrate two alternative algorithms/methods to obtain these estimates (a direct approach in Section~\ref{subsec:Direct approach} and an ECME algorithm in Section~\ref{subsec:Alternative ECME algorithm}).
Always for the ML method, in Section~\ref{sec:Existence of MLE} we discuss the existence of the estimates.
The ECME algorithm allows us to highlights how the ML estimates of $\bmu$ and $\bSigma$ are robust in the sense that mild outliers are down-weighted in the computation of these parameters (see Section~\ref{sec:Some notes on robustness} for details). 
Two simulated data analyses are designed to compare the estimation methods discussed above in terms of parameter recovery and computational time required (Section~\ref{subsec:Quality of estimates and computational times}), and to evaluate the appropriateness of AIC and BIC in selecting among a set of candidate elliptical models that can be considered as natural competitors of our MTIN distribution (Section~\ref{subsec:Model selection: AIC versus BIC}).
At last, we illustrate the proposed model by analyzing financial multivariate data related to four of the publicly owned companies considered by the Dow Jones index (Section~\ref{sec:Real data analysis}); here, we also compare the performance of the MTIN distribution with other heavy-tailed elliptical distributions which are well-known in the financial literature.
We conclude the paper with a brief discussion in Section~\ref{sec:Discussion}.

\section{Multivariate tail-inflated normal distribution}
\label{sec:MTIN}

\subsection{Probability density function}
\label{subsec:pdf}

\begin{defi}[Probability density function]\label{defi1}
A $d$-dimensional random vector $\bX$ is said to have a $d$-variate tail-inflated normal distribution with mean vector $\bmu \in \real^d$, $d\times d$ scale matrix $\bSigma$, and inflation parameter $\theta \in \left(0,1\right)$, in symbols $\bX\sim \mathcal{TIN}_d\left(\bmu,\bSigma,\theta\right)$, if its pdf is given by  
\begin{equation}
f_{\text{TIN}}\left(\bx;\bmu,\bSigma,\theta\right)  =
\frac{2\pi^{-\frac{d}{2}}\left|\bSigma \right|^{-\frac{1}{2}}}{\theta \left[\delta\left(\bx;\bmu,\bSigma\right)\right]^{\left(\frac{d}{2}+1\right)}}    
\left[ \Gamma\left(\frac{d}{2}+1,(1-\theta) \frac{\delta\left(\bx;\bmu,\bSigma\right)}{2} \right) - \Gamma\left(\frac{d}{2}+1,\frac{\delta\left(\bx;\bmu,\bSigma\right)}{2} \right) \right],  
\label{eq:MTIN pdf}
\end{equation}
where $\Gamma\left(\cdot,\cdot\right)$ is the upper incomplete gamma function, $\left|\cdot\right|$ is the determinant, and \linebreak $\delta\left(\bx;\bmu,\bSigma\right)=\left(\bx-\bmu\right)'\bSigma^{-1}\left(\bx-\bmu\right)$ denotes the squared Mahalanobis distance between $\bx$ and $\bmu$ (with $\bSigma$ as the covariance matrix).
\end{defi}
Alternative formulations of \eqref{eq:MTIN pdf} are given in \appendixname~\ref{app:Alternative formulation of the pdf}.

Theorem~\ref{theo:normalcase} proves that the limiting form of \eqref{eq:MTIN pdf} as $\theta\rightarrow 0$ is the pdf of the $d$-variate normal distribution with mean vector $\bmu$ and covariance matrix $\bSigma$; in symbols $\bX\sim \mathcal{N}_d\left(\bmu,\bSigma\right)$.
\begin{theorem}[Limiting normal form]
\label{theo:normalcase}
The pdf $f_{\text{N}}\left(\bx;\bmu,\bSigma\right)$ of $\bX\sim \mathcal{N}_d\left(\bmu,\bSigma\right)$ can be obtained as a special case of \eqref{eq:MTIN pdf} when $\theta\rightarrow 0$; in formula
\begin{equation}
\lim_{\theta\rightarrow 0} f_{\text{TIN}}\left(\bx;\bmu,\bSigma,\theta\right) = f_{\text{N}}\left(\bx;\bmu,\bSigma\right).
\label{eq:limiting normal case}
\end{equation}
\end{theorem}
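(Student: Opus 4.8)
The plan is to recognize that the right-hand side of \eqref{eq:MTIN pdf} is an indeterminate form of type $0/0$ as $\theta\rightarrow 0$, and to resolve it by L'H\^opital's rule in $\theta$. Indeed, as $\theta\rightarrow 0$ the argument $(1-\theta)\delta\left(\bx;\bmu,\bSigma\right)/2$ tends to $\delta\left(\bx;\bmu,\bSigma\right)/2$, so the bracketed difference of upper incomplete gamma functions vanishes, while the factor $\theta$ in the denominator vanishes as well; the remaining prefactor is free of $\theta$. Equivalently, one may read the bracket divided by $\theta$ as a difference quotient converging to a derivative.

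For brevity write $\delta=\delta\left(\bx;\bmu,\bSigma\right)$ and set $s=\delta/2$. The only $\theta$-dependent part of \eqref{eq:MTIN pdf} is the ratio
\begin{equation*}
\frac{\Gamma\left(\frac{d}{2}+1,(1-\theta)s\right)-\Gamma\left(\frac{d}{2}+1,s\right)}{\theta},
\end{equation*}
so I would differentiate numerator and denominator in $\theta$. Using the elementary identity $\tfrac{\partial}{\partial x}\Gamma\left(a,x\right)=-x^{a-1}e^{-x}$ with $a=\tfrac{d}{2}+1$, together with the chain rule (the inner derivative of $(1-\theta)s$ being $-s$), the numerator has derivative $s\left[(1-\theta)s\right]^{d/2}e^{-(1-\theta)s}$ while the denominator has derivative $1$. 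Letting $\theta\rightarrow 0$ then gives
\begin{equation*}
\lim_{\theta\rightarrow 0}\frac{\Gamma\left(\frac{d}{2}+1,(1-\theta)s\right)-\Gamma\left(\frac{d}{2}+1,s\right)}{\theta}=s^{\frac{d}{2}+1}e^{-s}.
\end{equation*}

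Finally I would substitute this limit back into \eqref{eq:MTIN pdf}. Writing $\left[\delta\right]^{d/2+1}=(2s)^{d/2+1}=2^{d/2+1}s^{d/2+1}$ in the prefactor, the powers of $s$ cancel against $s^{d/2+1}e^{-s}$, leaving $2\pi^{-d/2}\left|\bSigma\right|^{-1/2}2^{-(d/2+1)}e^{-s}=(2\pi)^{-d/2}\left|\bSigma\right|^{-1/2}e^{-\delta/2}$, which is exactly $f_{\text{N}}\left(\bx;\bmu,\bSigma\right)$ and establishes \eqref{eq:limiting normal case}. I expect no genuine obstacle here: once the $0/0$ structure is spotted the computation is routine, and the only point requiring care is the differentiation of the upper incomplete gamma function with respect to its second argument together with the correct chain-rule factor. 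As a sanity check one could instead argue through the scale-mixture representation of Section~\ref{subsubsec:Multivariate normal scale mixtures}, where $\theta\rightarrow 0$ degenerates the uniform mixing distribution to a point mass and the mixture collapses to the MN; but the direct L'H\^opital argument above is the cleaner route.
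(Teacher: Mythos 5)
Your proposal is correct and follows essentially the same route as the paper's own proof: both identify the $0/0$ form in the $\theta$-dependent factor, apply L'H\^opital's rule using $\tfrac{\partial}{\partial x}\Gamma(a,x)=-x^{a-1}e^{-x}$ with the chain-rule factor $-\delta/2$, and substitute the resulting limit $\left[\delta/2\right]^{d/2+1}e^{-\delta/2}$ back into the pdf. Your write-up is in fact slightly more explicit about the final cancellation of the powers of $\delta/2$, but there is no substantive difference in method.
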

\begin{proof}
To prove the theorem we work on the limit, as $\theta\rightarrow 0$, of 
\begin{equation}
\frac{1}{\theta}   \displaystyle \left[ \Gamma\left(\frac{d}{2}+1,(1-\theta) \frac{\delta\left(\bx;\bmu,\bSigma\right)}{2} \right) - \Gamma\left(\frac{d}{2}+1,\frac{\delta\left(\bx;\bmu,\bSigma\right)}{2} \right) \right] ,
\label{eq:limit0}
\end{equation}
which is the part of \eqref{eq:MTIN pdf} depending on $\theta$.
The limit of \eqref{eq:limit0}, as $\theta \rightarrow 0$, is of the form zero over zero.
Using L'Hospital's rule, such a limit becomes
\begin{equation}
\lim_{\theta\rightarrow 0}
\label{eq:limit2}\displaystyle 
(1-\theta)^{\frac{d}{2}} 
\left[\frac{\delta\left(\bx;\bmu,\bSigma\right)}{2}\right]^{\left(\frac{d}{2}+1\right)}
\exp\left(-\frac{ \delta \left(\bx;\bmu,\bSigma\right) }{2}\right)
= \left[\frac{\delta\left(\bx;\bmu,\bSigma\right)}{2}\right]^{\left(\frac{d}{2}+1\right)} \exp\left(-\frac{ \delta \left(\bx;\bmu,\bSigma\right) }{2}\right).
\end{equation}
When \eqref{eq:limit2} is substituted in \eqref{eq:MTIN pdf} instead of \eqref{eq:limit0}, $f_{\text{N}}\left(\bx;\bmu,\bSigma\right)$ is obtained.
\end{proof}

\subsection{Representations}
\label{subsec:Representations}

In this section we show that, if $\bX\sim \mathcal{TIN}_d\left(\bmu,\bSigma,\theta\right)$, then its distribution has a twofold representation as MN scale mixture (Section~\ref{subsubsec:Multivariate normal scale mixtures}) and as multivariate elliptically symmetric distribution (Section~\ref{subsubsec:Elliptical distributions}).
The properties of these families are so implicitly inherited by our model. 


\subsubsection{Multivariate normal scale mixture representation}
\label{subsubsec:Multivariate normal scale mixtures}

For robustness sake, one of the most common ways to generalize the MN distribution is represented by the MN scale mixture (MNSM), also called MN variance mixture (MNVM), with pdf 
\begin{equation}
f_{\text{MNSM}}\left(\bx;\bmu,\bSigma,\btheta\right)=
\int_{S_h \subseteq \real_{>0}} f_{\text{N}}\left(\bx;\bmu,\bSigma/w\right)h\left(w;\btheta\right)dw,
\label{eq:MN mixture model}
\end{equation}
where 
$h\left(w;\btheta\right)$ is the mixing probability density (or mass) function --- with support $S_h \subseteq \real_{>0}$ --- depending on the parameter(s) $\btheta$. 
The pdf in \eqref{eq:MN mixture model} is unimodal, elliptically symmetric, and guarantees tails heavier than those of the MN distribution (see, e.g., \citealp{Barn:Kent.Sore:Inte:Norm:1982}, \citealp[][Section~2.6]{Fang:Kotz:Ng:Symm:2013}, \citealp{Yama:Robu:2004} and \citealp{McLa:Peel:fini:2000}, Section~7.4). 
The tail weight of $f_{\text{MNSM}}$ is governed by $\btheta$. 

Examples of distributions belonging to the MNSM family are: the multivariate contaminated normal (MCN), the multivariate $t$ (M$t$), the multivariate symmetric generalized hyperbolic (MSGH), the multivariate symmetric hyperbolic (MSH), the multivariate symmetric variance-gamma (MSVG), and the multivariate symmetric normal inverse Gaussian (SNIG); for details about these special cases, as well as about the properties of the MNSM family, see \citet[][Section~3.2.1]{McNe:Frey:Embr:Quan:2005}. 

In Theorem~\ref{theo:MN scale mixture} we show that $\bX\sim \mathcal{TIN}_d\left(\bmu,\bSigma,\theta\right)$ can be represented as a MNSM by considering a convenient (continuous) uniform as mixing distribution.
\begin{theorem}[Scale mixture representation]\label{theo:MN scale mixture}
The pdf in \eqref{eq:MTIN pdf} can be obtained as a special case of the pdf in \eqref{eq:MN mixture model} by considering a uniform on $\left(1-\theta,1\right)$ as mixing pdf.
In formula,
\begin{equation}
f_{\text{TIN}}\left(\bx;\bmu,\bSigma,\theta\right)
= \int_{1-\theta}^1 f_{\text{N}}\left(\bx;\bmu,\bSigma/w\right) h_{\text{U}}\left(w;1-\theta,1\right) dw,
\label{eq:den2}
\end{equation}
where $h_{\text{U}}\left(w;a,b\right)$ denotes the pdf of a uniform on $\left(a,b\right)$.
\end{theorem}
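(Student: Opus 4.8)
The plan is to substitute the explicit densities into the right-hand side of \eqref{eq:den2} and show that the resulting one-dimensional integral collapses to the difference of upper incomplete gamma functions appearing in \eqref{eq:MTIN pdf}. The crucial first observation concerns the Gaussian factor evaluated at the scaled covariance $\bSigma/w$: since $\left|\bSigma/w\right| = w^{-d}\left|\bSigma\right|$ and $\left(\bSigma/w\right)^{-1} = w\,\bSigma^{-1}$, the quadratic form in the exponent is $w\,\delta\left(\bx;\bmu,\bSigma\right)$, giving
\begin{equation*}
f_{\text{N}}\left(\bx;\bmu,\bSigma/w\right) = \left(2\pi\right)^{-\frac{d}{2}}\, w^{\frac{d}{2}}\, \left|\bSigma\right|^{-\frac{1}{2}}\, \exp\!\left(-\frac{w\,\delta\left(\bx;\bmu,\bSigma\right)}{2}\right).
\end{equation*}
Because the mixing density is constant, $h_{\text{U}}\left(w;1-\theta,1\right) = 1/\theta$ on $\left(1-\theta,1\right)$, the integral in \eqref{eq:den2} reduces, after pulling out the factors free of $w$, to $\tfrac{1}{\theta}\int_{1-\theta}^{1} w^{d/2}\exp\!\left(-w\,\delta/2\right)dw$, writing $\delta = \delta\left(\bx;\bmu,\bSigma\right)$ for brevity.

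Next I would evaluate this remaining integral through the substitution $u = w\,\delta/2$, which sends the endpoints $w=1-\theta$ and $w=1$ to $u=\left(1-\theta\right)\delta/2$ and $u=\delta/2$ and turns the integrand into $\left(2/\delta\right)^{d/2+1} u^{d/2} e^{-u}$. Recognising the upper incomplete gamma $\Gamma\left(s,y\right)=\int_{y}^{\infty} t^{s-1}e^{-t}\,dt$, the definite integral between the two finite limits equals $\Gamma\!\left(\frac{d}{2}+1,(1-\theta)\frac{\delta}{2}\right)-\Gamma\!\left(\frac{d}{2}+1,\frac{\delta}{2}\right)$, since subtracting the two infinite tails above $u=\delta/2$ cancels everything beyond that point and leaves precisely the bracketed quantity in \eqref{eq:MTIN pdf}.

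Finally I would assemble the constants. The prefactor $\left(2\pi\right)^{-d/2}$ meets the $\left(2/\delta\right)^{d/2+1}$ generated by the substitution, and the powers of two combine as $2^{-d/2}\cdot 2^{d/2+1}=2$, leaving $2\,\pi^{-d/2}\left|\bSigma\right|^{-1/2}/\bigl(\theta\,\delta^{d/2+1}\bigr)$ — exactly the normalising constant of \eqref{eq:MTIN pdf}. No step presents a genuine obstacle; this is a direct computation, and the only points demanding care are the exponent bookkeeping for the constants and the orientation of the incomplete-gamma limits, so that the two tails cancel with the correct sign (yielding a difference, not a sum) and reproduce the contribution of the nonnegative integrand over $\left((1-\theta)\delta/2,\,\delta/2\right)$.
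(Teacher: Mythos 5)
Your proof is correct and follows essentially the same route as the paper: both substitute the explicit Gaussian and uniform densities, pull out the factors free of $w$, and evaluate $\int_{1-\theta}^{1} w^{d/2}\exp\left(-w\,\delta/2\right)dw$ via the change of variables $u=w\delta/2$ to obtain the difference of upper incomplete gamma functions, with the constant bookkeeping matching \eqref{eq:MTIN pdf}. Your write-up merely makes explicit the substitution and cancellation of the two gamma tails that the paper leaves as ``straightforward.''
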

\begin{proof}
The pdf in \eqref{eq:den2} can be written as
\begin{equation}
f_{\text{TIN}}\left(\bx;\bmu,\bSigma,\theta\right) = \frac{1}{\theta \left(2 \pi\right)^{\frac{d}{2}}} \left|\bSigma \right|^{-\frac{1}{2}} \int_{1-\theta}^1 w^{\frac{d}{2}} \exp\left[-\frac{w}{2}{\delta\left(\bx;\bmu,\bSigma\right)}\right] dw.
\label{eq:withintegral}
\end{equation}
By noting that
\begin{equation}
\int_{1-\theta}^1 w^{\frac{d}{2}} \exp\left[-\frac{w}{2}{\delta\left(\bx;\bmu,\bSigma\right)}\right]dw = \left[\frac{2}{\delta\left(\bx;\bmu,\bSigma\right)}\right]^{\left(\frac{d}{2}+1\right)} \left[ \Gamma\left(\frac{d}{2}+1,(1-\theta) \frac{\delta\left(\bx;\bmu,\bSigma\right)}{2} \right) - \Gamma\left(\frac{d}{2}+1,  \frac{\delta\left(\bx;\bmu,\bSigma\right)}{2} \right) \right],
\label{eq:uppergamma}
\end{equation}
the proof of the theorem is straightforward. 
\end{proof}
A more direct interpretation of Theorem~\ref{theo:MN scale mixture} can be given by the hierarchical representation of $\bX\sim \mathcal{TIN}_d\left(\bmu,\bSigma,\theta\right)$ as
\begin{equation}
\begin{array}{rcl}
W          & \sim & \mathcal{U}\left(1-\theta,1\right) \\
\bX | W=w  & \sim & \mathcal{N}_d\left(\bmu,\bSigma/w\right), 
\end{array}
\label{eq:hierarchy}
\end{equation}
where $\mathcal{U}\left(1-\theta,1\right)$ denotes a uniform distribution on $\left(1-\theta,1\right)$.
This alternative way to see the MTIN distribution is useful for random generation and for the implementation of EM-based algorithms (as we will better appreciate in Section~\ref{sec:ML}).


\subsubsection{Elliptical representation}
\label{subsubsec:Elliptical distributions}

As well-documented in the literature, the class of elliptical distributions has several good properties (\citealp{Camb:Onth:1981}, \citealp[][Chapter~2.5]{Fang:Kotz:Ng:Symm:2013} and \citealp[][p.~357]{Rach:Hoec:Fabo:Foca:Prob:2010}) and the class of MNSMs is contained therein (\citealp[][p.~247]{Bing:Kies:Rudi:Semi:2002}, \citealp[][p.~61]{McNe:Frey:Embr:Quan:2005} and \citealp[][p.~48]{Fang:Kotz:Ng:Symm:2013}).  
Therefore, the MTIN distribution is also elliptical and Theorem~\ref{theo: Elliptical representation} presents its elliptical representation.  

\begin{theorem}[Elliptical representation]\label{theo: Elliptical representation}
If $\bX\sim \mathcal{TIN}_d\left(\bmu,\bSigma,\theta\right)$, then $\bX$ is elliptically distributed with the following stochastic representation
\begin{equation}
\bX = \bmu + \boldsymbol{\Lambda} \bU \displaystyle \frac{T}{\sqrt{W}},
\label{eq:ell}
\end{equation}
where $\boldsymbol{\Lambda}$ is a $d\times d$ matrix such that $\boldsymbol{\Lambda}\boldsymbol{\Lambda}' = \bSigma$, $\bU$ is a $d$-variate random vector uniformly distributed on the unit hypersphere with $d-1$ topological dimensions $\left\{ \bu \in \real^d : ||\bu|| = 1 \right\}$, $T$ is a random variable having a $\chi$ distribution with $d$ degrees of freedom, and $W \sim \mathcal{U}\left(1-\theta,1\right)$.
Note that $\bU$, $T$, and $W$ are independent. 
\end{theorem}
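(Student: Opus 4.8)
The plan is to build directly on the scale-mixture representation already established in Theorem~\ref{theo:MN scale mixture}, rather than manipulating the closed-form pdf \eqref{eq:MTIN pdf}. By the hierarchy \eqref{eq:hierarchy} we may generate $\bX$ as $W \sim \mathcal{U}(1-\theta,1)$ together with $\bX \mid W = w \sim \mathcal{N}_d(\bmu,\bSigma/w)$. The strategy is to substitute the standard elliptical (polar) representation into the conditional Gaussian layer and then absorb the mixing variable $W$ into the radial part, leaving the angular part untouched.

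First I would recall the well-known stochastic representation of a nondegenerate multivariate normal as an elliptical law: if $\bY \sim \mathcal{N}_d(\bmu,\bOmega)$ with $\bOmega = \boldsymbol{A}\boldsymbol{A}'$, then $\bY$ admits the representation $\bmu + \boldsymbol{A}\bU T$, where $\bU$ is uniform on the unit hypersphere $\{\bu \in \real^d : \|\bu\| = 1\}$, $T$ follows a $\chi$ distribution with $d$ degrees of freedom (so $T^2$ is $\chi^2_d$), and $\bU$ is independent of $T$ (see, e.g., \citealp{Fang:Kotz:Ng:Symm:2013}). Citing this lets the Gaussian case stand without re-derivation.

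Next I would apply this to the conditional layer. Since $\bSigma/w = (\boldsymbol{\Lambda}/\sqrt{w})(\boldsymbol{\Lambda}/\sqrt{w})'$ whenever $\boldsymbol{\Lambda}\boldsymbol{\Lambda}' = \bSigma$, taking $\boldsymbol{A} = \boldsymbol{\Lambda}/\sqrt{w}$ gives
\begin{equation}
\bX \mid W = w \; \stackrel{d}{=}\; \bmu + \frac{\boldsymbol{\Lambda}}{\sqrt{w}}\,\bU\, T = \bmu + \boldsymbol{\Lambda}\,\bU\,\frac{T}{\sqrt{w}},
\end{equation}
with $\bU$ uniform on the sphere, $T$ a $\chi_d$ variate, and $\bU$ independent of $T$, the pair being common across all conditioning values $w$. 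Unconditioning on $W$ — which is independent of $(\bU,T)$ by the construction in \eqref{eq:hierarchy} — then yields $\bX \stackrel{d}{=} \bmu + \boldsymbol{\Lambda}\,\bU\,T/\sqrt{W}$ with $\bU$, $T$, $W$ mutually independent, which is exactly \eqref{eq:ell}. To confirm this is genuinely elliptical, I would finally observe that it has the canonical form $\bmu + R\,\boldsymbol{\Lambda}\bU$ with nonnegative generating (radial) variate $R = T/\sqrt{W}$ independent of the spherically uniform direction $\bU$: the scale mixing has merged $T$ and $W$ into a single radial factor while preserving the uniform direction, which is precisely the structure characterizing an elliptical distribution.

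I do not anticipate a serious obstacle; the only point requiring care is the bookkeeping of independence — namely, that the angular and radial components $\bU$ and $T$ produced by the Gaussian layer can be taken as the \emph{same} variates for every value of $w$, and hence jointly independent of $W$, so that mixing acts solely on the radius $T/\sqrt{W}$ and never on $\bU$. Should a fully self-contained argument be preferred, one could instead derive the density of $R = T/\sqrt{W}$ and verify that the induced elliptical density reproduces \eqref{eq:MTIN pdf}, but the conditioning argument above is shorter and more transparent.
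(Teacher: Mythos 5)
Your proof is correct, but it takes a genuinely different route from the paper. The paper works at the level of densities: it computes the pdf of $V=1/\sqrt{W}$, obtains the density of the radial variable $R=T/\sqrt{W}$ via the product formula for independent random variables as in \eqref{eq:denR}, and then substitutes into the elliptical-density relationship \eqref{eq:dim} to verify that the result reproduces the closed-form pdf \eqref{eq:MTIN pdf}. You instead avoid all density computations by conditioning: you invoke the standard polar decomposition of the Gaussian (writing $\bZ\sim\mathcal{N}_d(\bzero,\bI)$ as $\bU T$ with $\bU$ uniform on the sphere, $T\sim\chi_d$, independent), apply it to the conditional layer $\bX\mid W=w\sim\mathcal{N}_d(\bmu,\bSigma/w)$ of the hierarchy \eqref{eq:hierarchy}, and uncondition. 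The one delicate point --- that the angular and radial variates can be realized identically across all values of $w$ and hence taken jointly independent of $W$ --- you handle correctly; the cleanest way to see it is to construct $\bX=\bmu+\boldsymbol{\Lambda}\bZ/\sqrt{W}$ with $\bZ$ independent of $W$ and then decompose $\bZ=\bU T$. Your argument is shorter and makes the structural point (mixing acts only on the radius, never on the direction) more transparent, but it leans on Theorem~\ref{theo:MN scale mixture} as a prerequisite; the paper's density verification is heavier but self-contained and directly certifies agreement with \eqref{eq:MTIN pdf}, which is why formula \eqref{eq:denR} is also reusable elsewhere. Your closing remark correctly identifies the paper's route as the alternative you chose not to take.
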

\begin{proof}
To find the distribution of $\bX$ as defined by \eqref{eq:ell} we need to derive the pdf $f_R\left(\cdot;\theta\right)$ of the random variable
$$
R = \displaystyle \frac{T}{\sqrt{W}},
$$
and then use the relationship \citep[see][]{Bagn:Punz:Zoia:Them:2016}
\begin{equation}
  f_{\bX}\left(\bx;\bmu,\bSigma, \theta\right) = \left|\bSigma\right|^{-\frac{1}{2}} \frac{\Gamma\left(\frac{d}{2}\right)}{2 \pi^{d/2}} \sqrt{\delta\left(\bx;\bmu,\bSigma\right)}^{-(d-1)} f_{R} \left(\sqrt{\delta\left(\bx;\bmu,\bSigma\right)};\theta\right).
  \label{eq:dim}
\end{equation}
The random variable $V=1/\sqrt{W}$ has density
\begin{equation}
  f_{V}\left(v;\theta\right) = \frac{2}{\theta v^3} \indic_{\left(1,\frac{1}{1-\theta}\right)} \left(v\right),
  \label{eq:denW}
\end{equation}
where $\indic_A\left(\cdot\right)$ is the indicator function on the set $A$.
By definition, the density of the product of the independent random variables $T$ and $V$ is given by
\begin{eqnarray}
  f_{R}\left(r;\theta\right) &=& \int_{\real} f_{V}\left(v;\theta\right) f_{T}\left(\frac{r}{v}\right) \frac{1}{|v|} dv \nonumber \\
  &=& \frac{2^{1-\frac{d}{2}} r^{d-1}}{\theta \Gamma\left( \frac{d}{2}\right)} \int_{1-\theta}^1 v^{\frac{d}{2}} \exp\left(-\frac{v}{2}{r^2}\right)dv.
  \label{eq:denR}
\end{eqnarray}
By using \eqref{eq:denR} in \eqref{eq:dim}, we obtain the pdf of the MTIN distribution in \eqref{eq:MTIN pdf}.
\end{proof}

%

\subsection{Moments}
\label{sec:moments}

In this section we provide some of the moments of the MTIN distribution.
In addition to the mean, we give the first two moments of higher even order, i.e.~the covariance matrix and kurtosis, which are helpful in assessing the influence of mild outliers on the distribution \citep[][p.~307]{Rach:Hoec:Fabo:Foca:Prob:2010}; indeed, these are the moments of practical interest for people using multivariate heavy-tailed elliptical distributions.
As measure of kurtosis, as usual, we consider
\begin{equation*}
\mbox{Kurt}\left(\bX\right) = E\left\{\left[\left(\bX - \bmu \right)' \bSigma^{-1}\left(\bX - \bmu \right)\right]^2\right\},
\label{eq:kurt}
\end{equation*}
where $\bX$ is a random vector having mean $\bmu$ and covariance matrix $\bSigma$ \citep{Mard:Meas:1970}.
\begin{theorem}[MTIN: mean vector, covariance matrix and kurtosis]\label{theo: MTIN moments}
If $\bX\sim \mathcal{TIN}_d\left(\bmu,\bSigma,\theta\right)$, then 
\begin{eqnarray}
\mbox{E}\left(\bX\right) &=& \bmu,   \label{eq:mean}  \\  
\text{Var}\left(\bX\right) &=& v\left(\theta\right) \bSigma, \label{eq:variance} \\
\mbox{Kurt}\left(\bX\right) &=& k\left(\theta\right) d\left(d+2\right), \label{eq:kurtosis}
\end{eqnarray}
where 
\begin{eqnarray}
v\left(\theta\right) &=& \displaystyle - \frac{\log(1-\theta)}{\theta}, \label{eq:variance factor} \\
k\left(\theta\right) &=& \displaystyle \frac{\theta^2}{(1-\theta)\log^2(1-\theta)}. \label{eq:kurtosis factor} 
\end{eqnarray}
\end{theorem}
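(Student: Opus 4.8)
The plan is to exploit the hierarchical representation \eqref{eq:hierarchy} established in Theorem~\ref{theo:MN scale mixture}, which reduces every moment of $\bX$ to a conditional Gaussian moment (given $W$) averaged over $W\sim\mathcal{U}(1-\theta,1)$. All three results then follow from the law of iterated expectations together with two elementary integrals, namely $E(1/W)$ and $E(1/W^2)$, and the standard chi-squared moments.

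For the mean I would apply the tower property: conditionally on $W=w$ we have $E(\bX\mid W=w)=\bmu$, which is free of $w$, so $E(\bX)=E\{E(\bX\mid W)\}=\bmu$, giving \eqref{eq:mean}. For the covariance matrix I would use the variance decomposition
\begin{equation*}
\text{Var}(\bX)=E\{\text{Var}(\bX\mid W)\}+\text{Var}\{E(\bX\mid W)\}=E(\bSigma/W)+\bzero=E(1/W)\,\bSigma ,
\end{equation*}
and then evaluate $E(1/W)=\theta^{-1}\int_{1-\theta}^{1}w^{-1}\,dw=-\theta^{-1}\log(1-\theta)=v(\theta)$, which yields \eqref{eq:variance}.

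For the kurtosis the convenient device is to standardize within each conditional layer. Writing $\delta=\delta(\bX;\bmu,\bSigma)$ and noting that $\bX-\bmu\mid W=w\sim\mathcal{N}_d(\bzero,\bSigma/w)$, the scaled quadratic form $W\delta$ is, conditionally on $W$, distributed as $\chi^2_d$ irrespective of the value of $W$; hence $W\delta$ is independent of $W$ and marginally $W\delta\sim\chi^2_d$. Setting $Q:=W\delta$, so that $\delta=Q/W$ with $Q\perp W$, I would compute
\begin{equation*}
E(\delta^2)=E(Q^2)\,E(1/W^2)=d(d+2)\cdot\frac{1}{1-\theta},
\end{equation*}
using $E\{(\chi^2_d)^2\}=\text{Var}(\chi^2_d)+d^2=d(d+2)$ together with $E(1/W^2)=\theta^{-1}\int_{1-\theta}^{1}w^{-2}\,dw=(1-\theta)^{-1}$.

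The only genuine subtlety — and the step I expect to be easiest to overlook — is that Mardia's kurtosis is defined through the \emph{true} covariance matrix of $\bX$, which by \eqref{eq:variance} is $v(\theta)\bSigma$ rather than the scale matrix $\bSigma$. Consequently
\begin{equation*}
\text{Kurt}(\bX)=E\Big\{\big[(\bX-\bmu)'\,[v(\theta)\bSigma]^{-1}\,(\bX-\bmu)\big]^2\Big\}=\frac{1}{v(\theta)^2}\,E(\delta^2)=\frac{\theta^2}{\log^2(1-\theta)}\cdot\frac{d(d+2)}{1-\theta},
\end{equation*}
which is exactly $k(\theta)\,d(d+2)$. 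Working naively with $\bSigma$ would give the wrong factor $(1-\theta)^{-1}$ in place of $k(\theta)$; the rescaling by $v(\theta)^{-2}$ is what reconciles the calculation with \eqref{eq:kurtosis factor}. Everything else is a routine pair of integrals and the standard moments of the chi-squared distribution.
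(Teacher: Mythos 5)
Your proof is correct, and it rests on the same essential reduction as the paper's: everything comes down to $E(1/W)=-\log(1-\theta)/\theta$, $E(1/W^2)=1/(1-\theta)$, $E\bigl[(\chi^2_d)^2\bigr]=d(d+2)$, and the normalization of Mardia's kurtosis by the \emph{true} covariance $v(\theta)\bSigma$ rather than the scale matrix $\bSigma$ --- a subtlety you correctly flag and that the paper handles by writing $\mathrm{Kurt}(\bX)=E\{[(\bX-\bmu)'[\mathrm{Var}(\bX)]^{-1}(\bX-\bmu)]^2\}$ in its Appendix~B lemma. The only real difference is organizational: the paper first proves a general moment lemma for the whole MNSM family via the stochastic representation $\bX=\bmu+\sqrt{V}\,T\boldsymbol{\Lambda}\bU$ (citing McNeil et al.\ for the mean and covariance and computing $E(T^4)\,E(V^2)/[E(V)]^2$ for the kurtosis), and then specializes to $V=1/W$ with $W\sim\mathcal{U}(1-\theta,1)$; you instead condition on $W$ directly, using the tower property, the law of total variance, and the observation that $W\delta\sim\chi^2_d$ independently of $W$. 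Your route is self-contained and slightly more elementary; the paper's buys a reusable formula for any scale mixture (which it then also applies to the generalized hyperbolic subfamily in Appendix~E). Both are complete and arrive at the same two integrals.
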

\begin{proof}
See \appendixname~\ref{app:Teorema sui momenti della MTIN}.
\end{proof}

Based on \eqref{eq:variance}, the covariance matrix $\text{Var}\left(\bX\right)$ of the MTIN distribution is proportional to the covariance matrix $\bSigma$ of the nested MN distribution, with proportionality factor $v\left(\theta\right)$ depending on $\theta$.
In particular, the function $v\left(\theta\right): \left(0,1\right) \rightarrow \left(1,\infty\right)$, given in \eqref{eq:variance factor} and graphically represented via a solid line in \figurename~\ref{fig:Multiplicative Factors}, is increasing in $\theta$.
\begin{figure}[!ht]
\centering
\resizebox{0.65\textwidth}{!}{
\includegraphics{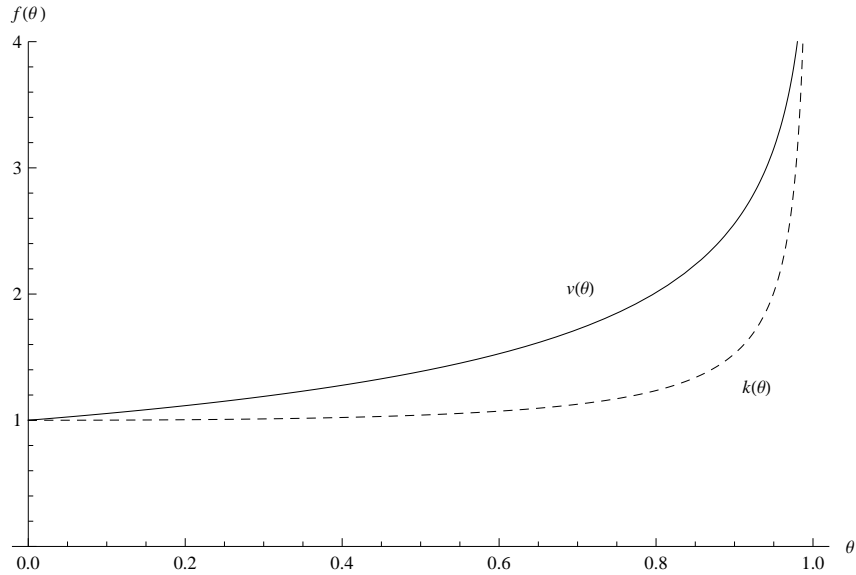} 
}
\caption{
Multiplicative factors $v\left(\theta\right)$ (solid line) and $k\left(\theta\right)$ (dashed line).
}
\label{fig:Multiplicative Factors}
\end{figure}
Because of the values $v\left(\theta\right)$ can assume, $\text{Var}\left(\bX\right)$ is an inflated version of $\bSigma$, without limits about the degree of inflation that tends to $\infty$ when $\theta \rightarrow 1$.
However, if $\theta \rightarrow 0$, then $v\left(\theta\right) \rightarrow 1$; so, under this case, $\text{Var}\left(\bX\right)$ tends to $\bSigma$.

According to \eqref{eq:kurtosis factor}, $\text{Kurt}\left(\bX\right)$ is proportional to the kurtosis $d\left(d+2\right)$ of the (nested) MN distribution, with proportionality factor $k\left(\theta\right)$ depending on $\theta$.
In particular, the function $k\left(\theta\right): \left(0,1\right) \rightarrow \left(1,\infty\right)$, given in \eqref{eq:kurtosis factor} and graphically represented by a dashed line in \figurename~\ref{fig:Multiplicative Factors}, is increasing in $\theta$.
Therefore, $\text{Kurt}\left(\bX\right)$ is greater than $d\left(d+2\right)$ meaning that the MTIN distribution allows for leptokurtosis.
An upper bound for $\text{Kurt}\left(\bX\right)$ does not exist since $k\left(\theta\right) \rightarrow \infty$ when $\theta \rightarrow 1$.
However, if $\theta \rightarrow 0$, then $k\left(\theta\right) \rightarrow 1$; so, under this case, $\text{Kurt}\left(\bX\right)$ tends to $d\left(d+2\right)$.
There is another aspect to be noted from \figurename~\ref{fig:Multiplicative Factors}.
While $v\left(\theta\right)$ is a smoothly increasing function of $\theta$, $k\left(\theta\right)$ is approximately flat on the left, increasing suddenly only when $\theta$ is close to 1.
From a practical point of view, this means that we need an high $\theta$-value to have a relevant excess kurtosis.

\section{Maximum likelihood estimation}
\label{sec:ML}

Several estimators of the parameters $\bPsi=\left\{\bmu,\bSigma,\theta\right\}$ of the MTIN distribution may be considered. 
Among them, the maximum likelihood (ML) estimator is the most attractive because of its asymptotic properties \citep[see, e.g.,][p.~206]{Pfan:Hamb:Para:1994}.
Naturally, other estimation methods can be used; \appendixname~\ref{sec:Method of moments} gives details about the method of moments (MM).
Below we discuss two possible ways to obtain the ML estimates of $\bPsi$; for their comparison, see Section~\ref{subsec:Quality of estimates and computational times}.

\subsection{Direct approach}
\label{subsec:Direct approach}

Given a random sample $\bx_1,\ldots,\bx_n$ (observed data) from $\bX\sim \mathcal{TIN}_d\left(\bmu,\bSigma,\theta\right)$, the ML estimation method is based on the maximization of the (observed-data) log-likelihood function 
\begin{eqnarray}
l\left(\bPsi\right)
&=& n\log\left(2\right)-\frac{nd}{2}\log\left(\pi\right)-\frac{n}{2}\log \left|\bSigma\right| - n\log\left(\theta\right) - \left(\frac{d}{2}+1\right)\sum_{i=1}^n \log\left[\delta\left(\bx_i;\bmu,\bSigma\right)\right] \nonumber\\
&& + \sum_{i=1}^n \log \left[ \Gamma\left(\frac{d}{2}+1,(1-\theta) \frac{\delta\left(\bx_i;\bmu,\bSigma\right)}{2} \right) - \Gamma\left(\frac{d}{2}+1,\frac{\delta\left(\bx_i;\bmu,\bSigma\right)}{2} \right) \right], 
\label{eq:MTIN observed-data log-likelihood}
\end{eqnarray} 
which does not admit a closed form solution.
Furthermore, maximizing \eqref{eq:MTIN observed-data log-likelihood} with respect to $\bPsi$ is a constrained problem due to $\bSigma$ and $\theta$.
To make the maximization problem unconstrained, we apply the following reparameterization for $\bSigma$ and $\theta$.
According to the Cholesky decomposition we write
$$
\bSigma = \bOmega' \bOmega ,
$$
where $\bOmega$ is an upper triangular matrix with real-valued entries.
As concerns $\theta$ we write
$$
\theta=\frac{\exp(\gamma)}{1+\exp(\gamma)},
$$
where $\gamma \in \real$.
According to the above parametrization, the log-likelihood function in \eqref{eq:MTIN observed-data log-likelihood} can be re-written as
\begin{eqnarray}
l\left(\bPsi^*\right)&=& n\log\left(2\right)-\frac{nd}{2}\log\left(\pi\right)-\frac{n}{2} \log \left| \bOmega' \bOmega \right| -n \left\{\gamma - \log\left[1+\exp(\gamma)\right]\right\} \nonumber \\
&& - \left(\frac{d}{2}+1\right)\sum_{i=1}^n \log\left[\delta\left(\bx_i;\bmu,\bOmega' \bOmega \right)\right] \nonumber\\
&& + \sum_{i=1}^n \log\left\{ \Gamma\left[\frac{d}{2}+1,\frac{\delta\left(\bx_i;\bmu,\bOmega' \bOmega \right)}{2 \left[1+\exp(\gamma)\right]} \right] - \Gamma\left[\frac{d}{2}+1,\frac{\delta\left(\bx_i;\bmu,\bOmega' \bOmega \right)}{2} \right] \right\},
\label{eq:MTIN observed-data log-likelihood rep}
\end{eqnarray}
where $\bPsi^*=\left\{\bmu,\bOmega, \gamma \right\}$.
Details about the first order partial derivatives of $l\left(\bPsi^*\right)$ in \eqref{eq:MTIN observed-data log-likelihood rep}, with respect to $\bmu$, $\bOmega $ and $\gamma$, are given in \appendixname~\ref{app:Partial derivatives}.
Operationally, we perform unconstrained maximization of $l\left(\bPsi^*\right)$ with respect to $\bPsi^*$ via the general-purpose optimizer \texttt{optim()} for \textsf{R} \citep{R:2018}, included in the \textbf{stats} package.
Different algorithms, such as the Nelder-Mead or BFGS, can be used for maximization.
They can be passed to \texttt{optim()} via the argument \texttt{method}.
Once the maximum is determined, the estimates for $\bSigma$ and $\theta$ are simply obtained by back-transformations.
For a comparison, in terms of parameter recovery and computational times, between Nelder-Mead and BFGS algorithms applied to the direct ML estimation of $\bPsi$, see Section~\ref{subsec:Quality of estimates and computational times}. 

\subsection{ECME algorithm}
\label{subsec:Alternative ECME algorithm}

To circumvent the complexity of the direct ML approach discussed in Section~\ref{subsec:Direct approach}, we could consider the application of the expectation-maximization (EM) algorithm \citep{Demp:Lair:Rubi:Maxi:1977}, which is the classical approach to find ML estimates for distributions belonging to the MNSM family.
However, for the MTIN, the M-step is not well-defined (for the motivations we give in \appendixname~\ref{app:application of the EM algorithm}), and the EM algorithm fails to converge.
This is not a serious problem because ML estimates of $\bPsi$ can be still found, much more efficiently, by using the expectation-conditional maximization either (ECME) algorithm \citep{Liu:Rubi:TheE:1994}.     
The higher efficiency of the ECME algorithm, with respect to the EM algorithm, has been also discussed with reference to another member of the MNSM family, the M$t$ distribution (see \citealp{Liu:Rubi:TheE:1994,Liu:Rubi:MLes:1995} and \citealp[][Section~5.8]{McLa:Kris:TheE:2007}).

The ECME algorithm is an extension of the expectation-conditional maximum (ECM) algorithm \citep{Meng:Rubin:Maxi:1993} which, in turn, is an extension of the EM algorithm \citep[see][Chapter~5, for details]{McLa:Kris:TheE:2007}. 
The ECM algorithm replaces the M-step of the EM algorithm by a number of computationally simpler conditional maximization (CM) steps.
The ECME algorithm generalizes the ECM algorithm by conditionally maximizing on some or all of the CM-steps the incomplete-data log-likelihood.
As for the EM and ECM algorithms, the ECME algorithm monotonically increases the likelihood and reliably converges to a stationary point of the likelihood function (see \citealp{Meng:VanD:TheE:1997} and \citealp[][Chapter~5]{McLa:Kris:TheE:2007}).
Moreover, \citet{Liu:Rubi:TheE:1994} found the ECME algorithm to be nearly always faster than both the EM and ECM algorithms in
terms of number of iterations, and that it can be faster in total computer time by orders of magnitude \citep[see also][Chapter~5]{McLa:Kris:TheE:2007}.

For the application of any variant of the EM algorithm, in the light of the hierarchical representation of the MTIN distribution given in \eqref{eq:hierarchy}, it is convenient to view the observed data as incomplete.
The complete data are $\left(\bx_1',w_1\right)',\ldots,\left(\bx_n',w_n\right)'$, where the missing variables $w_1,\ldots,w_n$ are defined so that 
\begin{equation}
\bX_i | W_i=w_i \sim \mathcal{N}_d\left(\bmu,\bSigma/w_i\right),
\label{eq:complete data X}
\end{equation}
independently for $i = 1, \ldots , n$, and
\begin{equation}
W_1,\ldots,W_n \stackrel{\text{i.i.d.}}{\sim} \mathcal{U}\left(1-\theta,1\right).
\label{eq:complete data W}
\end{equation}
Because of this conditional structure, the complete-data likelihood function $L_c\left(\bPsi\right)$ can be factored into the product of the conditional densities of $\bX_i$ given the $w_i$ and the marginal densities of $W_i$, i.e.
\begin{eqnarray}
L_c\left(\bPsi\right) &= & \prod_{i=1}^n f_{\text{N}}\left(\bx_i;\bmu,\bSigma/w_i\right) h_{\text{U}}\left(w_i;1-\theta,1\right) \nonumber \\
&=& \prod_{i=1}^n (2\pi)^{-\frac{d}{2}} w_i^{\frac{d}{2}}\left|\bSigma\right|^{-\frac{1}{2}} \exp\left[-\frac{w_i}{2}\delta\left(\bx_i;\bmu,\bSigma\right)\right] \frac{1}{\theta} \indic_{\left(1-\theta,1\right)}\left(w_i\right).
\label{eq:MTIN complete-data likelihood}
\end{eqnarray}
Accordingly, the complete-data log-likelihood function can be written as
\begin{equation}
l_c\left(\bPsi\right) = \log\left[L_c\left(\bPsi\right)\right] = l_{1c}\left(\bmu,\bSigma\right) + l_{2c}\left(\theta\right) ,
\label{eq:MTIN complete-data log-likelihood}
\end{equation}
where
\begin{equation}
l_{1c}\left(\bmu,\bSigma\right) = - \displaystyle \frac{nd}{2} \log(2\pi) - \frac{n}{2} \log\left|\bSigma\right| + \frac{d}{2} \sum_{i=1}^n \log\left(w_i\right) - \frac{1}{2} \sum_{i=1}^n w_i \delta\left(\bx_i;\bmu,\bSigma\right),
\label{eq:MTIN complete-data log-likelihood mu and Sigma}
\end{equation}
and
\begin{equation} 
l_{2c}\left(\theta\right) = - n \log\left(\theta\right) + \sum_{i=1}^n \log\left[\indic_{\left(1-\theta,1\right)}\left(w_i\right)\right].
\label{eq:MTIN complete-data log-likelihood theta}
\end{equation}

The ECME algorithm iterates between three steps, one E-step and two CM-steps, until convergence.
The two CM-steps arise from the partition of $\bPsi$ as $\left\{\bPsi_1,\Psi_2\right\}$, where $\bPsi_1=\left\{\bmu,\bSigma\right\}$ and $\Psi_2=\theta$.
These steps, for the generic $\left(r+1\right)$th iteration of the algorithm, $r=1,2,\ldots$, are detailed below. 


\subsubsection{E-step}
\label{subsec:E-step}

The E-step requires the calculation of 
\begin{equation}
Q\left(\bPsi|\bPsi^{(r)}\right) = Q_1\left(\bmu,\bSigma|\bPsi^{(r)}\right) + Q_2\left(\theta|\bPsi^{(r)}\right),
\label{eq:Q}
\end{equation}
the conditional expectation of $l_c\left(\bPsi\right)$ given the observed data $\bx_1,\ldots,\bx_n$, using the current fit $\bPsi^{(r)}$ for $\bPsi$.
In \eqref{eq:Q} the two terms on the right-hand side are ordered as the two terms on the right-hand side of \eqref{eq:MTIN complete-data log-likelihood}.
As well-explained in \citet[][p.~82]{McNe:Frey:Embr:Quan:2005}, in order to compute $Q\left(\bPsi|\bPsi^{(r)}\right)$ we need to replace any function $m\left(W_i\right)$ of the latent mixing variables which arise in \eqref{eq:MTIN complete-data log-likelihood mu and Sigma} and \eqref{eq:MTIN complete-data log-likelihood theta} by the quantities $E_{\bPsi^{(r)}}\left[m\left(W_i\right)|\bX_i=\bx_i\right]$, where the expectation, as it can be noted by the subscript, is taken using the current fit $\bPsi^{(r)}$ for $\bPsi$, $i=1,\ldots,n$.
To calculate these expectations we can observe that the conditional pdf of $W_i|\bX_i=\bx_i$ satisfies $f\left(w_i|\bx_i;\bPsi\right) \propto  f\left(w_i,\bx_i;\bPsi\right)$, up to some constant of proportionality.
In detail,
\begin{eqnarray}
f\left(w_i|\bx_i;\bPsi\right) & \propto & f\left(w_i,\bx_i;\bPsi\right) \nonumber \\
&\propto& w_i^{\frac{d}{2}+1-1} \exp\left[-\frac{\delta\left(\bx_i;\bmu,\bSigma\right)}{2}w_i\right] \indic_{\left(1-\theta,1\right)}\left(w_i\right)\nonumber \\
&\propto& \frac{1}{\eta\left(\bx_i;\bPsi\right)} f_{\text{G}}\left(w_i;\frac{d}{2}+1,\frac{\delta\left(\bx_i;\bmu,\bSigma\right)}{2}\right)\indic_{\left(1-\theta,1\right)}\left(w_i\right), 
\label{eq:posterior}
\end{eqnarray}  
where 
$$
f_{\text{G}}\left(w;\alpha,\beta\right)=\frac{\beta^{\alpha}}{\Gamma\left(\alpha\right)}w^{\alpha-1}\exp\left(-\beta w\right)
$$ 
denotes the pdf of a gamma distribution with parameters $\alpha>0$ and $\beta>0$ and 
$$
\eta\left(\bx_i;\bPsi\right) = \frac{1}{\Gamma\left(\frac{d}{2}+1\right)}\left[\Gamma\left(\frac{d}{2}+1,(1-\theta)\frac{\delta\left(\bx_i;\bmu,\bSigma\right)}{2}\right)-
\Gamma\left(\frac{d}{2}+1,\frac{\delta\left(\bx_i;\bmu,\bSigma\right)}{2}\right)\right].
$$
This means that $W_i|\bX_i=\bx_i$ has a doubly-truncated gamma distribution \citep{Coff:Mull:Prop:2000}, on the interval $\left(1-\theta,1\right)$, with parameters $d/2+1$ and $\delta\left(\bx_i;\bmu,\bSigma\right)/2$, whose pdf is given in \eqref{eq:posterior}; in symbols   
\begin{equation}
W_i|\bX_i=\bx_i \sim \mathcal{DTG}_{\left(1-\theta,1\right)}\left(\frac{d}{2}+1,\frac{\delta\left(\bx_i;\bmu,\bSigma\right)}{2}\right).
\label{eq:Truncated Gamma}
\end{equation}

Now, the functions $m\left(W_i\right)$ arising in \eqref{eq:MTIN complete-data log-likelihood mu and Sigma} and \eqref{eq:MTIN complete-data log-likelihood theta} are $m_1\left(w\right)=w$, $m_2\left(w\right)=\log\left(w\right)$ and $m_3\left(w\right)=\log\left[\indic_{\left(1-\theta,1\right)}\left(w\right)\right]$. 
However, there is no need to compute the expectation of $m_2$ since the term $\log\left(W_i\right)$ is not related with the parameters; moreover, there is no need to compute the expectation of $m_3$ because we do not use $Q_2\left(\theta|\bPsi^{(r)}\right)$ to update $\theta$.  
Thanks to \eqref{eq:Truncated Gamma} we have that
\begin{equation}
E_{\bPsi^{(r)}}\left(W_i|\bX_i=\bx_i\right) = w_i^{(r)},
\label{eq:wi expectation}
\end{equation}
where 
\begin{align}
w_i^{(r)} = &\frac{2\:  \left[\Gamma\left(\frac{d}{2}+2,\left(1-\theta^{(r)}\right)\frac{\delta\left(\bx_i;\bmu^{(r)},\bSigma^{(r)}\right)}{2}\right)-
\Gamma\left(\frac{d}{2}+2,\frac{\delta\left(\bx_i;\bmu^{(r)},\bSigma^{(r)}\right)}{2}\right)\right] }{\Gamma\left(\frac{d}{2}+1\right)\eta\left(\bx_i;\bPsi^{(r)}\right)\delta\left(\bx_i;\bmu^{(r)},\bSigma^{(r)}\right)} \nonumber\\
=& \frac{2}{\delta\left(\bx_i;\bmu^{(r)},\bSigma^{(r)}\right)} 
\frac{
\left[\Gamma\left(\frac{d}{2}+2,\left(1-\theta^{(r)}\right)\frac{\delta\left(\bx_i;\bmu^{(r)},\bSigma^{(r)}\right)}{2}\right)-
\Gamma\left(\frac{d}{2}+2,\frac{\delta\left(\bx_i;\bmu^{(r)},\bSigma^{(r)}\right)}{2}\right)\right]
}{
\left[\Gamma\left(\frac{d}{2}+1,\left(1-\theta^{(r)}\right)\frac{\delta\left(\bx_i;\bmu^{(r)},\bSigma^{(r)}\right)}{2}\right)-
\Gamma\left(\frac{d}{2}+1,\frac{\delta\left(\bx_i;\bmu^{(r)},\bSigma^{(r)}\right)}{2}\right)\right]
}.
\label{eq:E step 1}
\end{align}
Then, by substituting $w_i$ with $w_i^{(r)}$ in the last term on the right-hand side of \eqref{eq:MTIN complete-data log-likelihood mu and Sigma}, we obtain 
$$
Q_1\left(\bmu,\bSigma|\bPsi^{(r)}\right) = - \frac{n}{2} \log\left|\bSigma\right| - \frac{1}{2} \sum_{i=1}^n w_i^{(r)} \delta\left(\bx_i;\bmu,\bSigma\right),
$$
where the terms constant with respect to $\bmu$ and $\bSigma$ are dropped.


\subsubsection{CM-step 1}
\label{subsec:CM-step 1}

The first CM-step, at the same iteration, requires the calculation of $\bPsi_1^{\left(r+1\right)}$ as the value of $\bPsi_1$ that maximizes $Q_1\left(\bmu,\bSigma|\bPsi^{(r)}\right)$, with respect to $\bmu$ and $\bSigma$, with $\Psi_2$ fixed at $\Psi_2^{(r)}$.
Theis maximization is easily implemented if we note that $Q_1$ is a weighted log-likelihood, with weights $w_1^{(r)},\ldots,w_n^{(r)}$, of $n$ independent observations $\bx_1,\ldots,\bx_n$ from $\mathcal{N}_d\left(\bmu,\bSigma/w_1^{(r)}\right),\ldots,\mathcal{N}_d\left(\bmu,\bSigma/w_n^{(r)}\right)$, respectively.
So, the updates for $\bmu$ and $\bSigma$ are 
\begin{equation}
	\bmu^{(r+1)} = \displaystyle\frac{1}{\displaystyle\sum_{i=1}^nw_i^{(r)}}\sum_{i=1}^n w_i^{(r)}\bx_i 
	\label{eq:M-step mu} 
\end{equation}
and
\begin{equation}
	\bSigma^{(r+1)} = \frac{1}{n}\sum_{i=1}^n w_i^{(r)}\left(\bx_i-\bmu^{\left(r+1\right)}\right)\left(\bx_i-\bmu^{\left(r+1\right)}\right)'. 
	\label{eq:M-step Sigma}
\end{equation}


\subsubsection{CM-step 2}
\label{subsec:CM-step 2}

In the second CM-step, at the same iteration, $\Psi_2=\theta$ is chosen to maximize the observed-data log-likelihood function $l\left(\bPsi\right)$, as given by \eqref{eq:MTIN observed-data log-likelihood}, with $\bPsi_1$ fixed at $\bPsi_1^{(r+1)}$.
In detail, from \eqref{eq:MTIN observed-data log-likelihood} we have
\begin{align}
l\left(\left\{\bPsi_1^{(r+1)},\theta\right\}\right)=& n\log\left(2\right)-\frac{nd}{2}\log\left(\pi\right)-\frac{n}{2}\log \left|\bSigma^{(r+1)}\right| - n\log\left(\theta\right) - \left(\frac{d}{2}+1\right)\sum_{i=1}^n \log\left[\delta\left(\bx_i;\bmu^{(r+1)},\bSigma^{(r+1)}\right)\right] \nonumber\\
& + \sum_{i=1}^n \log\left[ \Gamma\left(\frac{d}{2}+1,(1-\theta) \frac{\delta\left(\bx_i;\bmu^{(r+1)},\bSigma^{(r+1)}\right)}{2} \right) - \Gamma\left(\frac{d}{2}+1,\frac{\delta\left(\bx_i;\bmu^{(r+1)},\bSigma^{(r+1)}\right)}{2} \right) \right]. 
\label{eq:ECME CM-step 2 theta}
\end{align} 
Thus, the second CM-step of the ECME algorithm chooses $\theta^{(r+1)}$ to maximize \eqref{eq:ECME CM-step 2 theta} with $\bmu=\bmu^{(r+1)}$ and $\bSigma=\bSigma^{(r+1)}$. 
This implies that $\theta^{(r+1)}$ is a solution of the equation
\begin{equation}
-\frac{n}{\theta} + \left(1-\theta\right)^{\frac{d}{2}}\sum_{i=1}^n 
\frac{\left[\frac{\delta\left(\bx_i;\bmu^{(r+1)},\bSigma^{(r+1)}\right)}{2}\right]^{\frac{d}{2}+1} \exp\left[-\frac{\left(1-\theta\right)\delta\left(\bx_i;\bmu^{(r+1)},\bSigma^{(r+1)}\right)}{2}\right]}
{\Gamma\left(\frac{d}{2}+1,(1-\theta) \frac{\delta\left(\bx_i;\bmu^{(r+1)},\bSigma^{(r+1)}\right)}{2} \right) - \Gamma\left(\frac{d}{2}+1,\frac{\delta\left(\bx_i;\bmu^{(r+1)},\bSigma^{(r+1)}\right)}{2} \right)}=0.
\label{eq:ECME dertheta}
\end{equation}
As a closed form solution for the root of \eqref{eq:ECME dertheta} is not analytically available, we use the {\tt uniroot()} function in the \textbf{stats} package of \textsf{R} to perform the numerical one-dimensional search of $\theta \in (0,1)$.







\section{Existence of the ML estimates} 
\label{sec:Existence of MLE}

In this section we demonstrate the existence of the ML estimates for $\bmu$, $\bSigma$, and $\theta$.
With this aim, we consider the solutions from the ECME algorithm of Section~\ref{subsec:Alternative ECME algorithm}.
In line with the CM-steps of the algorithm, we first demonstrate the existence of $\bmu$ and $\bSigma$, given $\theta$ (Theorem~\ref{theo:existence MLE of mu and Sigma}), and then the existence of $\theta$ given $\bmu$ and $\bSigma$ (Theorem~\ref{theo:existence MLE of theta}).

\begin{theorem}\label{theo:existence MLE of mu and Sigma}
Given $\theta$ and a sample $\bx_1,\ldots,\bx_n$, if $n>d\left(\frac{d}{2}+1\right)$, then there exists $(\widehat{\bmu},\widehat{\bSigma})\in\real^d \times \mathcal{M}^+_{d\times d}$, being $\mathcal{M}^+_{d\times d}$ the set of positive-definite symmetric $d\times d$ matrices, such that
\begin{equation}
l\left(\widehat{\bmu},\widehat{\bSigma},\theta\right) \geq l\left(\bmu, \bSigma ,\theta\right), \quad \text{for every} \ (\bmu , \bSigma) \in \real^d \times \mathcal{M}^+_{d\times d}.
\label{eq:existence}
\end{equation}
\end{theorem}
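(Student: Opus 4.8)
The plan is to fix $\theta$ and treat $l$ in \eqref{eq:MTIN observed-data log-likelihood} as a continuous function of $(\bmu,\bSigma)$ on the open set $\real^d\times\mathcal{M}^+_{d\times d}$, and then prove it is \emph{coercive}: along any sequence $(\bmu_k,\bSigma_k)$ that eventually leaves every compact subset of $\real^d\times\mathcal{M}^+_{d\times d}$ one has $l(\bmu_k,\bSigma_k,\theta)\to-\infty$. Continuity is immediate, since $\delta(\bx_i;\bmu,\bSigma)$ is continuous in $(\bmu,\bSigma)$ there and the incomplete-gamma difference in \eqref{eq:MTIN pdf} is strictly positive and smooth. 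Once coercivity holds, fix any interior point with value $c_0$; the superlevel set $\{(\bmu,\bSigma):l\geq c_0\}$ is closed, bounded, and bounded away from the boundary of the parameter space, hence compact and nonempty, so the continuous $l$ attains a maximum on it, which is a global maximum. This reduces the whole theorem to the coercivity statement.

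To organize the argument I would pass to a subsequence along which each eigenvalue of $\bSigma_k$ converges in $[0,\infty]$, and use two complementary bounds on the generator $g(\delta)=\int_{1-\theta}^1 w^{d/2}\exp(-w\delta/2)\,dw$ supplied by the scale-mixture representation of Theorem~\ref{theo:MN scale mixture}: monotonicity gives $g(\delta)\leq g(0)$, while extending the integral to $(0,\infty)$ gives the polynomial tail bound $g(\delta)\leq \Gamma(\tfrac{d}{2}+1)(2/\delta)^{d/2+1}$. Passing to the log-likelihood these yield, up to constants depending only on $\theta$, the inequalities $l\leq \mathrm{const}-\tfrac{n}{2}\log|\bSigma|$ and $l\leq \mathrm{const}-\tfrac{n}{2}\log|\bSigma|-(\tfrac{d}{2}+1)\sum_{i=1}^n\log\delta(\bx_i;\bmu,\bSigma)$. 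The divergent-scale cases are then quick: if $\lambda_{\max}(\bSigma_k)\to\infty$ while $\lambda_{\min}(\bSigma_k)$ stays bounded away from $0$, then $|\bSigma_k|\to\infty$ and the first inequality forces $l\to-\infty$; if all eigenvalues stay in a compact subinterval of $(0,\infty)$ but $\|\bmu_k\|\to\infty$, then every $\delta(\bx_i;\bmu_k,\bSigma_k)\to\infty$ and the second inequality forces $l\to-\infty$. The remaining collapse with $\lambda_{\min}(\bSigma_k)\to0$ but $|\bSigma_k|$ bounded away from $0$ is also easy, since the determinant term is then controlled while at least one Mahalanobis distance diverges.

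The substantive case, and the main obstacle, is $\lambda_{\min}(\bSigma_k)\to0$ together with $|\bSigma_k|\to0$, where $-\tfrac{n}{2}\log|\bSigma_k|\to+\infty$ must be beaten by $-(\tfrac{d}{2}+1)\sum_i\log\delta(\bx_i;\bmu_k,\bSigma_k)$. Suppose $q\geq1$ eigenvalues tend to $0$, so $\bSigma_k^{-1}$ blows up along a $q$-dimensional subspace while the determinant term grows only logarithmically in the collapsing scale. Any observation whose projection onto that subspace does not vanish in the limit has $\delta(\bx_i;\bmu_k,\bSigma_k)\to\infty$, and under the polynomial bound each such observation contributes a term of the same logarithmic order but of opposite sign. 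The crux is therefore a counting/linear-algebra estimate: I would apply Hadamard's inequality to affinely-independent $d$-subsets of the centered vectors $\bx_i-\bmu_k$ to lower-bound $\sum_i\log\delta(\bx_i;\bmu_k,\bSigma_k)$ by a multiple of $-\log|\bSigma_k|$, the multiplicity being governed by how many such disjoint subsets the sample admits.

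This is exactly where the hypothesis enters: $n>d(\tfrac{d}{2}+1)$ supplies enough observations, relative to the determinant's coefficient $\tfrac{n}{2}$ and the generator's tail exponent $\tfrac{d}{2}+1$, that the negative contribution strictly dominates in every collapse direction, giving $l\to-\infty$. I expect the delicate points to be verifying this domination uniformly over all choices of the limiting subspace and all relative rates among the vanishing eigenvalues, and controlling the borderline observations whose projections onto the collapsing subspace tend to $0$; the implicit nondegeneracy of the sample (no low-dimensional affine subspace carrying too many points) is what excludes the competing mechanism in which $|\bSigma_k|\to0$ while all $\delta(\bx_i;\bmu_k,\bSigma_k)$ remain bounded.
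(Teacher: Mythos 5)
Your strategy (continuity plus coercivity, using the two generator bounds $g(\delta)\le g(0)$ and $g(\delta)\le \Gamma\!\left(\tfrac{d}{2}+1\right)(2/\delta)^{\frac{d}{2}+1}$) is a genuinely different route from the paper's: the paper never argues compactness directly, but instead verifies three conditions on the generator $g$ --- continuity, $g'(t)<0$ for all $t>0$, and $\lim_{r\rightarrow\infty}r^{\frac{d}{2}+1}g(r)=0$ --- and then invokes Proposition~2.4 of \citet{Cues:Trim:2008}, which packages exactly the compactness argument you are trying to reconstruct. Your ``easy'' cases ($|\bSigma_k|\rightarrow\infty$, $\|\bmu_k\|\rightarrow\infty$ with bounded spectrum, and $\lambda_{\min}\rightarrow 0$ with $|\bSigma_k|$ bounded away from zero) are handled correctly, modulo the caveat that the polynomial bound must be reserved for observations with large $\delta_i$, since it degenerates as $\delta_i\rightarrow 0$.

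The gap is in the one case that carries the theorem. The step in which the hypothesis $n>d\left(\tfrac{d}{2}+1\right)$ must actually be used --- showing that when some eigenvalues of $\bSigma_k$ collapse, the term $-\left(\tfrac{d}{2}+1\right)\sum_i\log\delta\left(\bx_i;\bmu_k,\bSigma_k\right)$ dominates $-\tfrac{n}{2}\log|\bSigma_k|$ uniformly over all limiting subspaces, all relative rates of the vanishing eigenvalues, and all accompanying sequences $\bmu_k$ --- is only announced (``I would apply Hadamard's inequality\dots'', ``I expect the delicate points to be\dots''), not carried out. This is not a routine verification. A direct count for a single collapsing direction at rate $\epsilon$ shows the coefficient of $\log\epsilon$ is $-\tfrac{n}{2}+(n-m)\left(\tfrac{d}{2}+1\right)$, where $m$ is the number of observations trapped near the limiting hyperplane; coercivity requires $m<n\tfrac{d+1}{d+2}$, and more generally $m_q<n\tfrac{q+2}{d+2}$ for every $q$-dimensional affine subspace. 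No sample-size condition alone can deliver this: if all $n$ points lie on a hyperplane, an orthogonal collapse gives $l\rightarrow+\infty$ and the coercivity claim fails no matter how large $n$ is. You correctly flag this as an ``implicit nondegeneracy'' of the sample, but flagging it does not close the argument --- you must either add a general-position hypothesis and prove the counting estimate (including the uniformity over subspaces and the control of points whose distance to the collapsing subspace tends to zero), or, as the paper does, delegate the entire compactness machinery to the cited proposition. As written, the crux of the proof is missing.
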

\begin{proof}
The pdf of $\bX\sim \mathcal{TIN}_d\left(\bmu,\bSigma,\theta\right)$ can be written as
$$
f_{\text{TIN}}\left(\bx;\bmu,\bSigma,\theta\right)  =
 \left|\bSigma \right|^{-\frac{1}{2}} g\left[\delta\left(\bx;\bmu,\bSigma\right)\right],
$$
where
$$
g(t)= \frac{2\pi^{-\frac{d}{2}}}{\theta t^{\left(\frac{d}{2}+1\right)}}    
\left[ \Gamma\left(\frac{d}{2}+1,(1-\theta) \frac{t}{2} \right) - \Gamma\left(\frac{d}{2}+1,\frac{t}{2} \right) \right]. 
$$
To prove the theorem we use Proposition~2.4 in \citet{Cues:Trim:2008}.
In particular, it is sufficient to verify that the following assumptions are valid:
\begin{enumerate}[start=1,label={(G\arabic*)}] 
	\item\label{item:G1} there exists a strictly decreasing sequence $\left\{t_n\right\}$ which converges to zero, such that $g(t_n)<g(t_{n+1})$ for every $n$;
	\item\label{item:Gp} if $d>1$, then there exists $\gamma>d/2$ such that $\displaystyle\lim_{r\rightarrow \infty} r^\gamma g(r)=0$;
	\item\label{item:G2} $g(\cdot)$ is continuous on $\real^+$.
\end{enumerate}
While \ref{item:G2} is easily verifiable, \ref{item:G1} and \ref{item:Gp} need some attention.
To verify \ref{item:G1} we can consider the first derivative of $g(\cdot)$, i.e.
$$
g'(t)=-\frac{2\pi^{-\frac{d}{2}}}{\theta t^{\left(\frac{d}{2}+2\right)}}    \left[ \Gamma\left(\frac{d}{2}+2,(1-\theta) \frac{t}{2} \right) - \Gamma\left(\frac{d}{2}+2,\frac{t}{2} \right) \right].
$$
It is straightforward to verify that $g'(t)<0$ for all $t \in \real^+$; so, \ref{item:G1} is satisfied for any strictly decreasing sequence $\left\{t_n\right\}$ which converges to zero.

In order to verify \ref{item:Gp}, consider $\gamma=\frac{d}{2}+1>\frac{d}{2}$.
Then
\begin{eqnarray}
r^{\frac{d}{2}+1} g(r) &=& \frac{2\pi^{-\frac{d}{2}}}{\theta}    
\left[ \Gamma\left(\frac{d}{2}+1,(1-\theta) \frac{t}{2} \right) - \Gamma\left(\frac{d}{2}+1,\frac{t}{2} \right) \right] \nonumber \\
&=& \frac{2\pi^{-\frac{d}{2}}}{\theta} \left(\int_0^{\frac{r}{2}} \exp(-x) x^{\frac{d}{2}+1}dx-\int^{(1-\theta)\frac{r}{2}}_0 \exp(-x) x^{\frac{d}{2}+1}dx\right).
\label{eq:Gp}
\end{eqnarray}
When $r\rightarrow \infty$, both the integrals in \eqref{eq:Gp} tend to $\Gamma\left(\frac{d}{2}+2\right)$, so that $\lim_{r\rightarrow \infty} r^\gamma g(r)=0$.
Then, \ref{item:Gp} is verified for $\gamma=\frac{d}{2}+1$. 
\end{proof}

\begin{theorem}\label{theo:existence MLE of theta}
Given $\bmu$, $\bSigma$, and a sample $\bx_1,\ldots,\bx_n$, then there exists $\widehat{\theta} \in \left(0,1\right)$ such that 
\begin{equation}
l\left(\bmu,\bSigma,\widehat{\theta}\right) \geq l\left(\bmu,\bSigma,\theta\right), \quad \text{for every $\theta \in \left(0,1\right)$}.
\label{eq:existencetheta}
\end{equation}
\end{theorem}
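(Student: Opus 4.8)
The plan is to regard the observed-data log-likelihood \eqref{eq:MTIN observed-data log-likelihood} as a function of the single argument $\theta$, with $\bmu$ and $\bSigma$ — hence the squared Mahalanobis distances $\delta_i\equiv\delta\left(\bx_i;\bmu,\bSigma\right)$ — held fixed, and to show that this one-dimensional function attains its supremum in the interior of $\left(0,1\right)$. Discarding the additive terms that do not involve $\theta$, it suffices to study the $\theta$-dependent part of $l\left(\bmu,\bSigma,\cdot\right)$,
\[
\tilde l\left(\theta\right) = -n\log\theta + \sum_{i=1}^n \log G_i\left(\theta\right), \qquad G_i\left(\theta\right) = \int_{\left(1-\theta\right)\delta_i/2}^{\delta_i/2} x^{d/2}\,e^{-x}\,dx ,
\]
where the integral form follows from $\Gamma\left(\frac{d}{2}+1,a\right)-\Gamma\left(\frac{d}{2}+1,b\right)=\int_b^a x^{d/2}e^{-x}\,dx$. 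Since each $G_i$ is a positive and smooth function of $\theta$ on $\left(0,1\right)$ (the integrand is positive and the interval is nonempty whenever $\delta_i>0$), $\tilde l$ is continuous — indeed $C^1$ — on $\left(0,1\right)$.

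First I would control the two boundaries so as to rule out an escape of the supremum. As $\theta\to1^-$ the lower limit $\left(1-\theta\right)\delta_i/2\to0$, so $G_i\left(\theta\right)\to\Gamma\left(\frac{d}{2}+1\right)-\Gamma\left(\frac{d}{2}+1,\frac{\delta_i}{2}\right)\in\left(0,\infty\right)$ and $-n\log\theta\to0$; hence $\tilde l\left(1^-\right)$ is finite. As $\theta\to0^+$ the interval shrinks and $G_i\left(\theta\right)\sim c_i\,\theta$ with $c_i=\left(\delta_i/2\right)^{d/2+1}e^{-\delta_i/2}>0$, so that $\log G_i\left(\theta\right)=\log\theta+\log c_i+o\left(1\right)$ and the divergent $-n\log\theta$ cancels the $n\log\theta$, leaving $\tilde l\left(0^+\right)=\sum_i\log c_i$ finite (consistently with Theorem~\ref{theo:normalcase}). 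Thus $\tilde l$ extends continuously to the compact set $\left[0,1\right]$ and, by Weierstrass, attains a maximum at some $\theta^\star\in\left[0,1\right]$.

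The remaining task is to push $\theta^\star$ strictly inside the interval through a sign analysis of $\tilde l'\left(\theta\right)=-n/\theta+\sum_i G_i'\left(\theta\right)/G_i\left(\theta\right)$, where $G_i'\left(\theta\right)=\frac{\delta_i}{2}\bigl[\left(1-\theta\right)\frac{\delta_i}{2}\bigr]^{d/2}e^{-\left(1-\theta\right)\delta_i/2}$; this is exactly the score appearing in \eqref{eq:ECME dertheta}. At the right endpoint the factor $\left(1-\theta\right)^{d/2}$ forces $G_i'\left(\theta\right)\to0$ while $G_i\left(\theta\right)$ stays bounded away from $0$, so $\tilde l'\left(1^-\right)=-n<0$: the function is strictly decreasing near $1$, and therefore $\theta^\star\ne1$.

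The step I expect to be the main obstacle is the left endpoint. There the two pieces of $\tilde l'$ each blow up like $1/\theta$ and the leading singularities cancel exactly, so the sign of $\tilde l'$ near $0$ is invisible to a first-order analysis; one must expand $G_i$ and $G_i'$ to second order in $\theta$, after which the finite limit $\tilde l'\left(0^+\right)=\tfrac14\bigl(\sum_i\delta_i-nd\bigr)$ emerges. Consequently $\theta^\star\ne0$ precisely when $\tfrac1n\sum_i\delta_i>d$, i.e.\ when the sample is over-dispersed relative to $\mathcal N_d\left(\bmu,\bSigma\right)$ — the only regime in which a heavier-tailed fit can strictly improve on the normal. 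Establishing this dispersion inequality (which is the natural situation at the ECME updates of $\bmu$ and $\bSigma$ feeding CM-step~2), or otherwise excluding a boundary maximum at $\theta=0$, is the delicate point on which the whole argument hinges; once it is secured, $\theta^\star\in\left(0,1\right)$ is the desired $\widehat{\theta}$.
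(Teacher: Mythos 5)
Your first stage reproduces the paper's argument exactly: the paper also isolates the $\theta$-dependent part of the log-likelihood (via the integral representation \eqref{eq:withintegral}), notes continuity on $\left(0,1\right)$, and computes the two boundary limits, finding the limit at $\theta\to 1$ finite and the limit at $\theta\to 0$ equal to the normal log-likelihood. Where you diverge is that the paper then simply declares ``if the limits at $0$ and $1$ are both finite, then the log-likelihood has a maximum and the theorem is proved'' --- an inference that is not valid on an open interval, since the continuous extension to $\left[0,1\right]$ may attain its maximum only at an endpoint. Your derivative analysis is precisely the missing step: the computation $\tilde l'\left(1^-\right)=-n<0$ is correct and legitimately excludes the right endpoint, and your second-order expansion giving $\tilde l'\left(0^+\right)=\frac{1}{4}\left(\sum_i\delta_i-nd\right)$ also checks out (using $g(x)=x^{d/2}e^{-x}$ and $g'(x)/g(x)=\frac{d}{2x}-1$). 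The only blemish in the write-up is the orientation of the incomplete-gamma identity, which should read $\Gamma\left(\frac{d}{2}+1,a\right)-\Gamma\left(\frac{d}{2}+1,b\right)=\int_a^b x^{d/2}e^{-x}\,dx$; the $G_i$ you actually work with is the correct positive quantity, so nothing downstream is affected.

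The gap you flag at the left endpoint is genuine, and it cannot be closed for arbitrary fixed $\left(\bmu,\bSigma\right)$: when $\frac{1}{n}\sum_i\delta_i<d$ the statement can actually fail. For instance, with $n=d=1$ and $\delta_1$ small, $G_1\left(\theta\right)\approx\frac{2}{3}\left(\delta_1/2\right)^{3/2}\left[1-\left(1-\theta\right)^{3/2}\right]$, so $\tilde l\left(\theta\right)\approx\mathrm{const}+\log\frac{1-\left(1-\theta\right)^{3/2}}{\theta}$, which is strictly decreasing on $\left(0,1\right)$; the supremum is approached only as $\theta\to 0^+$ and no $\widehat{\theta}\in\left(0,1\right)$ exists. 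So the obstruction you identify is not a defect of your method but of the theorem as stated for arbitrary given $\bmu$ and $\bSigma$: interior attainment requires an over-dispersion condition such as $\frac{1}{n}\sum_i\delta_i>d$ (or a reformulation allowing $\widehat{\theta}$ on the closed interval, consistent with the limiting normal case of Theorem~\ref{theo:normalcase}). The paper's own proof steps over this silently; your blind attempt surfaces the real issue, states the exact condition under which it is resolved, but does not (and in general cannot) remove it.
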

\begin{proof}
According to \eqref{eq:withintegral}, and given $\bmu$ and $\bSigma$, the log-likelihood function results
\begin{equation}
l\left(\bmu,\bSigma,\theta\right) = l\left(\theta\right) = -\frac{nd}{2}\log\left(2\pi\right)-\frac{n}{2}\log \left|\bSigma\right|  + \sum_{i=1}^n \log \left\{ \frac{1}{\theta}  \int_{1-\theta}^1 w^{\frac{d}{2}} \exp\left[-\frac{w}{2}{\delta\left(\bx_i;\bmu,\bSigma\right)}\right] dw \right\}.
\label{eq:liktheta}
\end{equation}
It is straightforward to realize that $l\left(\theta\right)$ is continuous in $(0,1)$.
If the limits of $l\left(\theta\right)$, when $\theta$ tends to the boundaries $0$ and $1$ of the support, are both finite, then the log-likelihood function has a maximum and the theorem is proved.
When $\theta \rightarrow 1$ we have 
\begin{equation*}
\lim_{\theta \rightarrow 1} l\left(\theta\right) = -\frac{nd}{2}\log\left(2\pi\right)-\frac{n}{2}\log \left|\bSigma\right|  + \sum_{i=1}^n \log \left\{   \int_{0}^1 w^{\frac{d}{2}} \exp\left[-\frac{w}{2}{\delta\left(\bx_i;\bmu,\bSigma\right)}\right] dw \right\}.
\end{equation*}
When $\theta \rightarrow 0$, 
using the results in \eqref{eq:limit0}--\eqref{eq:limit2}, we obtain
\begin{equation}
\lim_{\theta \rightarrow 0} l\left(\theta\right) = -\frac{nd}{2}\log\left(2\pi\right)-\frac{n}{2}\log \left|\bSigma\right|  -\frac{1}{2} \sum_{i=1}^n  \delta\left(\bx_i;\bmu,\bSigma\right).
\label{eq:liklim0}
\end{equation} 
The limit in \eqref{eq:liklim0} is constant and corresponds to the log-likelihood function of a $d$-variate normal distribution computed in $\bmu$ and $\bSigma$.  
\end{proof}

\section{Some notes on robustness}
\label{sec:Some notes on robustness}

The MTIN model allows to obtain improved (in terms of robustness) ML estimates of $\bmu$ and $\bSigma$, with respect to those provided by the nested (reference) MN model, in the presence of mild outliers.
In detail, the influence of the observations $\bx_i$ is reduced (down-weighted) as the squared Mahalanobis distance $\delta\left(\bx_i;\bmu,\bSigma\right)=\delta_i$ increases.
This is in line with the $M$-estimation \citep{Maro:Robu:1976} which uses a decreasing weighting function $w\left(\delta_i\right) : \left(0,\infty\right) \rightarrow \left(0,\infty\right)$ to down-weight the observations $\bx_i$ with large $\delta_i$ values.
	To be more precise, according to \eqref{eq:M-step mu} and \eqref{eq:M-step Sigma}, $\bmu^{(r+1)}$ and $\bSigma^{(r+1)}$ can be respectively viewed, since $\theta$ is estimated from the data by ML, as an adaptively weighted sample mean and sample covariance matrix, in the sense used by \citet{Hogg:Adap:1974}, with weights $w_i^{(r)}$ given in \eqref{eq:E step 1}.
This approach, in addition to be a type of $M$-estimation, follows \citet{Box:Samp:1980} and \citet{Box:Tiao:Baye:2011} in embedding the reference MN model in a larger model with one or more parameters (here $\theta$) that afford protection against non-normality.

For each possible value of the dimension $d$, consider the weights $w_i^{(r)}$ in \eqref{eq:E step 1} as a bivariate (weighting) function $w$ of the squared Mahalanobis distance $\delta\geq 0$ and of the inflation parameter $\theta\in\left(0,1\right)$, i.e.
\begin{equation}
w\left(\delta,\theta;d\right) = 
\frac{2
\left[\Gamma\left(\frac{d}{2}+2,\left(1-\theta\right)\frac{\delta}{2}\right)-
\Gamma\left(\frac{d}{2}+2,\frac{\delta}{2}\right)\right]
}{\delta
\left[\Gamma\left(\frac{d}{2}+1,\left(1-\theta\right)\frac{\delta}{2}\right)-
\Gamma\left(\frac{d}{2}+1,\frac{\delta}{2}\right)\right]
}.
\label{eq:weight function}
\end{equation}
It is straightforward to show that the weighting function in \eqref{eq:weight function} is positive. 
An example of graphical representation of $w\left(\delta,\theta;d\right)$, in the case $d=3$, is provided in \figurename~\ref{fig:weight 3D}. 
\begin{figure}[!ht]
\centering
\resizebox{0.5\textwidth}{!}{
\includegraphics{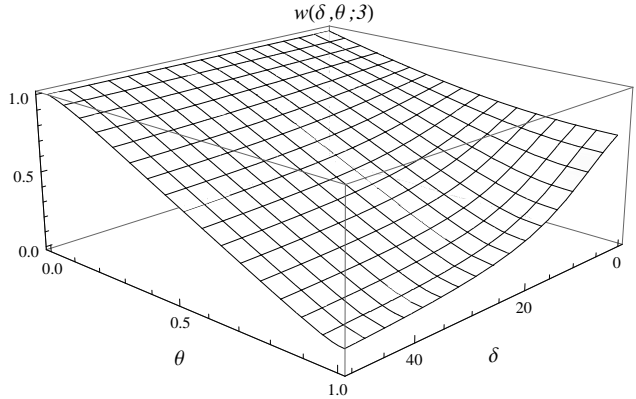} 
}
\caption{
Graphical representation of $w\left(\delta,\theta;d=3\right)$.
}
\label{fig:weight 3D}
\end{figure}
As concerns the limiting behavior of $w\left(\delta,\theta;d\right)$ as $\delta \rightarrow \infty$, by using the fundamental theorem of calculus and l'H$\hat{\mbox{o}}$spital's rule we obtain
\begin{equation}
\lim_{\delta\rightarrow \infty} w\left(\delta,\theta;d\right) = 1-\theta.
\label{eq:delta inf}
\end{equation}
From \eqref{eq:delta inf} we note that, when observations are very far from the bulk of the data ($\delta\rightarrow \infty$) their weight in the computation of $\bmu$ and $\bSigma$ depends by $\theta$.
If $\theta$ is close to zero (i.e.~if the MTIN approaches the MN distribution), then the weight is close to one; indeed, this is what we expect under the normal case.
In the other cases, the weight linearly decreases as $\theta$ increases, i.e.~as the MTIN departs from the MN distribution.
As concerns the limiting behavior of $w\left(\delta,\theta;d\right)$ as $\theta\rightarrow 0$, by using the mathematics considered for the limit in \eqref{eq:delta inf}, we have
\begin{equation}
\lim_{\theta\rightarrow 0} w\left(\delta,\theta;d\right) = 1.
\label{eq:theta 0}
\end{equation}
From \eqref{eq:theta 0} we can observe that if $\theta\rightarrow 0$, then $w\left(\delta,\theta;d\right)\rightarrow 1$ regardless of $\delta$, and this happens because we are operationally working with a MN distribution (see Theorem~\ref{theo:normalcase}).

\section{Simulation studies}
\label{sec:Simulation studies}

In this section, we investigate various aspects related to our model through simulation studies.
While the simulation study of Section~\ref{subsec:Quality of estimates and computational times} is considered to evaluate parameter recovery of the proposed estimation procedures, in addition to the computational times required by these procedures, the simulation study of Section~\ref{subsec:Model selection: AIC versus BIC} aims to compare classical model selection criteria in selecting among a set of well-established elliptical distributions, with differing number of parameters, including the MTIN.
The whole analysis is conducted in \textsf{R} and the code for density evaluation, random number generation, and fitting for the MTIN distribution is available, in the form of an \textsf{R} package named \textbf{mtin}, from \url{http://docenti.unict.it/punzo/Rpackages.htm}.

\subsection{Parameter recovery and computational times}
\label{subsec:Quality of estimates and computational times}

In this section we compare two methods to estimate the parameters of the MTIN distribution, namely the method of moments (MM), discussed in \appendixname~\ref{sec:Method of moments}, and ML, discussed in Section~\ref{sec:ML}.
These methods are compared with respect to parameter recovery and to the computational time required.
We use three approaches to compute ML estimates: direct approach with the Nelder-Mead algorithm, direct approach with the BFGS algorithm, and ECME algorithm (cf.~Section~\ref{sec:ML}).
All the algorithms used to obtain ML estimates are initialized by the solution provided by the method of moments.

In this study we consider three experimental factors: the dimension ($d\in \left\{2,3,5\right\}$), the sample size ($n\in \left\{200,500,1000\right\}$), and the inflation parameter ($\theta\in \left\{0.6,0.7,0.8,0.9\right\}$). 
The values of $\theta$ are chosen unbalanced on the right simply to have scenarios with more excess kurtosis (refer to the considerations made at the end of Section~\ref{sec:moments}).

For each combination of $n$, $d$ and $\theta$, we sample one hundred datasets from a MTIN distribution with a zero mean vector ($\bmu=\bzero$) and an identity scale matrix ($\bSigma=\bI$), for a total of $3\times 3\times 4\times 100=3600$ datasets. 
On each generated dataset, we fit the MTIN distribution with the four approaches cited above.
Computation is performed on a Windows 10 PC, with Intel i7-8550U CPU, 16.0 GB RAM, using \textsf{R} 64-bit, and the elapsed time (in seconds) is computed via the \texttt{system.time()} function of the \textbf{base} package.
Parallel computing, using 4 cores, is considered.


We start evaluating parameter recovery by focusing on $\theta$; we limit the investigation to this parameter because, in analogy with other normal scale mixtures, the parameter(s) governing the tail-weight is (are) the most difficult to be estimated.
However, although not reported here, the ranking of the estimation methods we obtain with respect to $\theta$ are roughly preserved when we move to $\bmu$ and $\bSigma$. 

In \figurename~\ref{fig:Theta.Bias} we report the box-plots of the differences $(\hat{\theta}-\theta)$, for bias evaluation, while in \figurename~\ref{fig:Theta.MSE} we report the box-plots of the squared differences $(\hat{\theta}-\theta)^2$, for mean square error (MSE) evaluation.
Each of the nine plots in these figures refers to a particular pair $\left(n,d\right)$ and shows $4 \times 3 = 12$ box-plots each summarizing (for every $\theta$ and used method) the behavior of the considered differences with respect to the available 100 replications.
As expected, the differences under evaluation improve as $n$ increases.
Interestingly enough, the differences improves when either $d$ or $\theta$ increase.
Regardless of the scenario and difference considered, BFGS and ECME algorithms for ML estimation work comparably and represent the best approaches.
The worst approach is MM.
The very similar behavior between BFGS and ECME algorithms can be further corroborated by looking at \tablename~\ref{tab:log-likelihoods}, where log-likelihood values are averaged over the 100 replications of each triplet $\left(n,d,\theta\right)$. 
Here we can note how these algorithms return exactly the same average results, which are always slightly better than those provided by the Nelder-Mead algorithm, and this is true regardless of the triplet $\left(n,d,\theta\right)$ considered.
To be more precise, looking at the whole set of obtained results (not reported here for the sake of space), BFGS and ECME algorithms provide, on each fitted model, the same maximized log-likelihood value.

\begin{figure}[!ht]
\centering
\subfigure[$d=2$ and $n=200$ \label{fig:biasd2n200}]
{\includegraphics[width=0.328\textwidth]{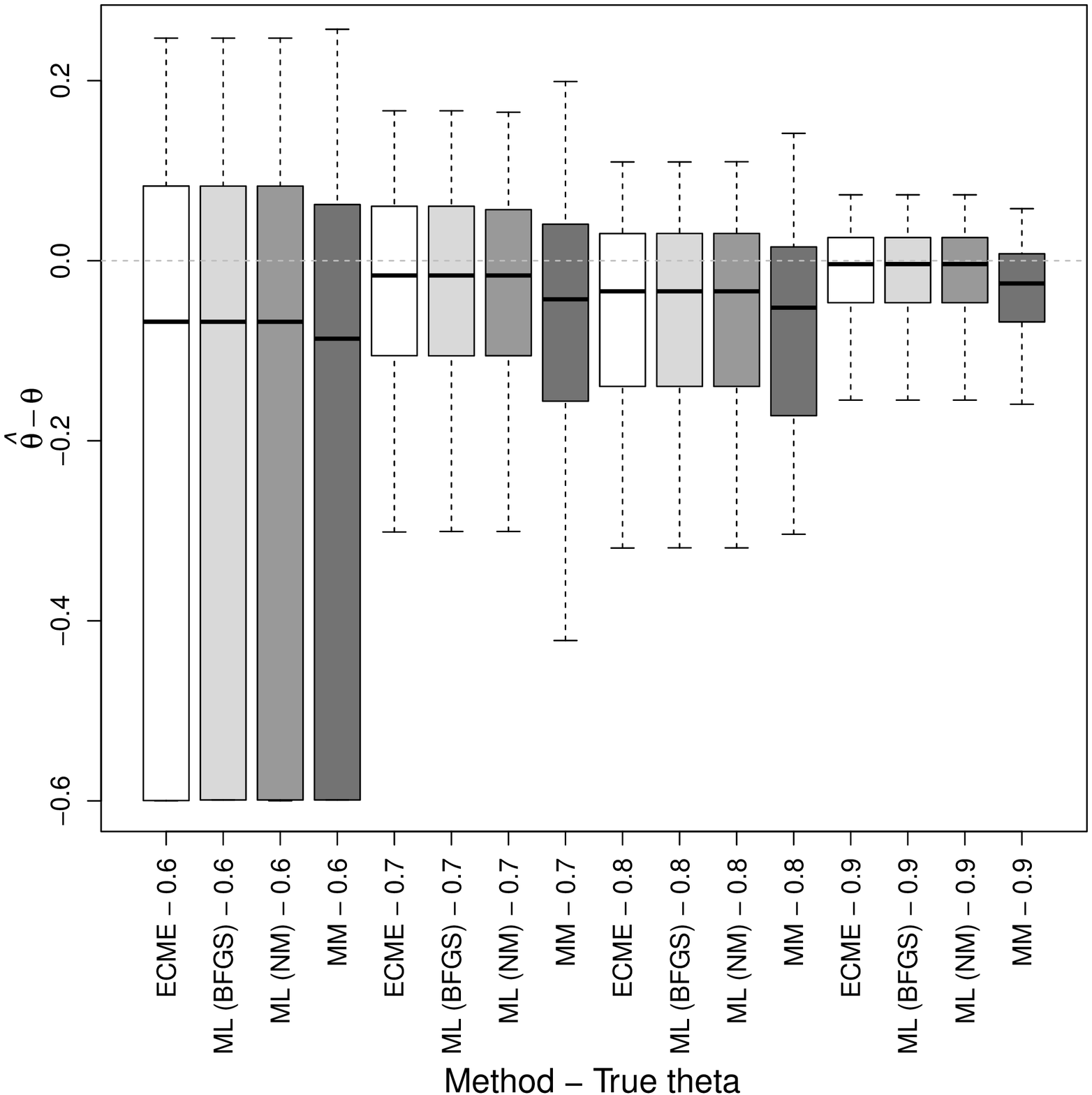}}
\subfigure[$d=2$ and $n=500$\label{fig:biasd2n500}]
{\includegraphics[width=0.328\textwidth]{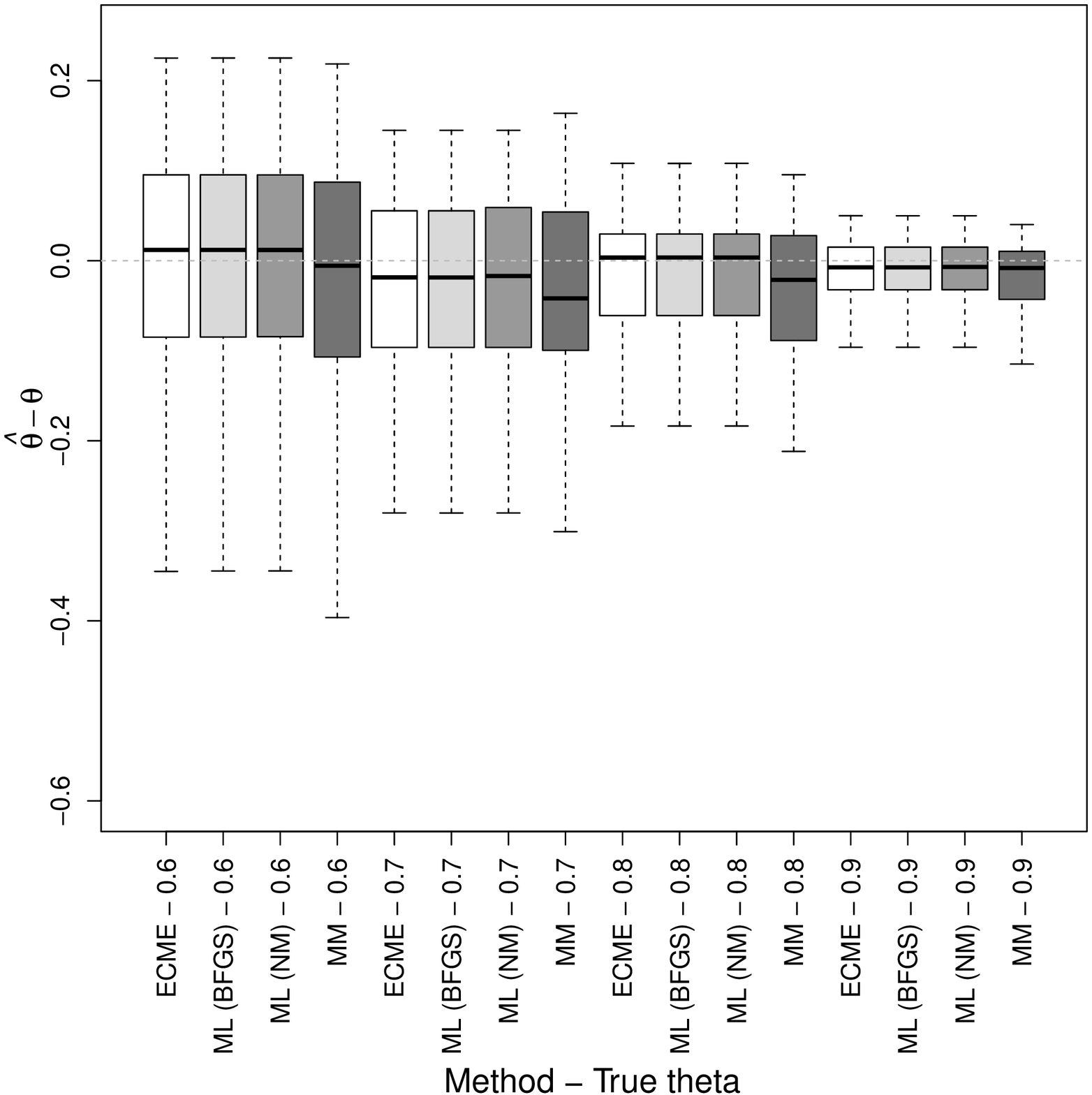}}
\subfigure[$d=2$ and $n=1000$\label{fig:biasd2n1000}]
{\includegraphics[width=0.328\textwidth]{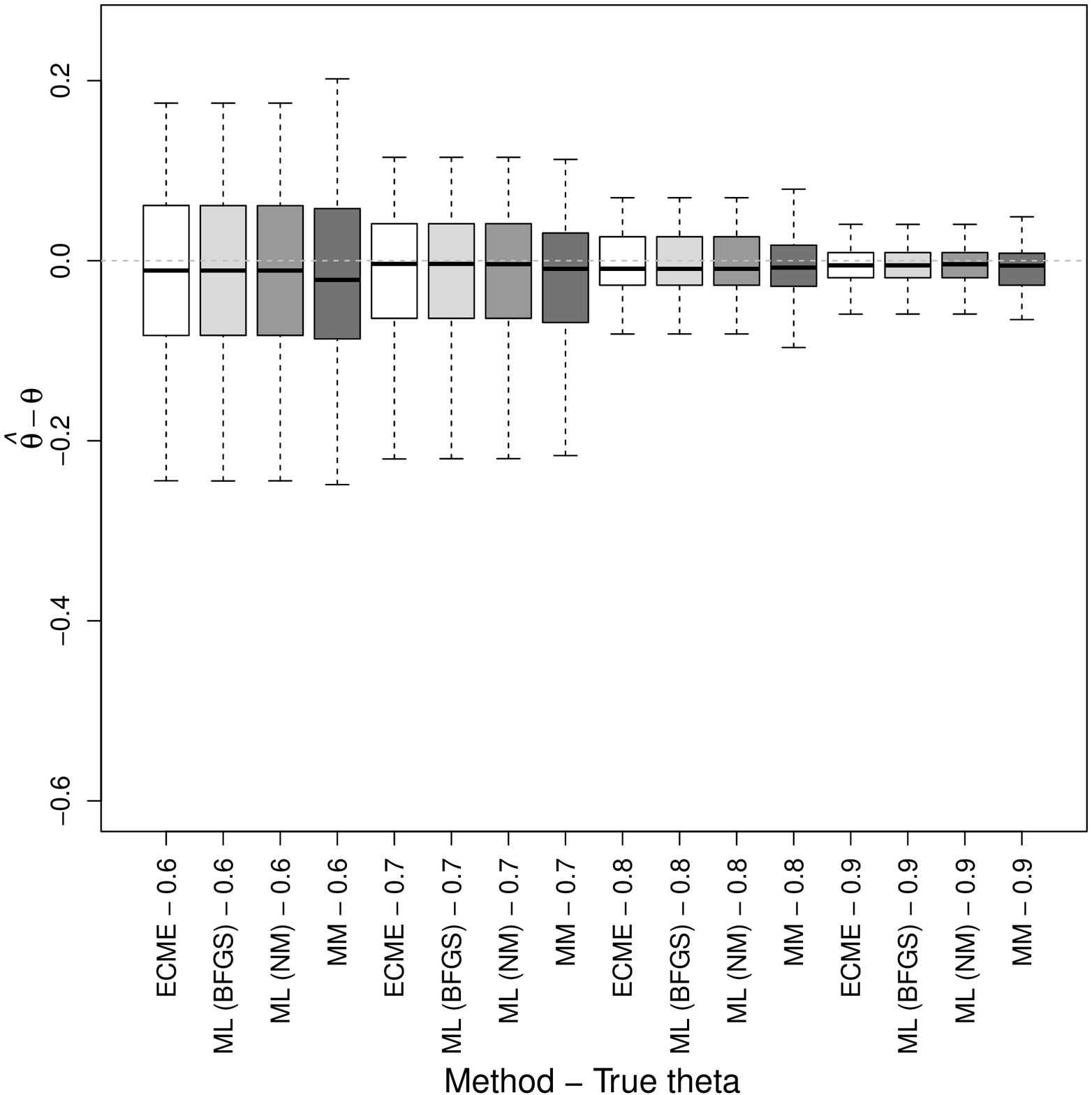}}

\subfigure[$d=3$ and $n=200$ \label{fig:biasd3n200}]
{\includegraphics[width=0.328\textwidth]{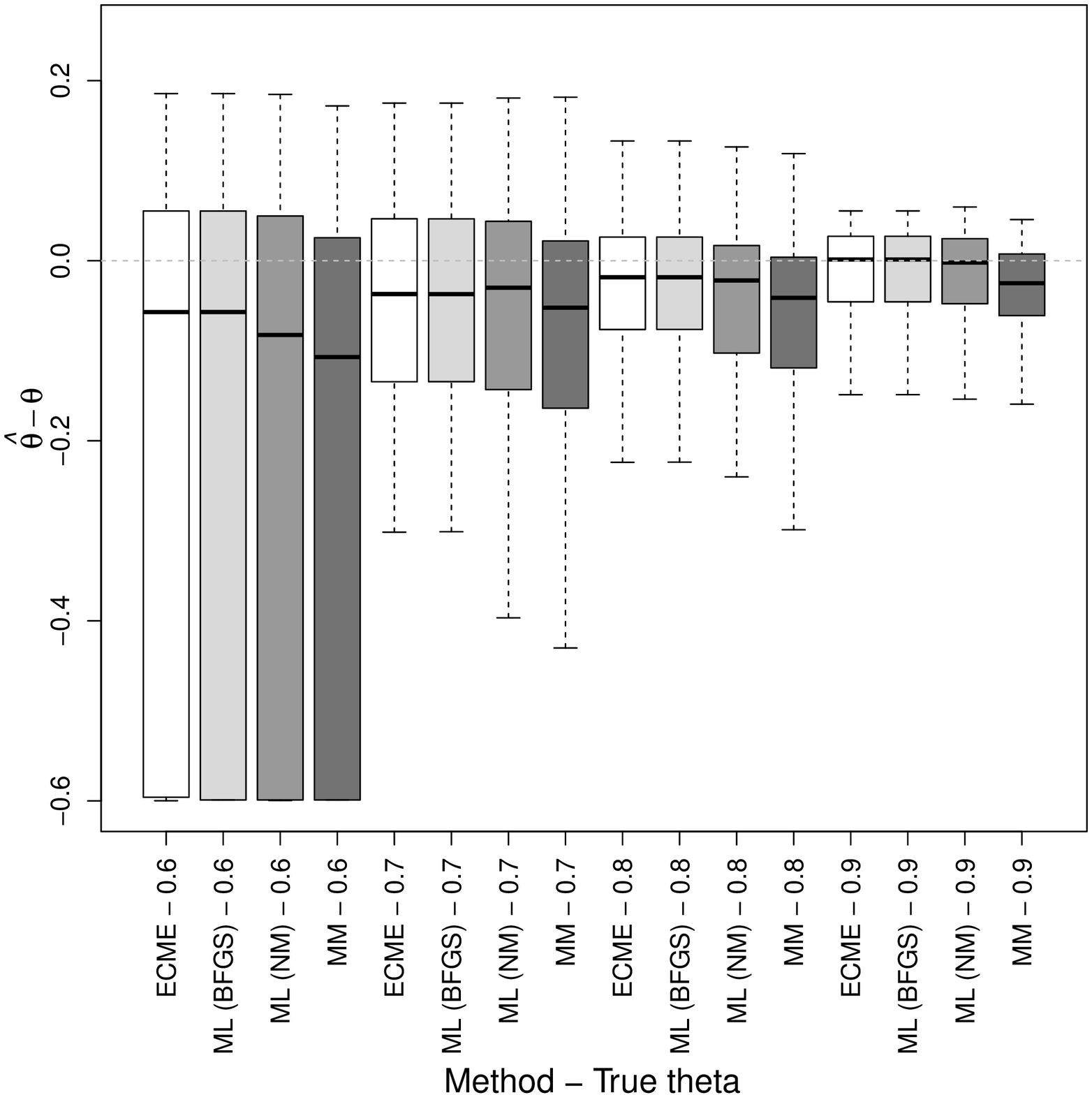}}
\subfigure[$d=3$ and $n=500$\label{fig:biasd3n500}]
{\includegraphics[width=0.328\textwidth]{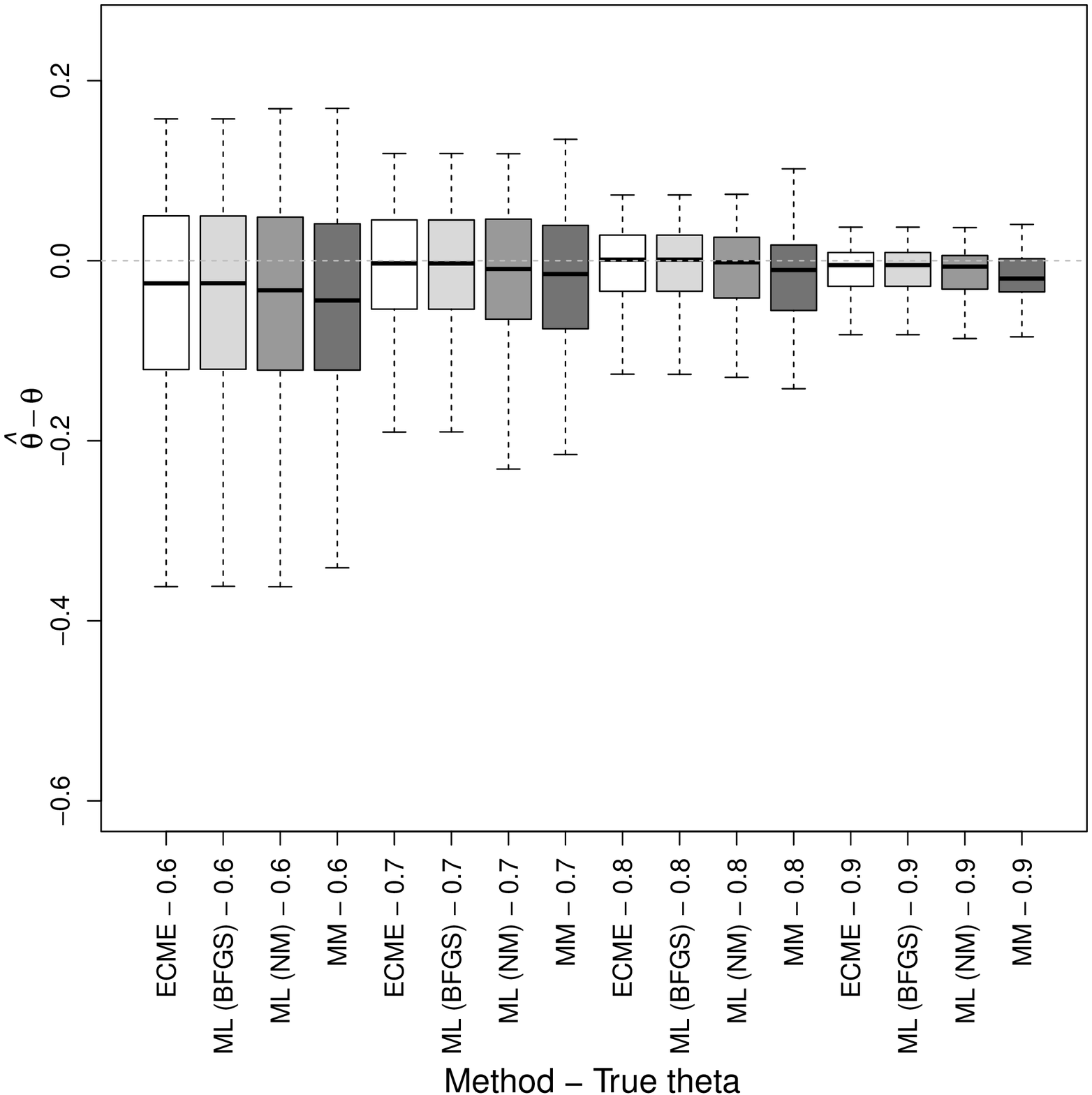}}
\subfigure[$d=3$ and $n=1000$\label{fig:biasd3n1000}]
{\includegraphics[width=0.328\textwidth]{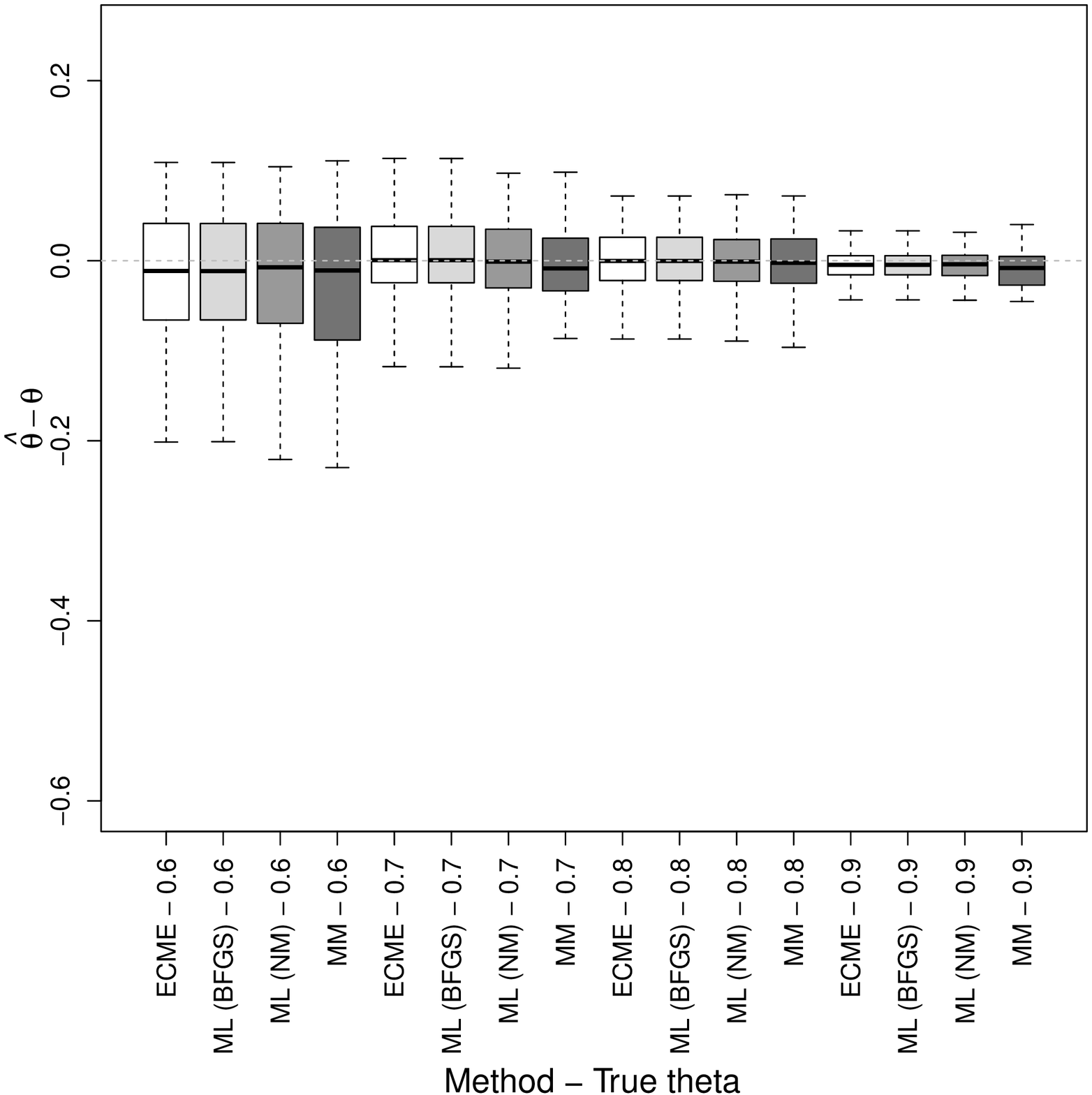}}

\subfigure[$d=5$ and $n=200$ \label{fig:biasd5n200}]
{\includegraphics[width=0.328\textwidth]{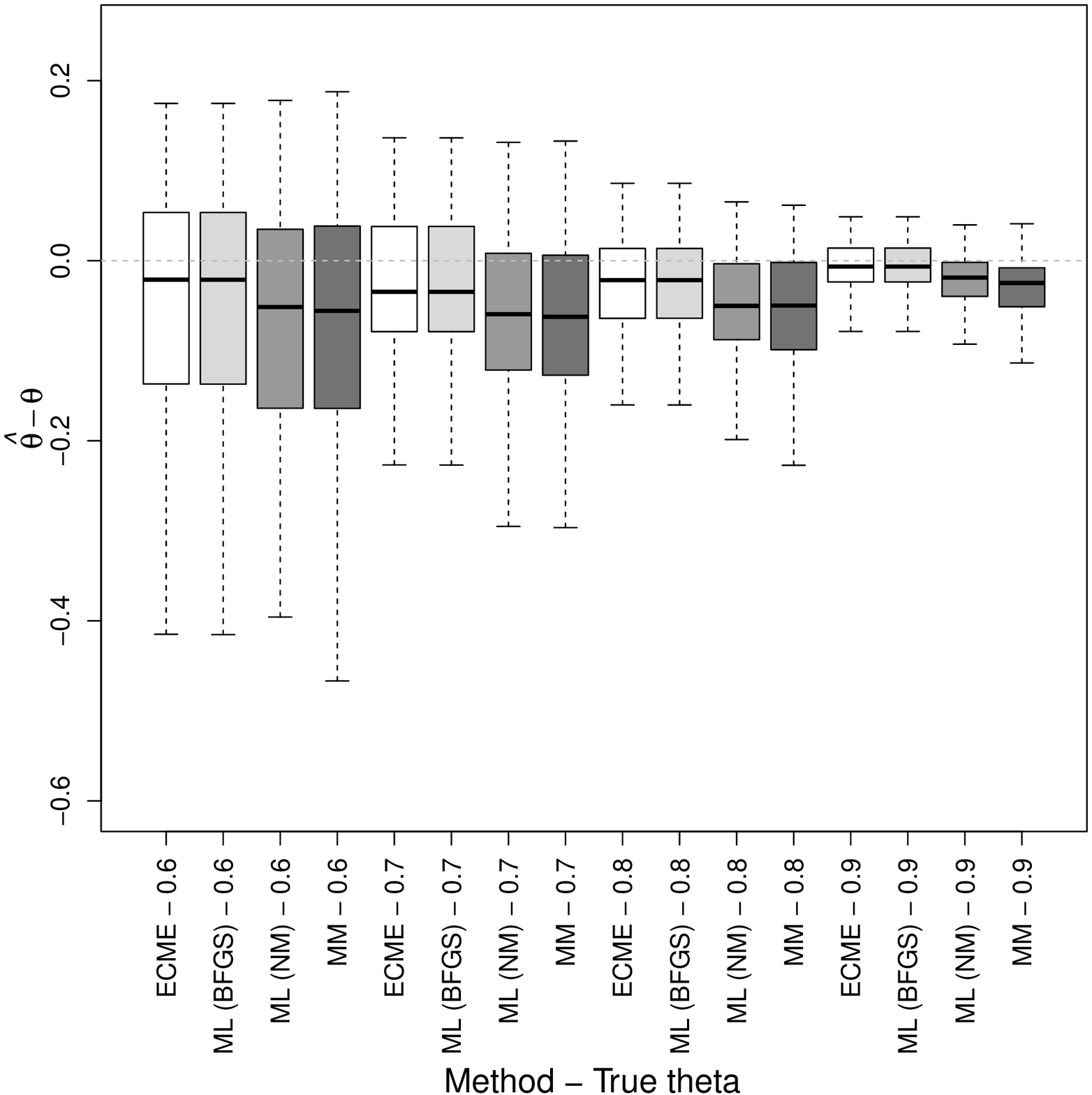}}
\subfigure[$d=5$ and $n=500$\label{fig:biasd5n500}]
{\includegraphics[width=0.328\textwidth]{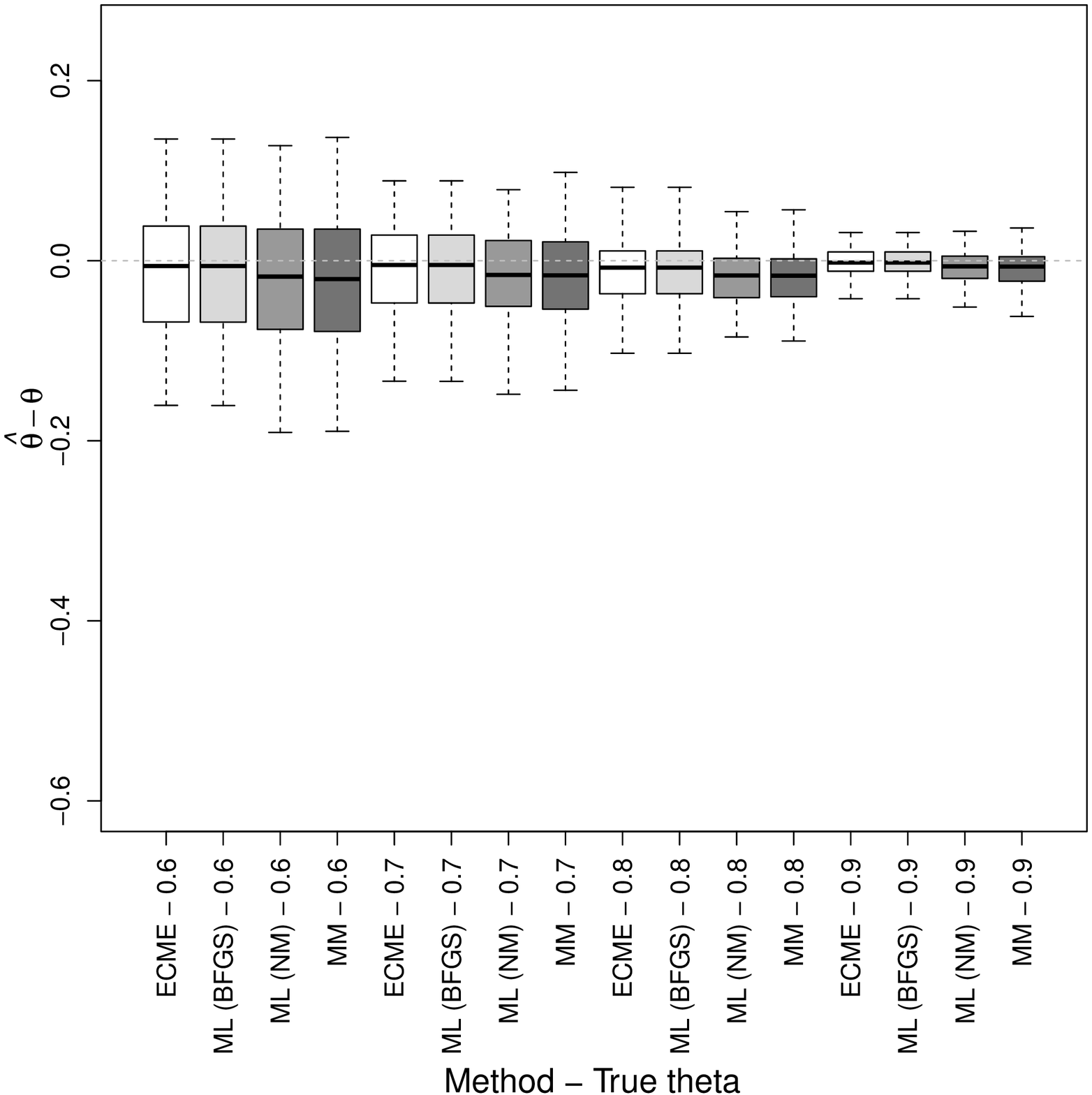}}
\subfigure[$d=5$ and $n=1000$\label{fig:biasd5n1000}]
{\includegraphics[width=0.328\textwidth]{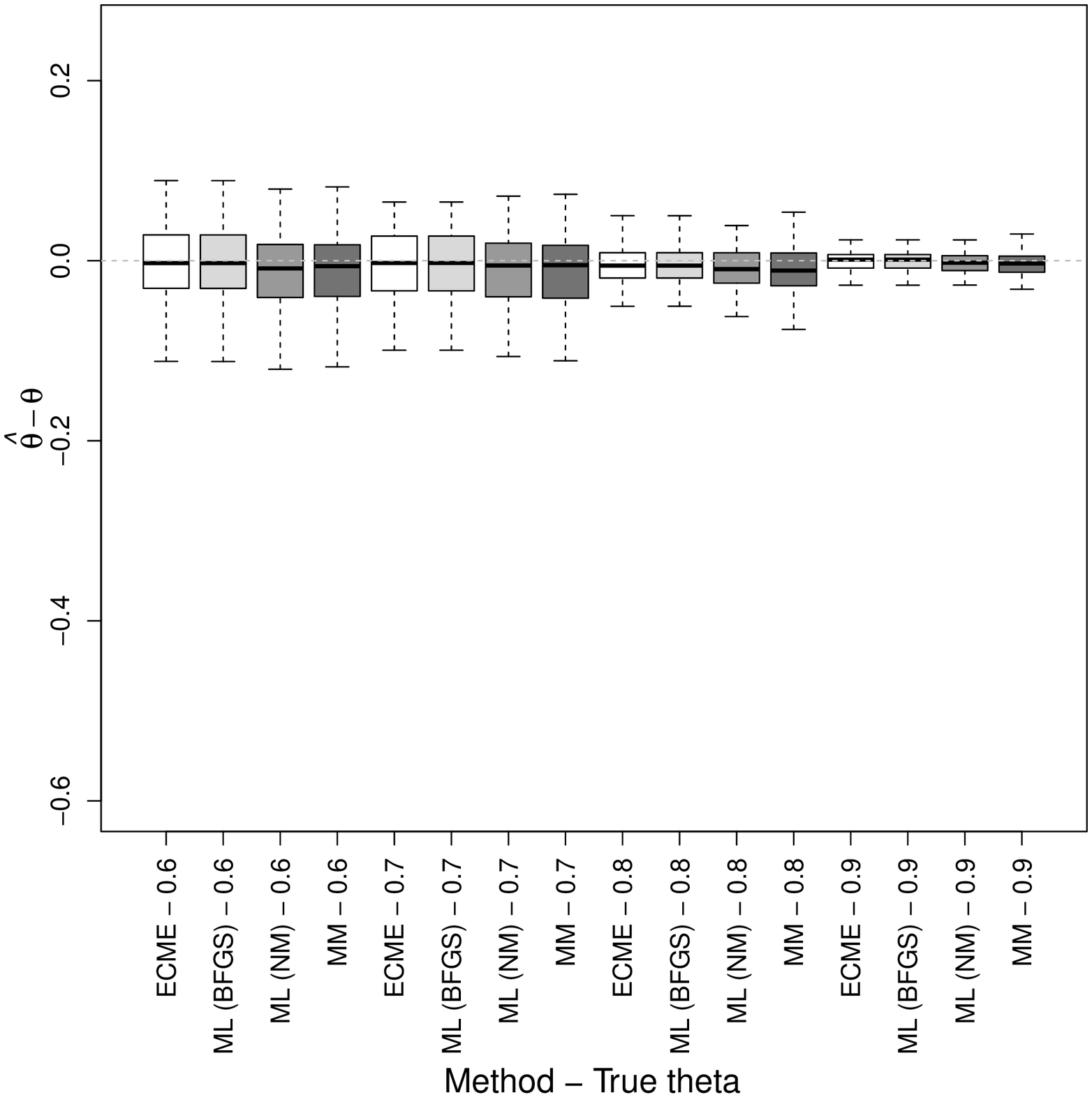}}
\caption{Box-plots of the differences $(\hat{\theta}-\theta)$ for each pair $\left(n,d\right)$.
The box-plot in each subfigure summarize the results over 100 replications and refers to a combination between $\theta$ and the estimation method considered.}
\label{fig:Theta.Bias}
\end{figure}
\begin{figure}[!ht]
\centering
\subfigure[$d=2$ and $n=200$ \label{fig:msed2n200}]
{\includegraphics[width=0.328\textwidth]{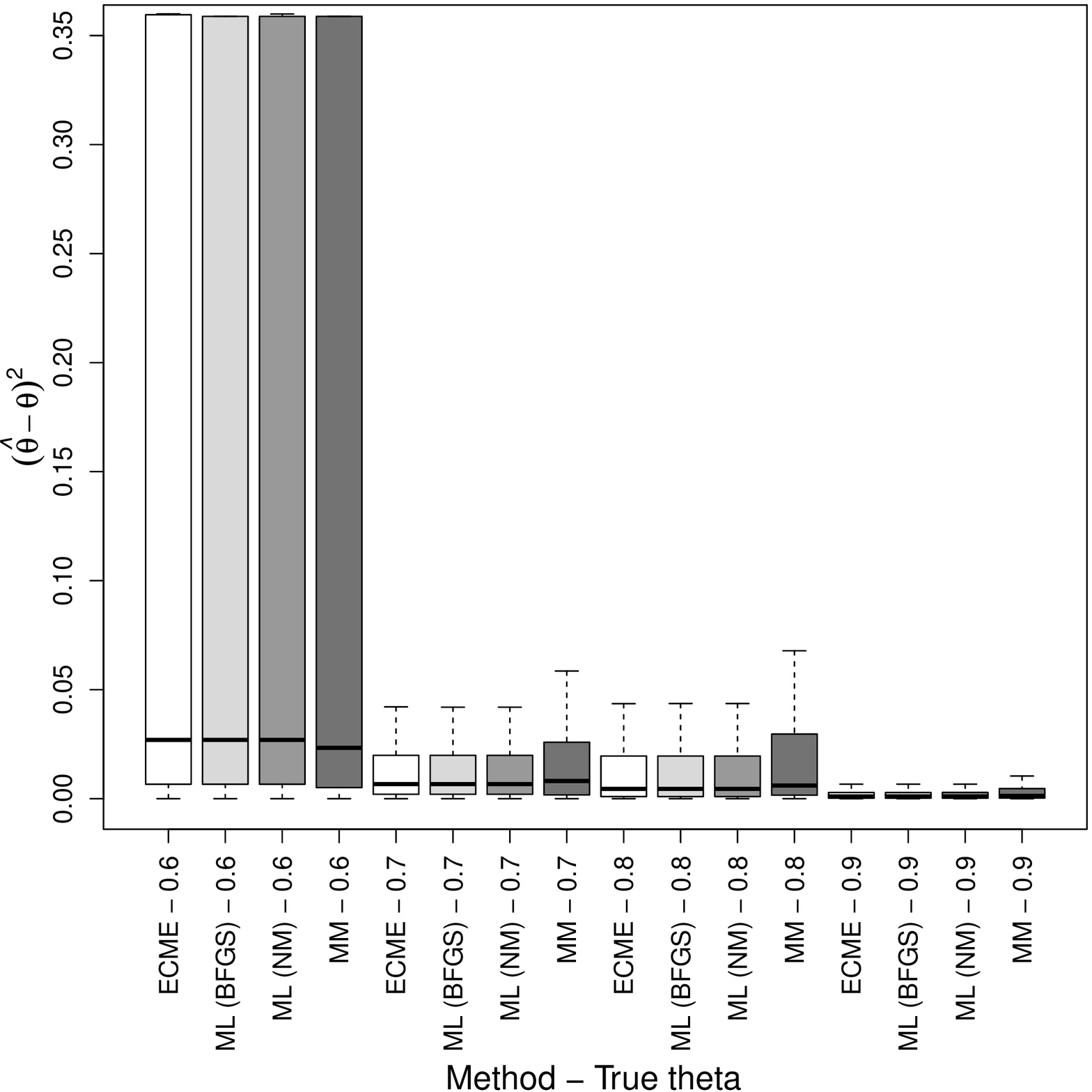}}
\subfigure[$d=2$ and $n=500$\label{fig:msed2n500}]
{\includegraphics[width=0.328\textwidth]{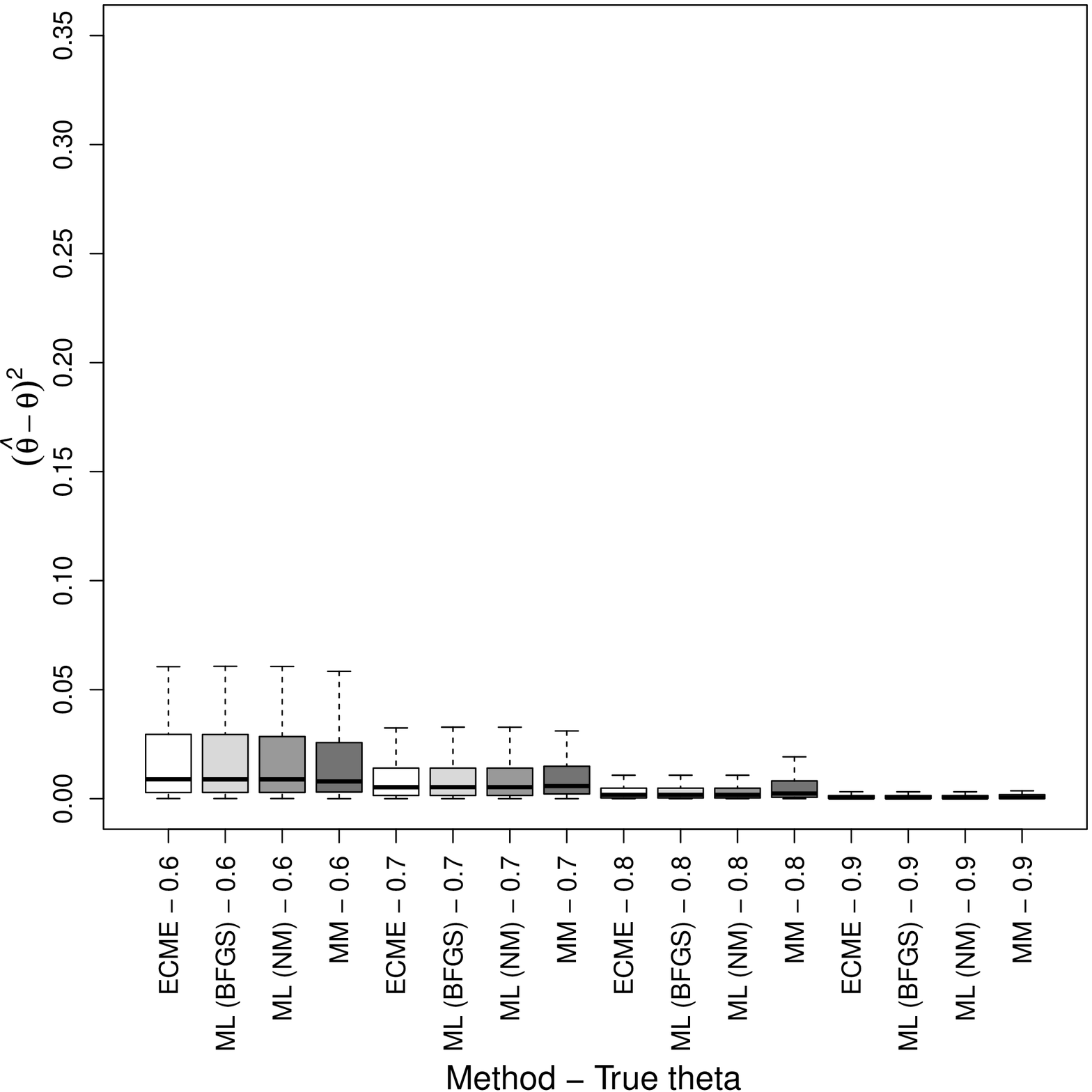}}
\subfigure[$d=2$ and $n=1000$\label{fig:msed2n1000}]
{\includegraphics[width=0.328\textwidth]{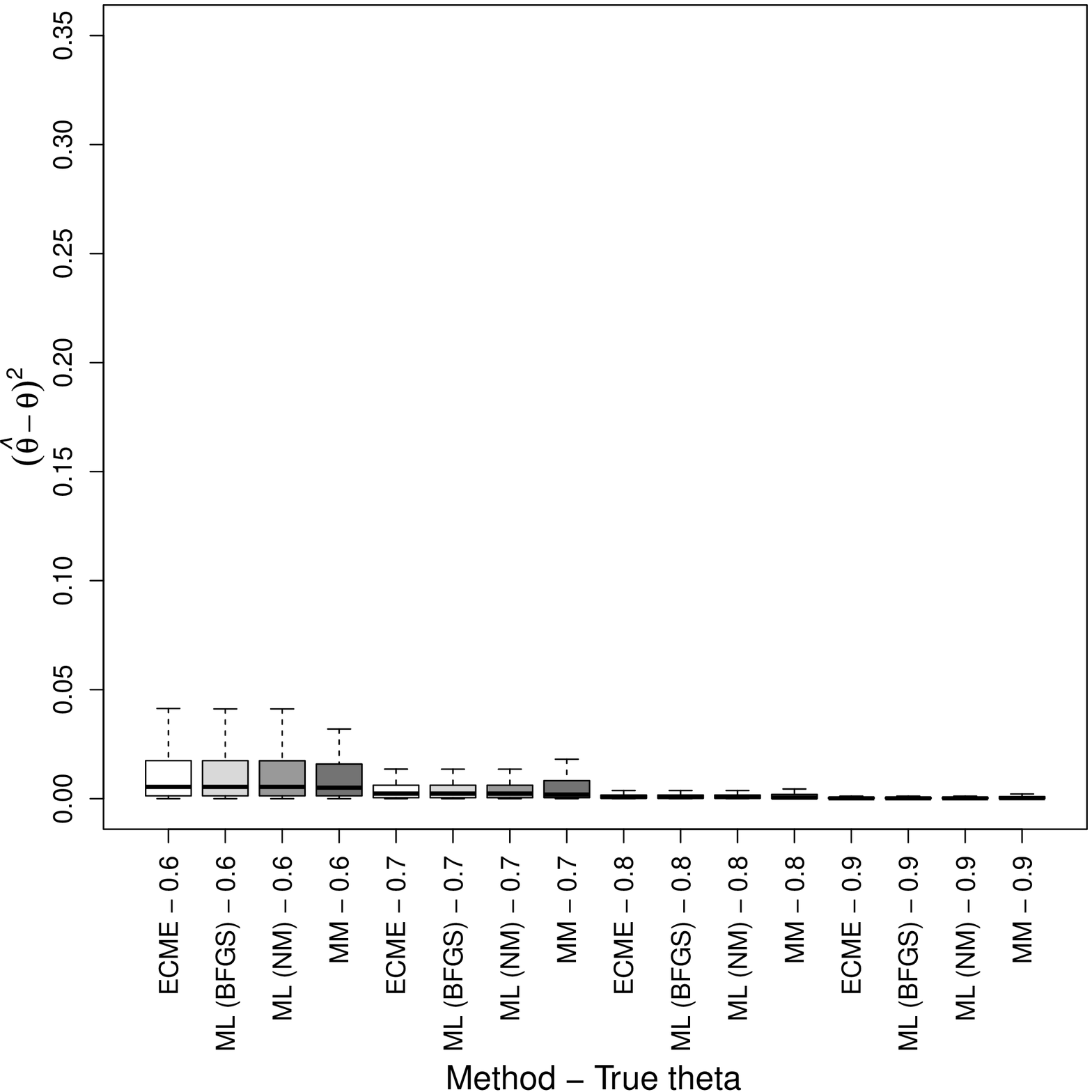}}

\subfigure[$d=3$ and $n=200$ \label{fig:msed3n200}]
{\includegraphics[width=0.328\textwidth]{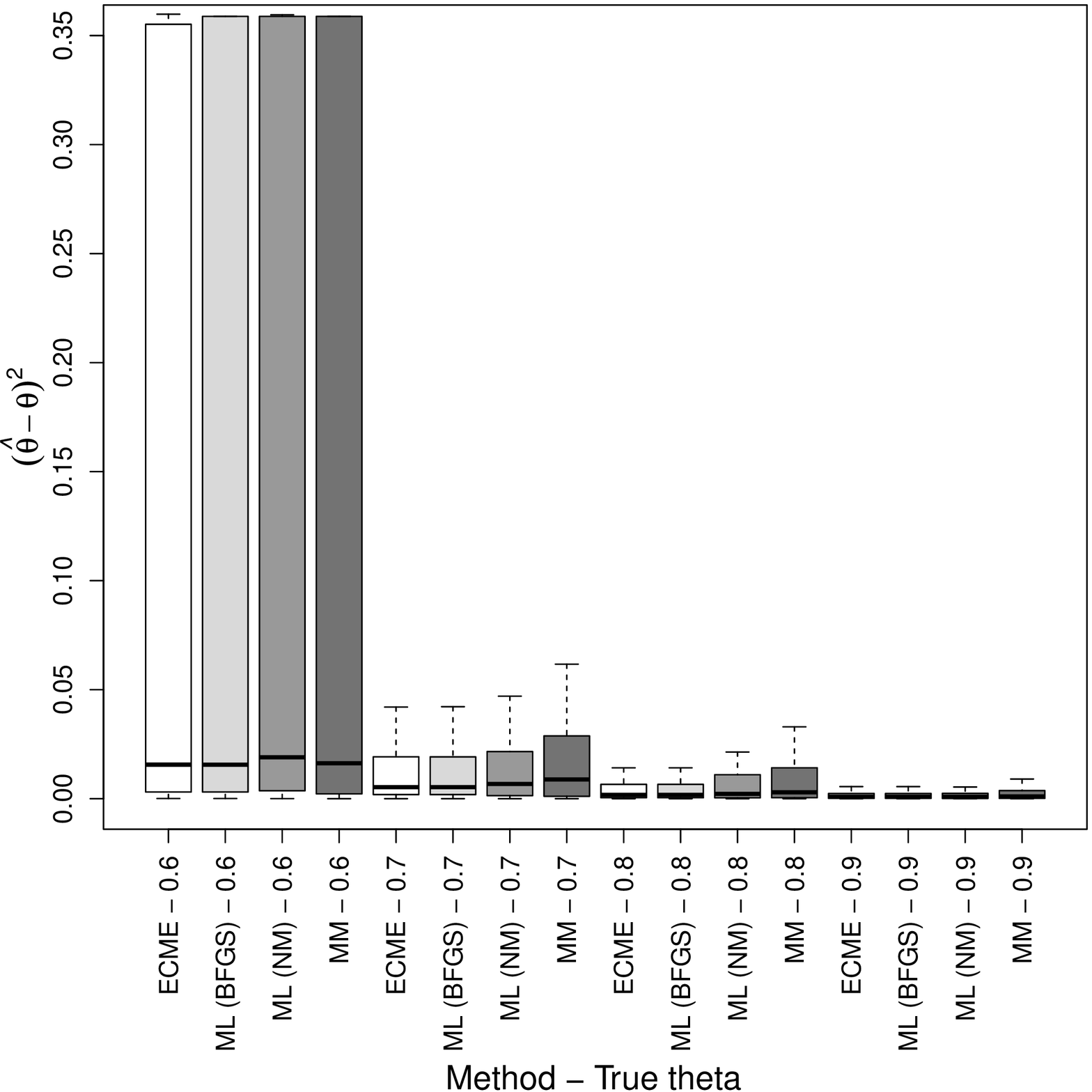}}
\subfigure[$d=3$ and $n=500$\label{fig:msed3n500}]
{\includegraphics[width=0.328\textwidth]{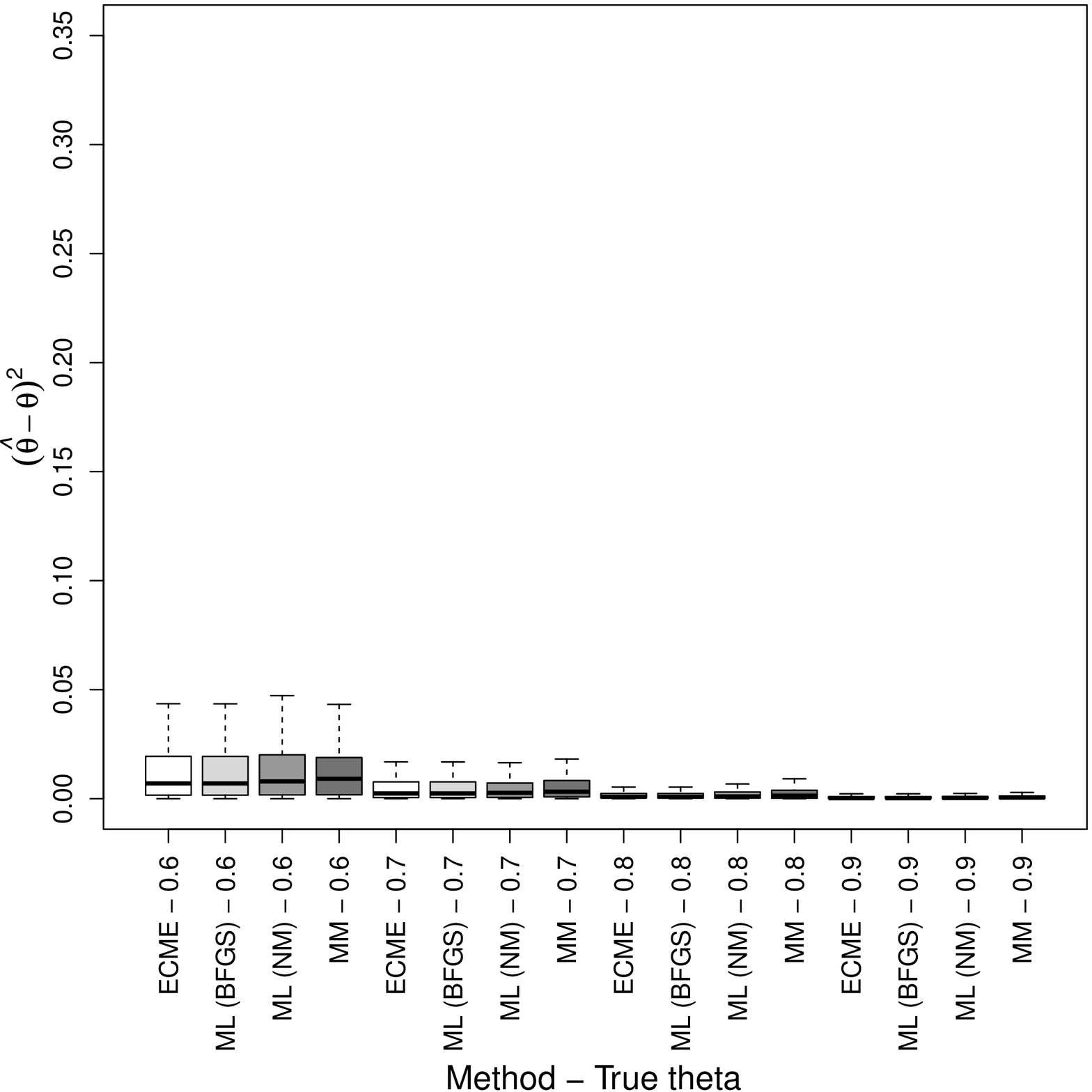}}
\subfigure[$d=3$ and $n=1000$\label{fig:msed3n1000}]
{\includegraphics[width=0.328\textwidth]{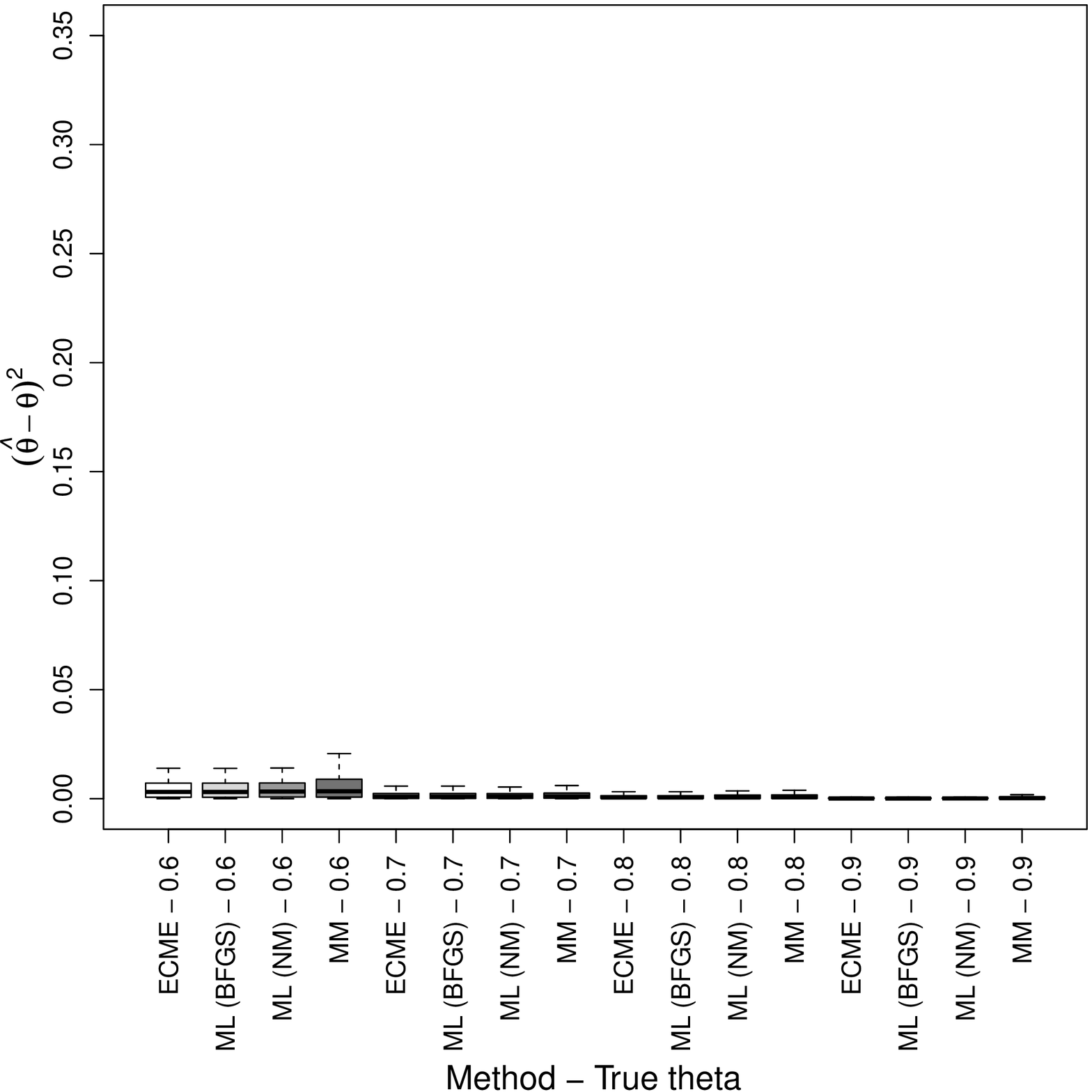}}

\subfigure[$d=5$ and $n=200$ \label{fig:msed5n200}]
{\includegraphics[width=0.328\textwidth]{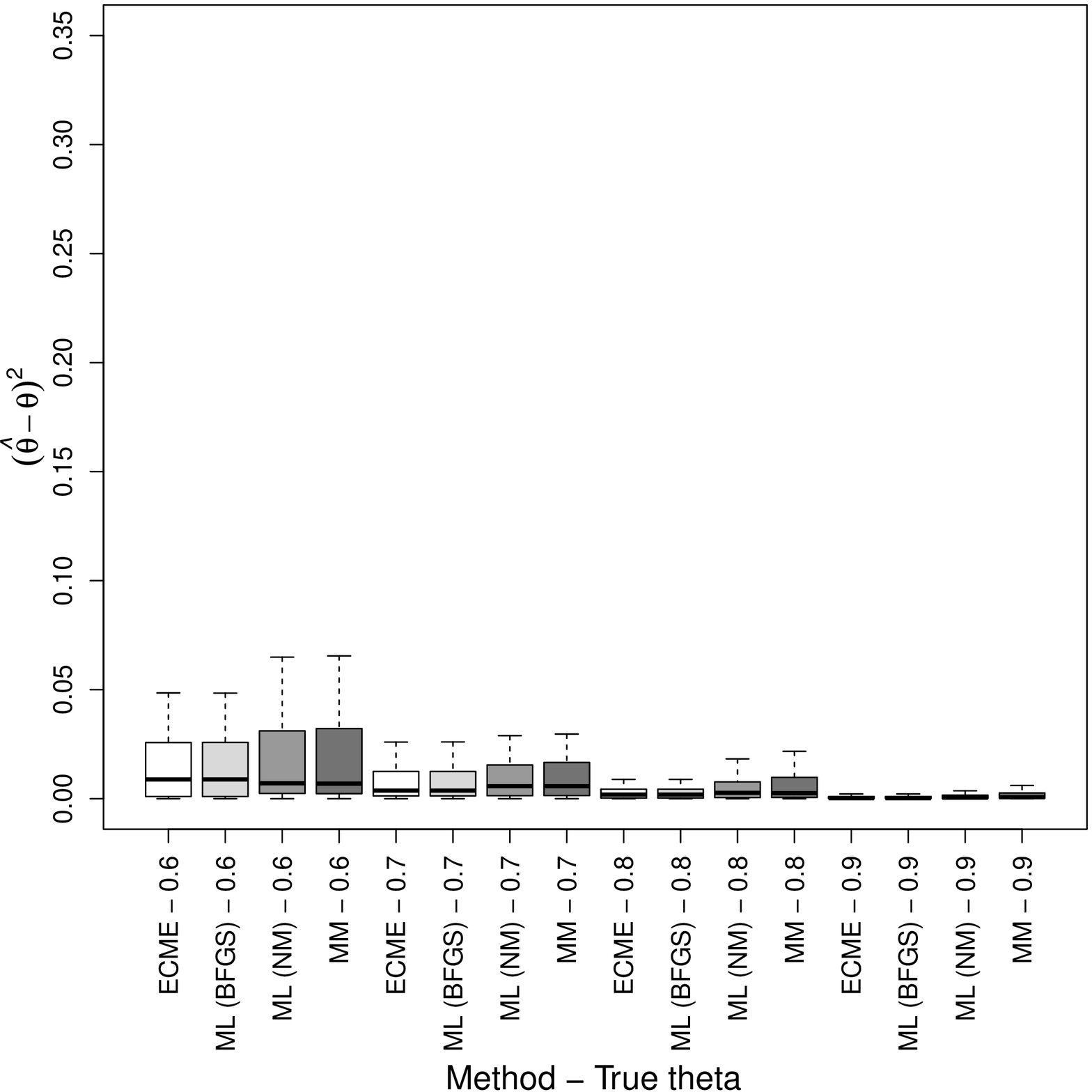}}
\subfigure[$d=5$ and $n=500$\label{fig:msed5n500}]
{\includegraphics[width=0.328\textwidth]{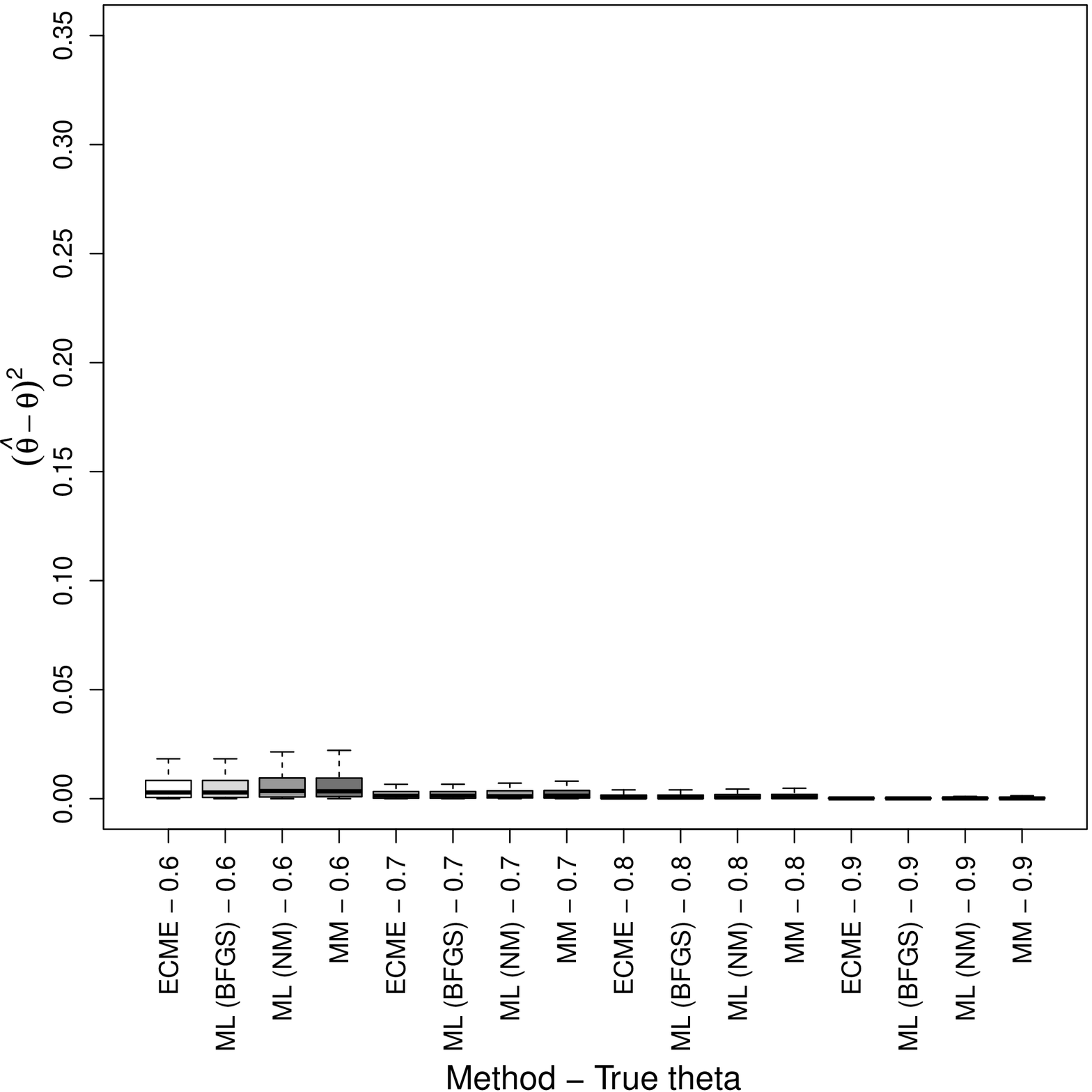}}
\subfigure[$d=5$ and $n=1000$\label{fig:msed5n1000}]
{\includegraphics[width=0.328\textwidth]{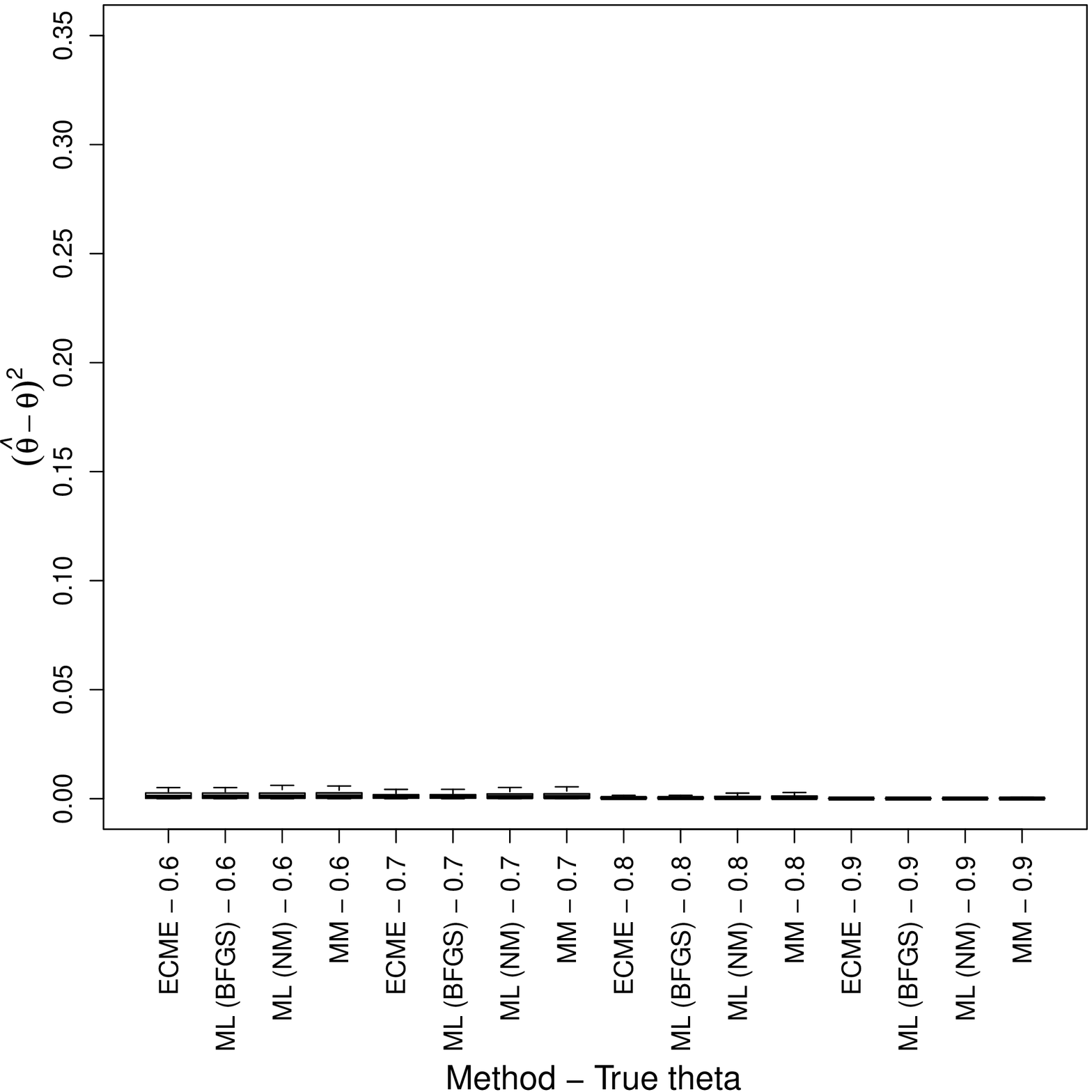}}
\caption{Box-plots of the differences $(\hat{\theta}-\theta)^2$ for each pair $\left(n,d\right)$.
The box-plot in each subfigure summarize the results over 100 replications and refers to a combination between $\theta$ and the estimation method considered.}
\label{fig:Theta.MSE}
\end{figure}
\begin{table}[!ht]
\caption{Log-likelihoods values averaged over 100 replications.}
\label{tab:log-likelihoods}
\centering
\begin{tabular}{rrr rrr}
\toprule
   	&	    &	          &	       \multicolumn{2}{c}{ML direct}	    &	                \\
			\cline{4-5}
$n$	&	$d$	&	$\theta$	&	 Nelder-Mead 	&	BFGS 	&	ECME 	\\	
\midrule
200	&	2	&	0.6	&	-649.860	&	-649.860	&	-649.860	\\	
	&		&	0.7	&	-674.456	&	-674.456	&	-674.456	\\	
	&		&	0.8	&	-699.929	&	-699.928	&	-699.928	\\	
	&		&	0.9	&	-744.023	&	-744.023	&	-744.023	\\[1mm]	
	&	3	&	0.6	&	-973.686	&	-973.680	&	-973.680	\\	
	&		&	0.7	&	-1009.362	&	-1009.341	&	-1009.341	\\	
	&		&	0.8	&	-1051.155	&	-1051.113	&	-1051.113	\\	
	&		&	0.9	&	-1107.200	&	-1107.120	&	-1107.120	\\[1mm]	
	&	5	&	0.6	&	-1620.323	&	-1620.245	&	-1620.245	\\	
	&		&	0.7	&	-1677.022	&	-1676.873	&	-1676.873	\\	
	&		&	0.8	&	-1744.074	&	-1743.715	&	-1743.715	\\	
	&		&	0.9	&	-1844.226	&	-1843.270	&	-1843.270	\\[1mm]	
500	&	2	&	0.6	&	-1623.352	&	-1623.353	&	-1623.353	\\	
	&		&	0.7	&	-1686.506	&	-1686.505	&	-1686.505	\\	
	&		&	0.8	&	-1758.846	&	-1758.846	&	-1758.846	\\	
	&		&	0.9	&	-1853.362	&	-1853.369	&	-1853.369	\\[1mm]	
	&	3	&	0.6	&	-2433.665	&	-2433.656	&	-2433.656	\\	
	&		&	0.7	&	-2532.655	&	-2532.629	&	-2532.629	\\	
	&		&	0.8	&	-2639.989	&	-2639.930	&	-2639.930	\\	
	&		&	0.9	&	-2785.396	&	-2785.292	&	-2785.292	\\[1mm]	
	&	5	&	0.6	&	-4066.467	&	-4066.397	&	-4066.397	\\	
	&		&	0.7	&	-4209.660	&	-4209.510	&	-4209.510	\\	
	&		&	0.8	&	-4376.644	&	-4376.347	&	-4376.347	\\	
	&		&	0.9	&	-4613.231	&	-4612.413	&	-4612.413	\\[1mm]	
1000	&	2	&	0.6	&	-3257.490	&	-3257.490	&	-3257.490	\\	
	&		&	0.7	&	-3368.835	&	-3368.835	&	-3368.835	\\	
	&		&	0.8	&	-3513.907	&	-3513.906	&	-3513.906	\\	
	&		&	0.9	&	-3722.987	&	-3722.986	&	-3722.986	\\[1mm]	
	&	3	&	0.6	&	-4887.646	&	-4887.634	&	-4887.634	\\	
	&		&	0.7	&	-5050.221	&	-5050.197	&	-5050.197	\\	
	&		&	0.8	&	-5270.342	&	-5270.311	&	-5270.311	\\	
	&		&	0.9	&	-5558.195	&	-5558.122	&	-5558.122	\\[1mm]	
	&	5	&	0.6	&	-8125.940	&	-8125.868	&	-8125.868	\\	
	&		&	0.7	&	-8418.000	&	-8417.825	&	-8417.825	\\	
	&		&	0.8	&	-8759.827	&	-8759.525	&	-8759.525	\\	
	&		&	0.9	&	-9241.587	&	-9240.833	&	-9240.833	\\	
\bottomrule
\end{tabular}
\end{table} 


As concerns the computational times, for each combination of $\theta$ and the estimation method considered, \figurename~\ref{fig:TIMEmarginal} reports the box-plots of the elapsed times in each of the $3\times 3\times 100=900$ replications obtained marginalizing with respect to $n$ and $d$.
\begin{figure}[!ht]
\centering
  \resizebox{0.6\textwidth}{!}{\includegraphics{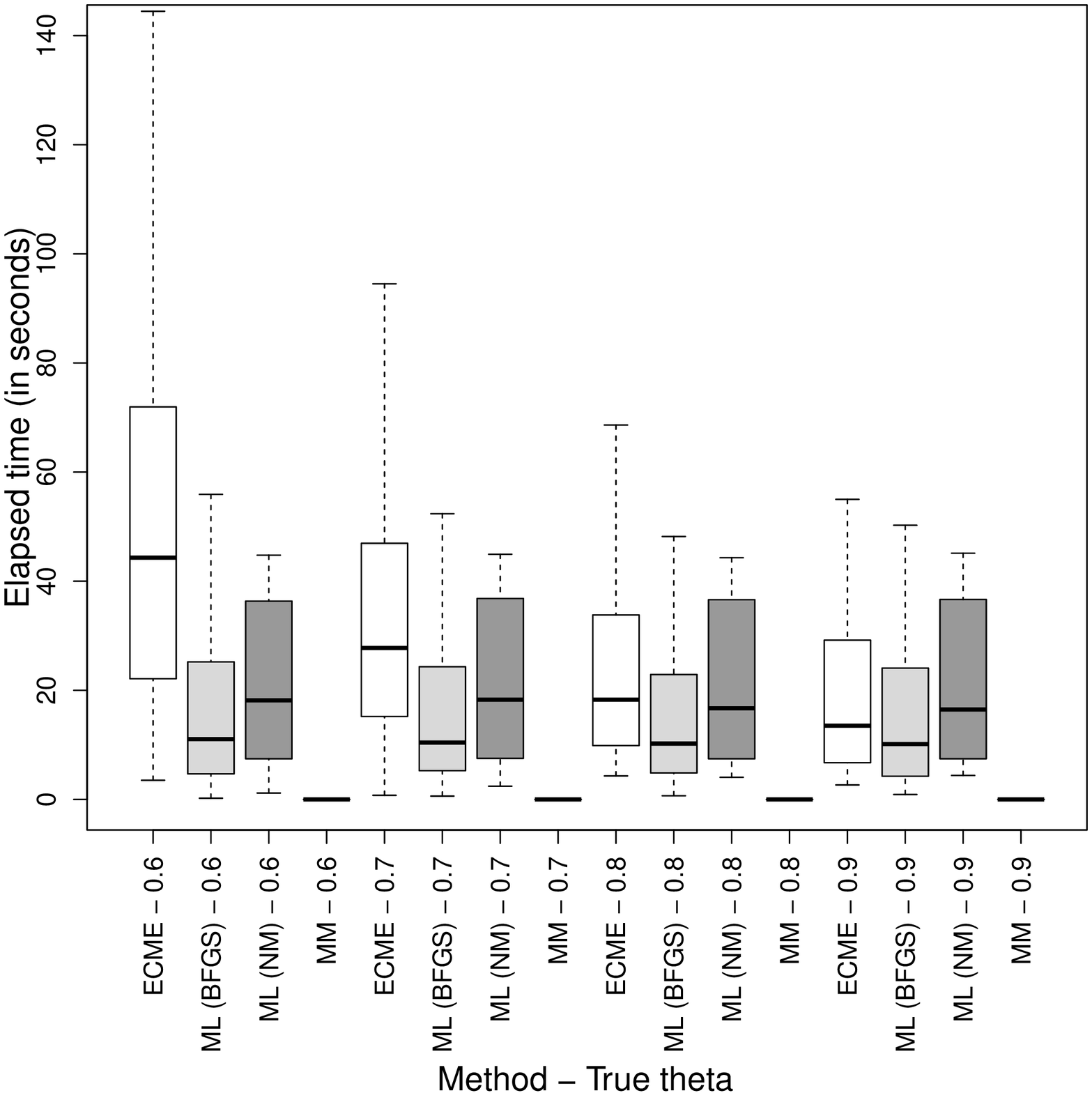}} %
\caption{Elapsed time (in seconds) on each combination between $\theta$ and the estimation method considered.
Each box plot refers to $3\times 3\times 100=900$ values obtained summing out with respect to $d\in \left\{2,3,5\right\}$ and $n\in \left\{200,500,1000\right\}$.}
\label{fig:TIMEmarginal}       
\end{figure}
The MM provides the best (lowest) elapsed times, regardless of the true underlying $\theta$.
Among the algorithms considered to obtain ML estimates, the BFGS is the best one, followed by the Nelder-Mead and ECME algorithms.
The elapsed times from the ECME algorithm seem to be a decreasing function of $\theta$, tying those of the BFGS algorithm when $\theta=0.9$.
Another interesting result can be noted by looking at \figurename~\ref{fig:TIMEd5} where we report, analogously to \figurename~\ref{fig:TIMEmarginal}, the box plots of the elapsed times only in the case $d=5$, when varying the sample size.
\begin{figure}[!ht]
\centering
\subfigure[$d=5$ and $n=200$ \label{fig:d5n200}]
{\includegraphics[width=0.328\textwidth]{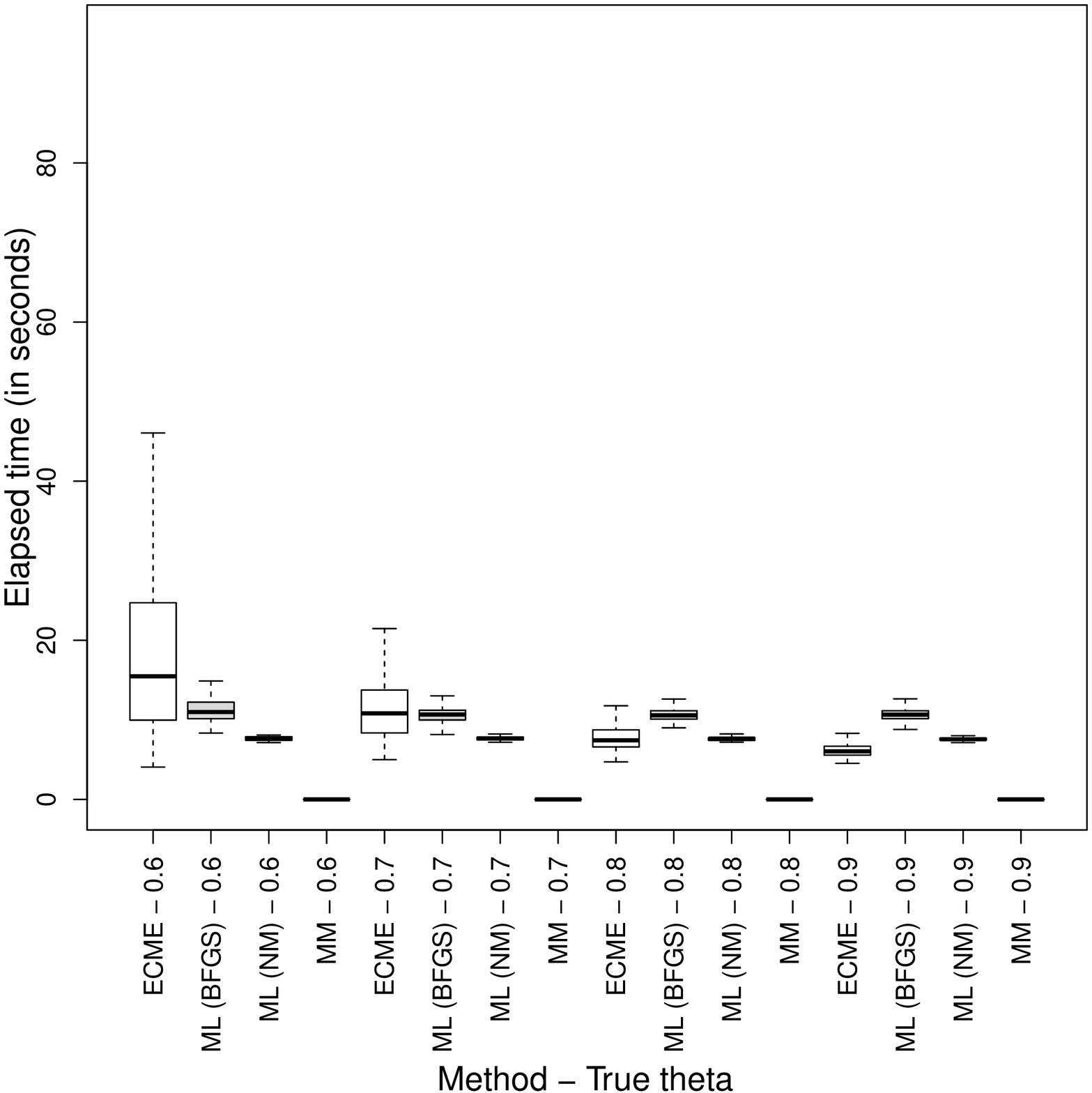}}
\subfigure[$d=5$ and $n=500$\label{fig:d5n500}]
{\includegraphics[width=0.328\textwidth]{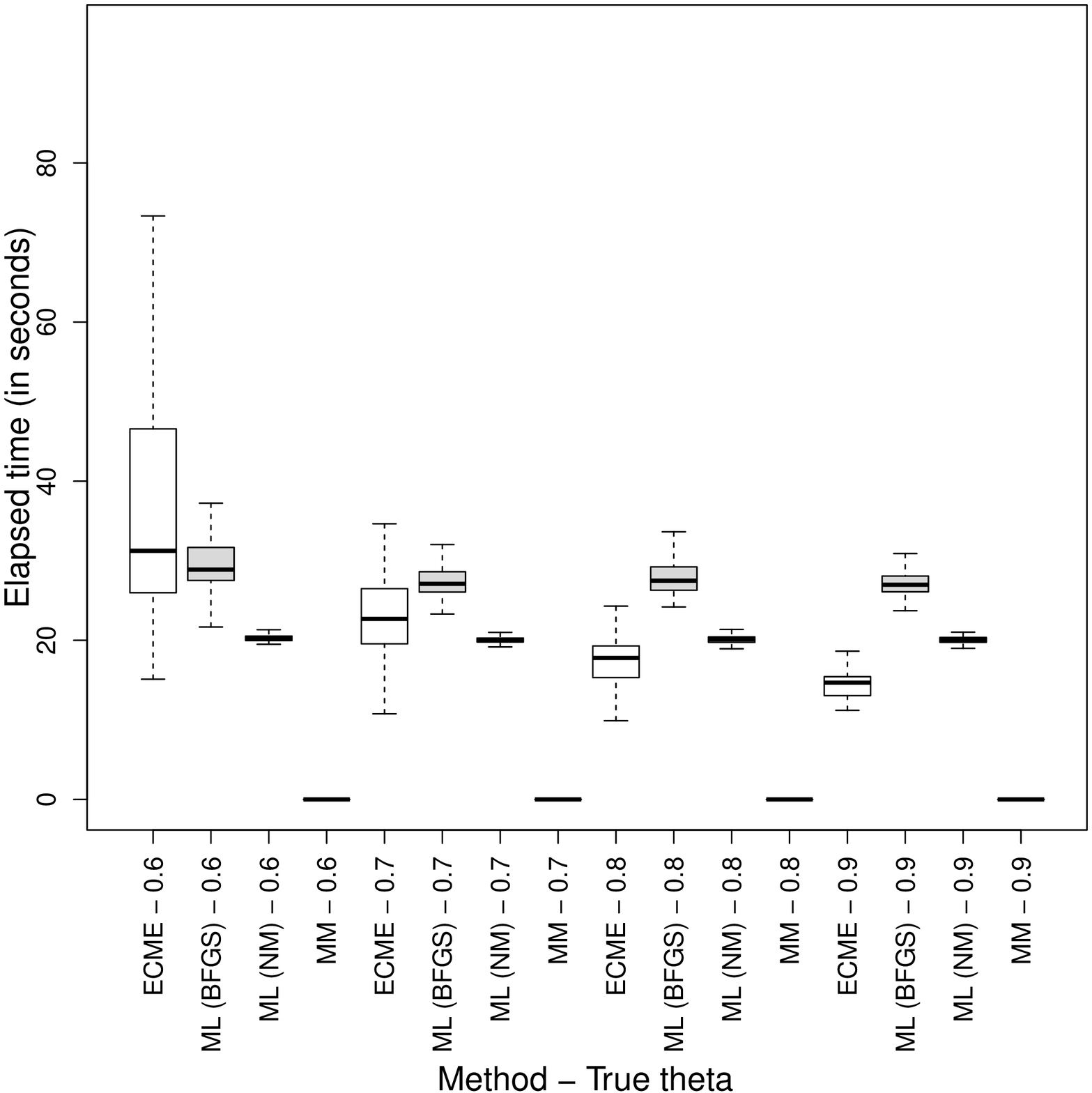}}
\subfigure[$d=5$ and $n=1000$\label{fig:d5n1000}]
{\includegraphics[width=0.328\textwidth]{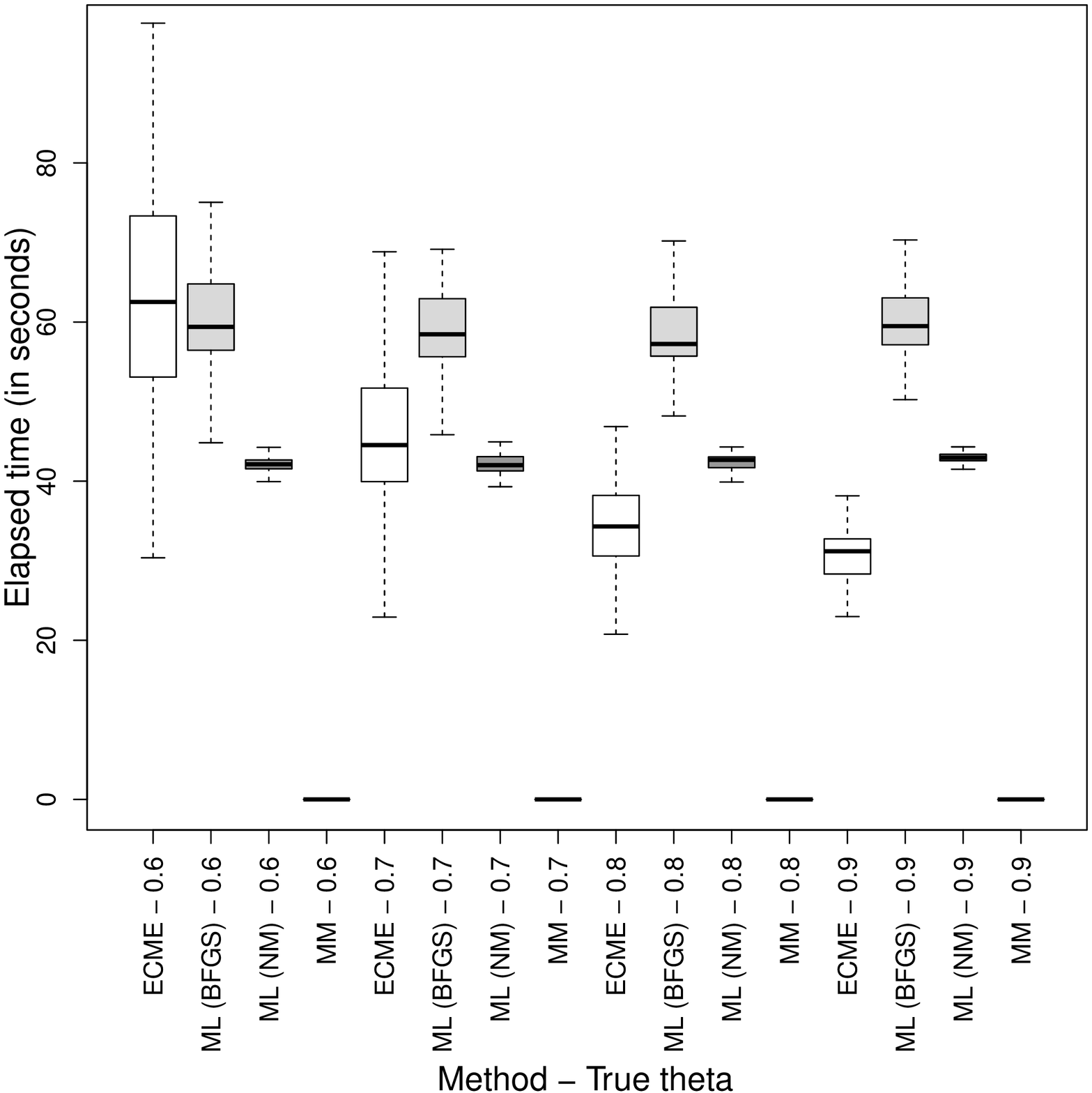}}
\caption{Elapsed time (in seconds) in the case $d=5$ on each combination between $\theta$ and the estimation method considered, when varying $n\in \left\{200,500,1000\right\}$.
Each box plot refers to 100 replications.}
\label{fig:TIMEd5}
\end{figure}
Indeed, the ECME algorithm provides the best elapsed times for high values of $\theta$ (say $\theta>0.7$), regardless of the sample size.
Other simulations with greater values of $d$, not reported here for the sake of space, have shown that the ECME is the fastest algorithm to obtain ML estimates, and this is true regardless of $\theta$.

Summarizing, giving prominence to parameter recovery, we suggest to use the ML approach.
Among the three algorithms considered to obtain ML estimates, we suggest direct ML via the BFGS algorithm for low dimensions (say $d<4$) and the ECME algorithm for high dimensions (say $d>10$).
In the other cases, the choice also depends on the value of the true but unknown inflation parameter.
A possible solution may be to preliminary run the method of moments and, if the MM-estimated value of $\theta$ is high, then using the ECME algorithm, and using the BFGS algorithm otherwise.


\subsection{Model selection: AIC versus BIC}
\label{subsec:Model selection: AIC versus BIC}

Model selection is usually required to select among a set of candidate models with differing number of parameters.
The classical way to perform this selection is via the computation of a convenient (likelihood-based) model selection criterion.
Two widely considered choices, that in our formulation need to be maximized, are the Akaike information criterion (AIC; \citealp{Akai:Anew:1974}) 
\begin{equation*}
\text{AIC} = 2l\left(\widehat{\bPsi}\right)-2\times\#\text{par},
\end{equation*}
and the Bayesian information criterion \citep[BIC;][]{Schw:Esti:1978}
\begin{equation*}
\text{BIC} = 2l\left(\widehat{\bPsi}\right)-\ln n\times \#\text{par},
\end{equation*}
where $\#\text{par}$ is the overall number of free parameters in the model.
AIC and BIC are similar, but with a different penalty for the number of parameters.
BIC penalizes free parameters more strongly than AIC if $n>7$.
For both, the models being compared need not to be nested, unlike the case when models are being compared using, for example, a likelihood ratio test (\citealp{Gres:Punz:Clos:2013} and \citealp{Punz:Brow:McNi:Hypo:2016}).
A comparison of AIC and BIC is given by \citet[][Section~6.3--6.4]{Burn:Ande:Mode:2013}, with follow-up remarks by \citet{Burn:Ande:Mult:2004}. 
BIC is argued to be appropriate for selecting the ``true model'' (i.e.~the process that generated the data) from the set of candidate models.
To be specific, if the ``true model'' is in the set of candidates, then BIC will select it with probability 1, as $n \rightarrow \infty$; in contrast, when selection is done via AIC, the probability can be less than 1 (\citealp[][Section~6.3--6.4]{Burn:Ande:Mode:2013}, \citealp{Vrie:Mode:2012}, and \citealp{Aho:Derr:Pete:Mode:2014}).
Proponents of AIC argue that this issue is negligible, because the ``true model'' is virtually never in the candidate set. 
If the ``true model'' is not in the candidate set, then the most that we can hope to do is select the model that best approximates the ``true model''. 
AIC is appropriate for finding the best approximating model, under certain assumptions.

In this section we compare the performance of AIC and BIC.
We consider two experimental factors: the dimension ($d\in \left\{2,3\right\}$) and the data generating process (DGP).
The sample size is instead fixed to $n=1000$. 
As DGPs, in addition to the MN and MTIN, we consider several multivariate elliptical heavy-tailed distributions used in the literature.
They are the M$t$, MLN, MCN, MSVG, MSH, SNIG, and MSGH distributions given in Section~\ref{subsec:Representations}, with the MLN denoting the multivariate leptokurtic normal distribution introduced by \citet{Bagn:Punz:Zoia:Them:2016}. 
All the DGPs share a zero mean vector ($\bmu=\bzero$) and an identity scale matrix ($\bSigma=\bI$).
The remaining parameter(s) governing the tail-weight of the heavy-tailed competing distributions are: $\nu=4.5$ (degrees of freedom) for the M$t$, $\theta=0.99$ (inflation parameter) for the MTIN, $\beta=6$ (excess kurtosis) for the MLN, $\alpha=0.9$ and $\eta=5$ (proportion of good points and degree of contamination) for the MCN, $\lambda=\psi=3$ for the MSVG, $\chi=0.5$ and $\psi=3$ for the MSH, $\chi=3$ and $\psi=0.2$ for the SNIG, and $\lambda=\chi=\psi=0.1$ for the MSGH; for details about the parameters $\lambda$, $\chi$ and $\psi$ of the distributions within the MSGH family, refer to \appendixname~\ref{app:SGHD}.

For each combination of the simulation factors, one-hundred replications are considered, yielding a total of $9\times 2\times 100=1800$ generated datasets.
For each, we fit all the models in the considered set of DGPs.
Parameters are estimated based on the ML approach.
We adopt the ECME algorithm for the MTIN distribution. 
As concerns the competing models, we use the \texttt{WML.MLN()} function of the code available at \url{http://www.olddei.unict.it/punzo/Rcode.htm} to fit the MLN, the \texttt{CNmixt()} function of the \textbf{ContaminatedMixt} package \citep{Punz:Mazz:McNi:Cont:2018,R:ContaminatedMixt} to fit the MCN, and the \textbf{ghyp} package \citep{ghyp} to fit the remaining models.

\tablename s~\ref{tab:AICBIC_d=2} and \ref{tab:AICBIC_d=3} show the percentage of times AIC and BIC select each model in the cases $d=2$ and $d=3$, respectively.
Results are organized as a contingency table where the true DGP is given by column and the fitted models by row.
Each cell of the contingency table reports two selection counts referred to AIC (on the top) and BIC (on the bottom). 
The cells on the diagonal reports a sort of true positive percentage (TPP), measuring the percentage of times that each criterion is able to discover the true DGP.
We can note how, regardless of $d$, the criteria are able enough to recognize the true underlying DGP being the counts mainly concentrated on the diagonal cells; however, for some DGPs within the MSGH family, such as the MSVG and MSH, their considerable similarity makes discovering the true model more difficult.
This is probably due to the considered parameterizations under the MSVG and MSH DGPs.
The last column of each table shows the average number of times a model different from the underlying DGP is selected; this can be meant as a sort of false positive percentage (FPP). 
The ideal situation for AIC and BIC should have zero off-diagonal values.
\newcolumntype{C}{>{\centering\arraybackslash}p{10cm}}
\begin{table}[!ht]	
\centering																				
 \resizebox{\textwidth}{!}{
 \begin{tabular}{lc*{10}{>{\centering\arraybackslash}p{.065\linewidth}}}
\toprule													
	&		&		\multicolumn{10}{c}{True}																																		\\	
\cline{3-11}																			
Fitted	&		&		MN	&		Mt	&		MTIN	&		MLN	&		MCN	&		MSVG	&		MSH	&		SNIG	&		MSGH	&		FPP	\\	
\midrule																																		
MN	&	AIC	&	\cellcolor{lightgray}	91	&		0	&		0	&		0	&		0	&		0	&		0	&		0	&		0	&		0.000	\\	
	&	BIC	&	\cellcolor{lightgray}	100	&		0	&		0	&		0	&		0	&		0	&		0	&		0	&		0	&		0.000	\\	[-3mm]
	&		&			&			&			&			&			&			&			&			&			&			\\	
Mt	&	AIC	&		1	&	\cellcolor{lightgray}	68	&		11	&		0	&		8	&		3	&		4	&		5	&		0	&		4.000	\\	
	&	BIC	&		0	&	\cellcolor{lightgray}	68	&		11	&		0	&		20	&		3	&		4	&		5	&		0	&		5.375	\\	[-3mm]
	&		&			&			&			&			&			&			&			&			&			&			\\	
MTIN	&	AIC	&		1	&		16	&	\cellcolor{lightgray}	88	&		0	&		1	&		7	&		8	&		2	&		0	&		4.375	\\	
	&	BIC	&		0	&		16	&	\cellcolor{lightgray}	88	&		0	&		12	&		7	&		10	&		2	&		0	&		5.875	\\	[-3mm]
	&		&			&			&			&			&			&			&			&			&			&			\\	
MLN	&	AIC	&		6	&		0	&		0	&	\cellcolor{lightgray}	100	&		0	&		12	&		0	&		0	&		0	&		2.250	\\	
	&	BIC	&		0	&		0	&		0	&	\cellcolor{lightgray}	100	&		0	&		14	&		0	&		0	&		0	&		1.750	\\	[-3mm]
	&		&			&			&			&			&			&			&			&			&			&			\\	
MCN	&	AIC	&		0	&		0	&		0	&		0	&	\cellcolor{lightgray}	91	&		6	&		7	&		1	&		0	&		1.750	\\	
	&	BIC	&		0	&		0	&		0	&		0	&	\cellcolor{lightgray}	68	&		0	&		0	&		0	&		0	&		0.000	\\	[-3mm]
	&		&			&			&			&			&			&			&			&			&			&			\\	
MSVG	&	AIC	&		1	&		0	&		0	&		0	&		0	&	\cellcolor{lightgray}	56	&		18	&		3	&		0	&		2.750	\\	
	&	BIC	&		0	&		0	&		0	&		0	&		0	&	\cellcolor{lightgray}	56	&		18	&		4	&		1	&		2.875	\\	[-3mm]
	&		&			&			&			&			&			&			&			&			&			&			\\	
MSH	&	AIC	&		0	&		1	&		0	&		0	&		0	&		10	&	\cellcolor{lightgray}	33	&		0	&		0	&		1.375	\\	
	&	BIC	&		0	&		1	&		0	&		0	&		0	&		14	&	\cellcolor{lightgray}	37	&		0	&		0	&		1.875	\\	[-3mm]
	&		&			&			&			&			&			&			&			&			&			&			\\	
SNIG	&	AIC	&		0	&		15	&		0	&		0	&		0	&		6	&		30	&	\cellcolor{lightgray}	84	&		0	&		6.375	\\	
	&	BIC	&		0	&		15	&		0	&		0	&		0	&		6	&		31	&	\cellcolor{lightgray}	89	&		0	&		6.500	\\	[-3mm]
	&		&			&			&			&			&			&			&			&			&			&			\\	
MSGH	&	AIC	&		0	&		0	&		1	&		0	&		0	&		0	&		0	&		5	&	\cellcolor{lightgray}	100	&		0.750	\\	
	&	BIC	&		0	&		0	&		1	&		0	&		0	&		0	&		0	&		0	&	\cellcolor{lightgray}	99	&		0.125	\\	
\bottomrule
\end{tabular}			
}																		
\caption{Percentage of times AIC and BIC select each model.
The true DGP is shown by column, while the fitted models are given by row.
The last column provides the false positive percentage (FPP).}		
\label{tab:AICBIC_d=2}																			
\end{table}	
\newcolumntype{C}{>{\centering\arraybackslash}p{10cm}}
\begin{table}[!ht]	
\centering																				
\resizebox{\textwidth}{!}{
 \begin{tabular}{lc*{10}{>{\centering\arraybackslash}p{.065\linewidth}}}
\toprule													
	&		&		\multicolumn{10}{c}{True}																																		\\	
\cline{3-11}																																								
Fitted	&		&		MN	&		M$t$	&		MTIN	&		MLN	&		MCN	&		MSVG	&		MSH	&		SNIG	&		MSGH	&	FPP	\\	
\midrule																																	
MN	&	AIC	&	\cellcolor{lightgray}	96	&		0	&		0	&		0	&		0	&		0	&		0	&		0	&		0	&	0.000	\\	
	&	BIC	&	\cellcolor{lightgray}	100	&		0	&		0	&		0	&		0	&		0	&		0	&		0	&		0	&	0.000	\\	[-3mm]
	&		&			&			&			&			&			&			&			&			&			&			\\
M$t$	&	AIC	&		0	&	\cellcolor{lightgray}	77	&		5	&		0	&		1	&		1	&		0	&		0	&		1	&	1.000	\\	
	&	BIC	&		0	&	\cellcolor{lightgray}	78	&		5	&		0	&		6	&		1	&		0	&		4	&		6	&	2.750	\\	[-3mm]
	&		&			&			&			&			&			&			&			&			&			&			\\
MTIN	&	AIC	&		1	&		8	&	\cellcolor{lightgray}	94	&		0	&		1	&		3	&		0	&		1	&		1	&	1.875	\\	
	&	BIC	&		0	&		8	&	\cellcolor{lightgray}	95	&		0	&		1	&		3	&		0	&		1	&		1	&	1.750	\\	[-3mm]
	&		&			&			&			&			&			&			&			&			&			&			\\
MLN	&	AIC	&		3	&		0	&		0	&	\cellcolor{lightgray}	97	&		0	&		2	&		0	&		0	&		0	&	0.625	\\	
	&	BIC	&		0	&		0	&		0	&	\cellcolor{lightgray}	100	&		0	&		4	&		0	&		0	&		0	&	0.500	\\	[-3mm]
	&		&			&			&			&			&			&			&			&			&			&			\\
MCN	&	AIC	&		0	&		0	&		1	&		3	&	\cellcolor{lightgray}	98	&		7	&		1	&		0	&		0	&	1.500	\\	
	&	BIC	&		0	&		0	&		0	&		0	&	\cellcolor{lightgray}	93	&		1	&		0	&		0	&		0	&	0.125	\\	[-3mm]
	&		&			&			&			&			&			&			&			&			&			&			\\
MSVG	&	AIC	&		0	&		0	&		0	&		0	&		0	&	\cellcolor{lightgray}	56	&		23	&		1	&		0	&	3.000	\\	
	&	BIC	&		0	&		0	&		0	&		0	&		0	&	\cellcolor{lightgray}	56	&		24	&		1	&		0	&	3.125	\\	[-3mm]
	&		&			&			&			&			&			&			&			&			&			&			\\
MSH	&	AIC	&		0	&		0	&		0	&		0	&		0	&		22	&	\cellcolor{lightgray}	41	&		0	&		0	&	2.750	\\	
	&	BIC	&		0	&		0	&		0	&		0	&		0	&		26	&	\cellcolor{lightgray}	42	&		0	&		0	&	3.250	\\	[-3mm]
	&		&			&			&			&			&			&			&			&			&			&			\\
SNIG	&	AIC	&		0	&		13	&		0	&		0	&		0	&		9	&		31	&	\cellcolor{lightgray}	83	&		0	&	6.625	\\	
	&	BIC	&		0	&		14	&		0	&		0	&		0	&		9	&		34	&	\cellcolor{lightgray}	93	&		0	&	0.000	\\	[-3mm]
	&		&			&			&			&			&			&			&			&			&			&			\\
MSGH	&	AIC	&		0	&		2	&		0	&		0	&		0	&		0	&		4	&		15	&	\cellcolor{lightgray}	98	&	2.625	\\	
	&	BIC	&		0	&		0	&		0	&		0	&		0	&		0	&		0	&		1	&	\cellcolor{lightgray}	93	&	0.125	\\	
\bottomrule
\end{tabular}			
}																		
\caption{Percentage of times AIC and BIC select each model.
The true DGP is shown by column, while the fitted models are given by row.
The last column provides the false positive percentage (FPP).}				
\label{tab:AICBIC_d=3}																			
\end{table}

The first aspect we note is that, as expected, the AIC tends to select less parsimonious models.
As an example, consider 
\tablename~\ref{tab:AICBIC_d=2}.
On the one hand, under a MN-DGP, the BIC always discovers the true model, while the AIC selects 8 times models with an additional parameter (M$t$, MTIN, and MLN), and one time a model with two additional parameters (MSGV).
On the other hand, under a MCN-DGP, the truth is discovered 91 times by the AIC and only 68 times by the BIC.
This happens because the BIC has a propensity for more parsimonious models such as the M$t$ and MTIN, having only one additional parameter with respect to the MN.
Regardless of the considered number of dimensions, another aspect we note is that, for some DGPs within the MSGH family, such as the MSVG and MSH, their considerable similarity makes assessment of the true model challenging.
In summary, it is difficult to establish the best model selection criterion among those considered; thus, we will use both in the real data analyses presented in Section~\ref{sec:Real data analysis}.

\section{Financial applications}
\label{sec:Real data analysis}

Risk management and portfolio selection are interesting financial areas of application of the proposed MTIN distribution.
In these areas, one of the main problems is the modelling of the joint distribution of stock-prices and asset-returns.
So, the models that are commonly considered are inherently multivariate, as stressed by \citet[][Chapter~3]{McNe:Frey:Embr:Quan:2005}, with the MN distribution playing a special rule (\citealp{Krin:Rach:Hoch:Fabo:Esti:2008} and \citealp[][Chapter~14]{Rach:Hoec:Fabo:Foca:Prob:2010}).
However, many empirical studies show that the MN is not appropriate for the distribution of risk factor returns and stock returns (see, e.g., \citealp{Mand:TheV:1963} and \citealp{Fama:Port:1965}).
Two possible motivations for this inappropriateness concern its elliptical symmetry and thin tails \citep{Bing:Kies:Rudi:Semi:2002}.
While the former assumption is not a limitation, as shown by \citet[][Section~3.3]{McNe:Frey:Embr:Quan:2005} via empirical studies, the latter one is more harmful since the MN-tails are not consistent with the empirical heavy tails of the distribution of returns. 
This has motivated numerous proposals for alternative parametric multivariate elliptical heavy-tailed distributions.
One of the most famous proposals in this direction is represented by the multivariate $\alpha$-stable sub-Gaussian (M$\alpha$SSG) distributions \citep[see, e.g.,][]{Krin:Rach:Hoch:Fabo:Esti:2008}. 
Unfortunately, these distributions are so heavy-tailed that the second moment is infinite, a fact that is inconsistent with empirical findings for most financial returns \citep{Bing:Kies:Rudi:Semi:2002} and with the fact that many theoretical models in finance rely on the existence of this moment \citep[][p.~290]{Rach:Hoec:Fabo:Foca:Prob:2010}.
A further drawback of the M$\alpha$SSG distributions is that, with a few exceptions, they do not possess an analytic expression for the pdf \citep[][p.~112]{Krin:Rach:Hoch:Fabo:Esti:2008}.
This is the reason why other multivariate elliptical heavy-tailed distributions, such as those from the MNSM family (see Section~\ref{subsubsec:Multivariate normal scale mixtures}), are typically preferred for financial modelling.


Motivated by the above considerations, the proposed MTIN model is compared, on real financial data, with the same multivariate elliptical heavy-tailed distributions already considered in Section~\ref{subsec:Model selection: AIC versus BIC}, which are also widely used in the financial literature.
Parameters are estimated based on the ML approach, under the common simplifying assumption that returns form iid samples \citep[see, e.g.,][p.~84]{McNe:Frey:Embr:Quan:2005}.
Computational details have been already discussed in Section~\ref{sec:Simulation studies}.
The comparison is made in terms of AIC, BIC, and appropriateness of the fitted models in reproducing the empirical kurtosis.


We fit the competing models to three multivariate time series related to 4 of the 30 large publicly owned companies considered by the Dow Jones index.
The companies are American Express (\textsf{AXP}), Boeing (\textsf{BA}), Intel (\textsf{INTC}), and Microsoft (\textsf{MSFT}).
All the codes needed to replicate the analysis are included in the Supplementary Material.
We consider daily log-returns spanning the period from January 5th, 2015 to December 1th, 2017 ($n = 734$ observations downloadable from \url{http://finance.yahoo.com/}).
The matrix of scatter plots is displayed in \figurename~\ref{fig:d=4}.
\begin{figure}[!ht]
\centering
  \resizebox{0.6\textwidth}{!}{\includegraphics{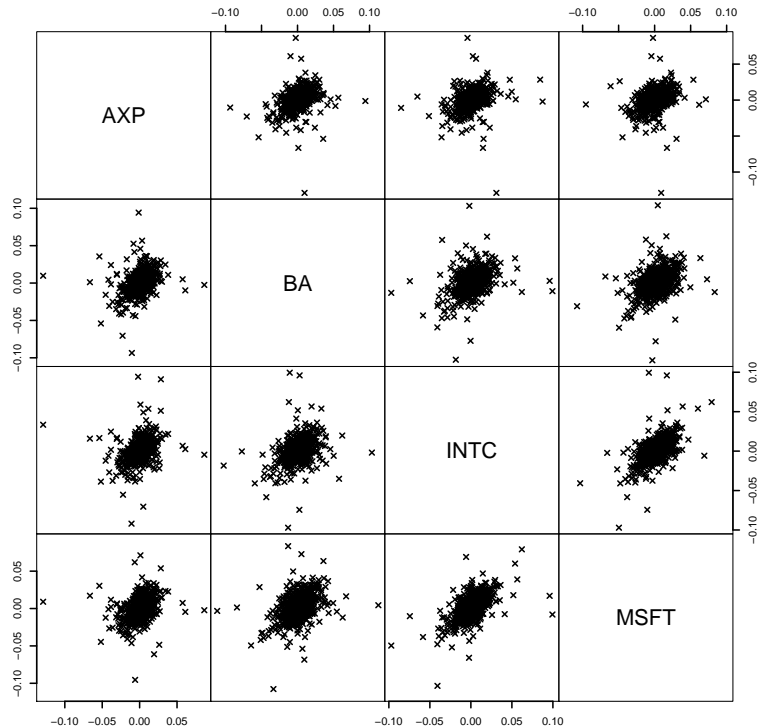}} %
\caption{
Pairwise scatter plots of daily log-returns of the quadruple $(\textsf{AXP},\textsf{BA},\textsf{INTC},\textsf{MSFT})$ spanning the period from January 5th, 2015 to December 1th, 2017.
}
\label{fig:d=4}       
\end{figure}

\tablename~\ref{tab:univariate descriptive} provides some descriptive statistics (and Jarque-Bera normality tests) for the \textsf{AXP}, \textsf{BA}, \textsf{INTC}, and \textsf{MSFT} series, separately considered.
\begin{table}[!ht]
\caption{Descriptive statistics, and Jarque-Bera normality tests (with $p$-values in brackets), for the \textsf{AXP}, \textsf{BA}, \textsf{INTC}, and \textsf{MSFT} series.}
\label{tab:univariate descriptive}
\centering
\begin{tabular}{l c rrrr}
\toprule
   && Mean & Std. Dev & Excess kurtosis & Jarque-Bera test \\
\midrule
\textsf{AXP}  && 0.000 & 0.013 & 17.648 & 9756.065 ($<$ 0.001)  \\
\textsf{BA}   && 0.001 & 0.014 & 8.116 &  2022.966 ($<$ 0.001)  \\
\textsf{INTC} && 0.000 & 0.014 & 5.574 &  970.972 ($<$ 0.001)  \\
\textsf{MSFT} && 0.000 & 0.014 & 10.964 &  3508.548 ($<$ 0.001)  \\
\bottomrule
\end{tabular}
\end{table}
All the series are leptokurtic and the Jarque-Bera statistic confirms the departure from univariate normality at the 1$\permille$ level.
As concerns the dependence between stocks, \tablename~\ref{tab:correlations} displays the pairwise sample correlations along with $p$-values from the tests of uncorrelation computed via the \texttt{cor.test()} function of the \textbf{stats} package.
\begin{table}[!ht]
\caption{Pairwise correlations between \textsf{AXP}, \textsf{BA}, \textsf{INTC}, and \textsf{MSFT}.
$p$-values from the tests of pairwise uncorrelation are given in brackets.}
\label{tab:correlations}
\centering
\begin{tabular}{cccc}
\toprule
&   \textsf{BA} & \textsf{INTC} & \textsf{MSFT}\\
\midrule
\textsf{AXP}   &	        0.342  &	0.311 	       &  0.294    \\
               &     ($<$ 0.001) &  ($<$ 0.001)    &  ($<$ 0.001)   \\
\textsf{BA}    &		             &	0.400 	       &  0.362   \\
               &                 &     ($<$ 0.001) &  ($<$ 0.001)   \\
\textsf{INTC}  &		             &	      	       &  0.569   \\
               &                 &                 &  ($<$ 0.001)   \\
\bottomrule
\end{tabular}
\end{table}
As we can see, the four correlations are all significantly different from zero confirming the need of a multivariate model allowing for correlation between series.  

\subsection{Bivariate example}

In the first example we consider the bivariate ($d=2$) time series of log-returns of \textsf{AXP} and \textsf{BA} stocks.
\tablename~\ref{tab:DowJones} presents a comparison of the models in terms of goodness-of-fit and ability in reproducing the empirical kurtosis, which is 43.04.
\tablename~\ref{tab:DowJones} also gives rankings induced by AIC, BIC, and by the absolute difference between estimated and empirical kurtosis.
The estimated kurtosis is computed analytically, by substituting the estimated parameters in the theoretical formula for the kurtosis.
The theoretical kurtosis is given in \eqref{eq:kurtosis} for the MTIN, in \appendixname~\ref{app:SGHD} for the MSGH models (along with their particular subcases), in \citet{Kotz:Nada:Mult:2004} for the M$t$, and in \citet{Bagn:Punz:Zoia:Them:2016} for MLN and MCN models.
\begin{table}[!ht]	
\centering																				
 \resizebox{0.9\textwidth}{!}{
 \begin{tabular}{lr@{\hskip 0.3in}r@{\hskip 0.3in}rr@{\hskip 0.3in}rr@{\hskip 0.3in}rrr}	
\toprule													
Model	&	$\#$ par.	&	Log-lik.	&	AIC	&	Ranking	&	BIC	&	Ranking	&	Kurtosis	&	Abs.~diff.	&	Ranking	\\
\midrule
MN	&	5	&	4306.900	&	8603.800	&	9	&	8580.808	&	9	&	8	&	35.043	&	7	\\
M$t$	&	6	&	4559.031	&	9106.063	&	2	&	9078.472	&	2	&	-	&	-	&	9	\\
MTIN	&	6	&	4561.830	&	9111.661	&	1	&	9084.070	&	1	&	44.539	&	1.497	&	1	\\
MLN	&	6	&	4453.772	&	8895.545	&	8	&	8867.953	&	8	&	14.400	&	28.643	&	4	\\
MCN	&	7	&	4549.373	&	9084.747	&	4	&	9052.557	&	5	&	27.929	&	15.114	&	2	\\
MSVG	&	7	&	4521.768	&	9031.536	&	6	&	9003.945	&	6	&	14.083	&	28.960	&	5	\\
MSH	&	7	&	4521.450	&	9030.901	&	7	&	9003.310	&	7	&	13.065	&	29.977	&	6	\\
SNIG	&	7	&	4547.695	&	9083.390	&	5	&	9055.798	&	4	&	20.241	&	22.801	&	3	\\
MSGH	&	8	&	4559.031	&	9104.062	&	3	&	9071.873	&	3	&	82.140	&	39.097	&	8	\\
	
\bottomrule
\end{tabular}			
}																		
\caption{Bivariate example: log-likelihood, AIC, BIC and kurtosis for the competing models, along with rankings induced by these criteria.
The column ``Abs.~diff.'' denotes the absolute difference between empirical and estimated kurtoses.
The empirical kurtosis is 43.04. 
}		
\label{tab:DowJones}																			
\end{table}	

As we can see from \tablename~\ref{tab:DowJones}, AIC and BIC provide the same ranking, except for the SNIG and MCN distributions having a reversed position.
These criteria indicate that the MTIN is the best model; its estimated inflation parameter is $\widehat{\theta}=0.993$ (very close to the upper bound for this parameter).
The second and third best models are the M$t$ and MSGH, respectively.
For the M$t$ model, the estimated degrees of freedom are lower than $4$ ($\widehat{\nu}=3.371$) and, as such, the corresponding kurtosis is undefined.
Among the competing models, the MTIN is the best one also in reproducing the empirical kurtosis (the absolute difference is 1.497).
According to this criterion, the second best model is the MCN; interestingly, although the MCN model has an additional parameter with respect to the MTIN, it provides a worse result.


\subsection{Trivariate example}

In the second example we consider the trivariate ($d=3$) time series of log-returns of the \textsf{AXP}, \textsf{BA}, and \textsf{INTC} stocks.
AIC and BIC in \tablename~\ref{tab:DowJones2} provide, for the fitted models, the same ranking. 
As for the previous example, considering AIC and BIC, the MTIN, M$t$, and MSGH models occupy the first, second, and third positions, respectively.
The estimated inflation parameter ($\widehat{\theta}=0.990$) is close to the upper bound also in this case.

The estimated degrees of freedom for the M$t$ distribution are lower than $4$ also in this example ($\widehat{\nu}=3.497$); then, we can not evaluate the ability of the M$t$ model in reproducing the empirical kurtosis (63.48).
According to the absolute difference between estimated and empirical kurtosis, the best model is the MTIN while the second best is the MCN.
\begin{table}[!ht]	
\centering																				
 \resizebox{0.9\textwidth}{!}{
\begin{tabular}{lr@{\hskip 0.3in}r@{\hskip 0.3in}rr@{\hskip 0.3in}rr@{\hskip 0.3in}rrr}	
\toprule													
Model	&	$\#$ par.	&	Log-Lik.	&	AIC	&	Ranking	&	BIC	&	Ranking	&	 Kurtosis	&	Abs.~diff. &	Ranking	\\
\midrule
MN	&	9	&	6499.588	&	12981.180	&	9	&	12939.790	&	9	&	15	&	48.488	&	8	\\
M$t$	&	10	&	6843.439	&	13666.880	&	2	&	13620.890	&	2	&	-	&	-	&	9	\\
MTIN	&	10	&	6843.689	&	13667.380	&	1	&	13621.390	&	1	&	68.182	&	4.693	&	1	\\
MLN	&	10	&	6631.859	&	13243.720	&	8	&	13197.730	&	8	&	27.000	&	36.488	&	5	\\
MCN	&	11	&	6820.018	&	13618.040	&	5	&	13567.450	&	5	&	45.330	&	18.158	&	2	\\
MSVG	&	11	&	6802.071	&	13584.140	&	6	&	13538.160	&	6	&	25.872	&	37.616	&	6	\\
MSH	&	11	&	6792.339	&	13564.680	&	7	&	13518.690	&	7	&	22.500	&	40.988	&	7	\\
SNIG	&	11	&	6831.064	&	13642.130	&	4	&	13596.140	&	4	&	35.843	&	27.646	&	3	\\
MSGH	&	12	&	6843.438	&	13664.880	&	3	&	13614.290	&	3	&	94.084	&	30.596	&	4	\\
\bottomrule
\end{tabular}			
}																		
\caption{Trivariate example: log-likelihood, AIC, BIC and kurtosis for the competing models, along with rankings induced by these criteria.
The column ``Abs.~diff.'' denotes the absolute difference between empirical and estimated kurtoses.
The empirical kurtosis is 63.48.
}		
\label{tab:DowJones2}																			
\end{table}	

\subsection{Quadrivariate example}

In the third example we consider the quadrivariate ($d=4$) time series of log-returns of the \textsf{AXP}, \textsf{BA}, \textsf{INTC}, and \textsf{MSFT} stocks.
According to AIC and BIC, which provide the same ranking (see \tablename~\ref{tab:DowJones3}), the MTIN, M$t$, and MSGH models occupy the first, second, and third positions, respectively.
The estimated inflation parameter results $\widehat{\theta}=0.990$, while the estimated degrees of freedom are $\widehat{\nu}=3.453$.
As concerns the kurtosis, the model having the best performance is the MTIN while the second best, as in the previous two examples, is the MCN.
\begin{table}[!ht]	
\centering																				
 \resizebox{0.9\textwidth}{!}{
\begin{tabular}{lr@{\hskip 0.3in}r@{\hskip 0.3in}rr@{\hskip 0.3in}rr@{\hskip 0.3in}rrr}	
\toprule													
Model	&	$\#$ par.	&	Log-Lik.	&	AIC	&	Ranking	&	BIC	&	Ranking	&	 Kurtosis	&	Abs.~diff. &	Ranking	\\
\midrule
MN	&	14	&	8740.774	&	17453.550	&	9	&	17389.170	&	9	&	24	&	74.474	&	7	\\
M$t$	&	15	&	9236.442	&	18442.880	&	2	&	18373.910	&	2	&	-	&	-	&	9	\\
MTIN	&	15	&	9236.793	&	18443.590	&	1	&	18374.610	&	1	&	109.293	&	10.819	&	1	\\
MLN	&	15	&	8934.125	&	17838.250	&	8	&	17769.270	&	8	&	40.000	&	58.475	&	5	\\
MCN	&	16	&	9198.779	&	18365.560	&	5	&	18291.980	&	5	&	70.307	&	28.168	&	2	\\
MSVG	&	16	&	9180.549	&	18331.100	&	6	&	18262.120	&	6	&	41.269	&	57.206	&	4	\\
MSH	&	16	&	9150.832	&	18271.660	&	7	&	18202.690	&	7	&	33.600	&	64.874	&	6	\\
SNIG	&	16	&	9220.903	&	18411.810	&	4	&	18342.830	&	4	&	57.865	&	40.609	&	3	\\
MSGH	&	17	&	9236.443	&	18440.890	&	3	&	18367.310	&	3	&	182.387	&	83.912	&	8	\\
\bottomrule
\end{tabular}			
}																		
\caption{Quadrivariate example: log-likelihood, AIC, BIC and kurtosis for the competing models, along with rankings induced by these criteria.
The column ``Abs.~diff.'' denotes the absolute difference between empirical and estimated kurtoses.
The empirical kurtosis is 98.474.
}		
\label{tab:DowJones3}																			
\end{table}

\section{Discussion}
\label{sec:Discussion}

In this article, the multivariate tail-inflated normal (MTIN) distribution has been introduced.
Compared with other multivariate heavy-tailed elliptical distributions embedding the multivariate normal (MN), the MTIN has the following main characteristics:
\begin{itemize}
	\item a closed-form representation for the probability density function (differently from the $\alpha$-stable sub-Gaussian distributions, with the exception of the Cauchy distribution);
	\item a single parameter governing the tail weight (differently from the contaminated normal distribution and from the majority of distributions nested within the symmetric generalized hyperbolic);
	\item an MN scale mixture representation (differently from the leptokurtic normal distribution);
	\item the covariance matrix and excess kurtosis always exist (differently from $t$ and $\alpha$-stable sub-Gaussian distributions);
	\item any level of excess kurtosis can be reached (differently from the leptokurtic normal distribution).
\end{itemize}
Moreover, being a member of the MN scale mixture family, the MTIN distribution allows to obtain robust estimates of the parameters of the nested MN distribution when the maximum likelihood estimation paradigm is considered, as discussed in Section~\ref{sec:Some notes on robustness}.  
The analysis on financial data of Section~\ref{sec:Real data analysis} has further corroborated how the proposed model represents a valid alternative to most of the distributions cited above in terms of AIC and BIC, but also in reproducing the empirical kurtosis of the analyzed data that, in the considered financial context, is typically very high.


\bibliographystyle{natbib}      

\appendix
\numberwithin{equation}{section}

\section*{Appendix}


\section{Alternative formulations of the pdf}
\label{app:Alternative formulation of the pdf}

An analogous formulation of the pdf in \eqref{eq:MTIN pdf} is
\begin{equation}
f_{\text{TIN}}\left(\bx;\bmu,\bSigma,\theta\right)  =
\frac{2\pi^{-\frac{d}{2}}\left|\bSigma \right|^{-\frac{1}{2}}}{\theta \left[\delta\left(\bx;\bmu,\bSigma\right)\right]^{\left(\frac{d}{2}+1\right)}}    
\left[ \gamma\left(\frac{d}{2}+1, \frac{\delta\left(\bx;\bmu,\bSigma\right)}{2} \right) - \gamma\left(\frac{d}{2}+1,(1-\theta)\frac{\delta\left(\bx;\bmu,\bSigma\right)}{2} \right) \right],  
\label{eq:MTIN pdf lower}
\end{equation}
where $\gamma\left(\cdot,\cdot\right)$ is the lower incomplete gamma function.
A further formulation of the pdf in \eqref{eq:MTIN pdf} is  
\begin{align} 
f_{\text{TIN}}\left(\bx;\bmu,\bSigma,\theta\right)=&
\frac{2\pi^{-\frac{d}{2}}\left|\bSigma \right|^{-\frac{1}{2}}}{\theta \left[\delta\left(\bx;\bmu,\bSigma\right)\right]^{\left(\frac{d}{2}+1\right)}}    
\left\{
\sum_{i=0}^{ \left\lfloor \frac{d}{2}\right\rfloor } \left(\frac{d}{2}\right)_i 
\left(\frac{\delta\left(\bx;\bmu,\bSigma\right)}{2}\right)^{\frac{d}{2}-i}
\left[  \exp\left(\frac{(1-\theta) \delta \left(\bx;\bmu,\bSigma\right) }{2}\right)      \left(1-\theta\right)^{\frac{d}{2}-1}-1 \right]  \right.\nonumber\\
&\left.+  \sqrt{\pi} \left(\frac{d}{2}\right)_{\frac{d-1}{2}} \left[\Phi\left(\sqrt{\delta\left(\bx;\bmu,\bSigma\right)}\right)-\Phi\left(\sqrt{(1-\theta)\delta\left(\bx;\bmu,\bSigma\right)}\right)\right]  \indic_{\left\{\frac{1}{2}\right\}}\left(\frac{d}{2} - \left\lfloor \frac{d}{2}\right\rfloor\right)  \right\},
\label{eq:MTIN pdf new}
\end{align}
where $\left\lfloor \cdot \right\rfloor$ is the floor function, $(a)_b$ is the Pochhammer operator \citep{Abra:Steg:Hand:1965}, and $\Phi(\cdot)$ is the distribution function of a standard normal random variable.
The alternative form in \eqref{eq:MTIN pdf new} can be easily obtained by iteratively using the integration by parts to solve the integral in \eqref{eq:withintegral}.
Note that this formulation of the MTIN distribution simplifies if $d$ is even because the second row in \eqref{eq:MTIN pdf new} vanishes.

\section{Proof of Theorem~\ref{theo: MTIN moments}}
\label{app:Teorema sui momenti della MTIN}

Before to sketch the proof of Theorem~\ref{theo: MTIN moments}, we recall Definition~\ref{def:MNSM}, which is discussed in more details in \citet[][Section~3.2.2]{McNe:Frey:Embr:Quan:2005}, and we provide Theorem~\ref{theo:MNSM moments} along its proof.
These preparatory results give us the possibility to demonstrate Theorem~\ref{theo: MTIN moments}.
\begin{defi}[MNSM: elliptical representation]\label{def:MNSM}
A random vector $\bX$ is said to have a multivariate normal scale mixture (MNSM) distribution if 
\begin{equation}
\bX = \bmu + \sqrt{V} T \boldsymbol{\Lambda}  \bU,
\label{eq:nvm}
\end{equation}
where $\bmu$, $T$, $\boldsymbol{\Lambda}$, and $\bU$ are defined as in \eqref{eq:ell}, while $V$ is a non-negative random variable which is independent of $T$ and $\bU$.  
\end{defi}
\begin{theorem}[MNSM: mean vector, covariance matrix and kurtosis]
\label{theo:MNSM moments}
If $\bX$ has a MNSM distribution, then
\begin{eqnarray}
\mbox{E}\left(\bX\right) &=& \bmu,   \label{eq:appmean}  \\  
\text{Var}\left(\bX\right) &=& E(V) \bSigma, \label{eq:appvariance} \\
\mbox{Kurt}\left(\bX\right) &=& d\left(d+2\right) \frac{E\left(V^2\right)}{\left[E\left(V\right)\right]^2}. \label{eq:appkurtosis2}
\end{eqnarray}
\end{theorem}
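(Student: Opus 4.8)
The plan is to condition on the mixing variable $V$ and exploit the fact that, conditionally, $\bX$ is multivariate normal. From the representation \eqref{eq:nvm}, given $V=v$ we have $\bX = \bmu + \sqrt{v}\,T\boldsymbol{\Lambda}\bU$; since $T\boldsymbol{\Lambda}\bU$, with $T\sim\chi_d$ and $\bU$ uniform on the unit sphere (independent of $T$), is the standard stochastic representation of $\mathcal{N}_d\left(\bzero,\bSigma\right)$ when $\boldsymbol{\Lambda}\boldsymbol{\Lambda}'=\bSigma$, it follows that $\bX\mid V=v\sim\mathcal{N}_d\left(\bmu,v\bSigma\right)$. All three formulas then reduce to the normal conditional moments combined with the tower property.

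For the mean I would write $\mbox{E}\left(\bX\right)=\mbox{E}\left[\mbox{E}\left(\bX\mid V\right)\right]=\mbox{E}\left(\bmu\right)=\bmu$, which is \eqref{eq:appmean}. For the covariance I would apply the law of total variance,
\[
\text{Var}\left(\bX\right)=\mbox{E}\left[\text{Var}\left(\bX\mid V\right)\right]+\text{Var}\left[\mbox{E}\left(\bX\mid V\right)\right]=\mbox{E}\left(V\bSigma\right)+\text{Var}\left(\bmu\right)=\mbox{E}\left(V\right)\bSigma,
\]
the second term vanishing because $\mbox{E}\left(\bX\mid V\right)\equiv\bmu$ is non-random; this gives \eqref{eq:appvariance}.

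For the kurtosis the decisive point is that the Mardia measure $\mbox{Kurt}\left(\bX\right)$ uses the \emph{true} covariance matrix of $\bX$, which by \eqref{eq:appvariance} equals $\mbox{E}\left(V\right)\bSigma$, not the scale matrix $\bSigma$. Using $\bX-\bmu=\sqrt{V}\,T\boldsymbol{\Lambda}\bU$ together with $\boldsymbol{\Lambda}'\bSigma^{-1}\boldsymbol{\Lambda}=\bI$ and $\|\bU\|=1$, the quadratic form in the scale matrix collapses to
\[
\left(\bX-\bmu\right)'\bSigma^{-1}\left(\bX-\bmu\right)=V\,T^{2}\,\bU'\boldsymbol{\Lambda}'\bSigma^{-1}\boldsymbol{\Lambda}\bU=V\,T^{2},
\]
so the Mahalanobis distance with respect to the genuine covariance is $\left(\bX-\bmu\right)'\left[\mbox{E}\left(V\right)\bSigma\right]^{-1}\left(\bX-\bmu\right)=V\,T^{2}/\mbox{E}\left(V\right)$. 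I would then compute
\[
\mbox{Kurt}\left(\bX\right)=\mbox{E}\left[\left(\frac{V\,T^{2}}{\mbox{E}\left(V\right)}\right)^{2}\right]=\frac{\mbox{E}\left(V^{2}\right)}{\left[\mbox{E}\left(V\right)\right]^{2}}\,\mbox{E}\left(T^{4}\right),
\]
using independence of $V$ and $T$, and finish with $\mbox{E}\left(T^{4}\right)=\mbox{E}\left[\left(\chi^2_d\right)^2\right]=\text{Var}\left(\chi^2_d\right)+\left[\mbox{E}\left(\chi^2_d\right)\right]^2=2d+d^2=d\left(d+2\right)$, which yields \eqref{eq:appkurtosis2}. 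The main obstacle here is conceptual rather than computational: one must carefully distinguish the scale matrix $\bSigma$ from the genuine covariance matrix $\mbox{E}\left(V\right)\bSigma$ when evaluating the kurtosis, since it is exactly this distinction that produces the factor $\left[\mbox{E}\left(V\right)\right]^{2}$ in the denominator and prevents it from cancelling.
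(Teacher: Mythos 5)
Your proof is correct and, for the kurtosis formula \eqref{eq:appkurtosis2} (the only part the paper proves in detail), it follows essentially the same route: reduce the Mahalanobis quadratic form to $VT^2/\mbox{E}(V)$ via $\boldsymbol{\Lambda}'\bSigma^{-1}\boldsymbol{\Lambda}=\bI$ and $\|\bU\|=1$, then use independence and $\mbox{E}(T^4)=d(d+2)$. For the mean and covariance the paper simply cites \citet[p.~74]{McNe:Frey:Embr:Quan:2005}, whereas you supply the standard tower-property and law-of-total-variance arguments, which are correct and merely make explicit what the paper delegates to a reference.
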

\begin{proof}
Results in \eqref{eq:appmean} and \eqref{eq:appvariance} are proved in \citet[][p.~74]{McNe:Frey:Embr:Quan:2005}.
The result in \eqref{eq:appkurtosis2} follows by
\begin{eqnarray*}
  \mbox{Kurt}(\bX)&=& \mbox{E}\left\{\left[ \left(\bX-\bmu\right)' \left[ \mbox{Var}\left(\bX\right)\right]^{-1}\left(\bX-\bmu\right) \right]^2 \right\} \nonumber \\
	&=& \mbox{E}\left\{\left[ \left(\bX-\bmu\right)' \left[  \mbox{E}(V)\bSigma\right]^{-1}\left(\bX-\bmu\right) \right]^2 \right\} \nonumber \\
	&=& \frac{1}{\left[\mbox{E}(V)\right]^2}\mbox{E} \left[ \left( T^2 V \bU ' \boldsymbol{\Lambda}' \bSigma^{-1} \boldsymbol{\Lambda} \bU\right)^2   \right]  \nonumber \\
	&=& \mbox{E}\left(T^4\right)\frac{\mbox{E}\left( V^2 \right)}{\left[\mbox{E}(V)\right]^2} \\
	&=& d(d+2)\frac{\mbox{E}\left( V^2 \right)}{\left[\mbox{E}(V)\right]^2}.
\end{eqnarray*}
\end{proof}

We are now ready to give the proof of Theorem~\ref{theo: MTIN moments}.
\begin{proof}[Proof of Theorem~\ref{theo: MTIN moments}]
The MTIN distribution, having the stochastic representation in \eqref{eq:ell}, is a particular MNSM where
$V$ has the inverse uniform distribution over the interval $\left(1,1/\left(1-\theta\right)\right)$.
In such a case, 
\begin{equation}
 \mbox{E}(V) = -\frac{\log(1-\theta)}{\theta} \quad \mbox{and} \quad \mbox{E}\left(V^2\right) = \frac{1}{1-\theta}.
\label{eq:mominv}
\end{equation} 
The result in \eqref{eq:mean} is verified by \eqref{eq:appmean}.  
The results in \eqref{eq:variance} and \eqref{eq:kurtosis} are obtained by substituting the moments given in \eqref{eq:mominv} in \eqref{eq:appvariance} and \eqref{eq:appkurtosis2}, respectively.
\end{proof}

\section{Method of moments}
\label{sec:Method of moments}

In the method of moments applied to the estimation of the parameters of the MTIN distribution, we relate the (unknown) population moments in \eqref{eq:mean}, \eqref{eq:variance}, and \eqref{eq:kurtosis} to their sample counterparts 
$$
\overline{\bx} = \frac{1}{n} \sum_{i=1}^n \bx_i,
\quad 
\bS =  \frac{1}{n-1}  \sum_{i=1}^n \left(\bx_i - \overline{\bx}\right) \left(\bx_i - \overline{\bx}\right)',
\quad \text{and} \quad 
\widehat{\text{Kurt}}(\bX) = \frac{1}{n} \sum_{i=1}^n \left[
\left(\bx_i-\overline{\bx} \right)' \bS^{-1}\left(\bx_i - \overline{\bx}  \right)
\right]^2.
$$
Solving the system of equations
	$$
	\begin{cases}
	\bmu = \overline{\bx} \\ 
	v\left(\theta\right) \bSigma = \bS \\ 
	k\left(\theta\right) d\left(d+2\right) = \max\left\{\widehat{\text{Kurt}}(\bX),d\left(d+2\right)\right\} 
	\end{cases}
$$
with respect to $\bmu$, $\bSigma$, and $\theta$, gives raise to the estimates $\widehat{\bmu} = \overline{\bx}$ and $\widehat{\bSigma} = \bS/v(\widehat{\theta})$ for $\bmu$ and $\bSigma$, respectively, where $\widehat{\theta}$ is determined by numerically solving the third equation of the system with respect to $\theta$.
To search for this root in the interval $\left(0,1\right)$, we can use the \texttt{uniroot()} function included in the \textbf{stats} package.
Finally note that, the third equation in the system involves the maximum because the MTIN distribution can not be platykurtic.

\section{Problems with the application of the EM algorithm}
\label{app:application of the EM algorithm}

In the EM algorithm applied to the ML estimation of $\bPsi$ for the MTIN distribution, $\bPsi$ is taken as a whole.
On the $(r+1)$th iteration of the EM algorithm, the E-step is the same as given in Section~\eqref{subsec:E-step} for the ECME algorithm; instead, the two CM-steps are replaced by a single M-step which maximizes $Q\left(\bPsi|\bPsi^{(r)}\right)$ in \eqref{eq:Q}.
As the two terms of $Q\left(\bPsi|\bPsi^{(r)}\right)$, namely 
$Q_1\left(\bmu,\bSigma|\bPsi^{(r)}\right)$ and $Q_2\left(\theta|\bPsi^{(r)}\right)$, have zero-cross derivatives, they
can be maximized separately with respect to the parameters they involve.
Maximizing $Q_1\left(\bmu,\bSigma|\bPsi^{(r)}\right)$ with respect to $\bmu$ and $\bSigma$ corresponds to the first CM-step of the ECME algorithm (cf.~Section~\ref{subsec:CM-step 1}).
As it can be noted from \eqref{eq:MTIN complete-data log-likelihood theta}, maximizing $Q_2\left(\theta|\bPsi^{(r)}\right)$ with respect to $\theta$ is equivalent to maximize the log-likelihood function of $n$ independent observations $w_1^{(r)},\ldots,w_n^{(r)}$ from $\mathcal{U}\left(1-\theta,1\right)$.
Hence, we do not need to compute $\text{E}_{\bPsi^{(r)}}\left\{\log\left[\indic_{\left(1-\theta,1\right)}\left(W_i|\bX_i=\bx_i\right)\right]\right\}$ in the E-step.
Therefore, from the standard theory about the uniform distribution \citep[see, e.g.,][Chapter~26]{John:Kotz:cont2:1970}, the update for $\theta$ is
\begin{equation}
\theta^{(r+1)} = 1 - w_{(1)}^{(r)},
\label{eq:M-step theta}
\end{equation}
where $w_{(1)}^{(r)}=\min\left(w_1^{(r)},\ldots,w_n^{(r)}\right)$. 
Unfortunately, according to \eqref{eq:wi expectation}, it happens that $\theta^{(r+1)} < \theta^{(r)}$ because $w_{(1)}^{(r)} \in \left(1-\theta^{(r)},1\right)$.
This means that $\theta^{(r)}$ is a decreasing function of $r$, and this yields
$$
\lim_{r \rightarrow \infty}\theta^{(r)} = 0,
$$
regardless of the true but unknown value of $\theta$.
So, the EM algorithm fails to converge.

\section{Kurtosis of the symmetric generalized hyperbolic distribution}
\label{app:SGHD}

As well-documented in \citet[][Section~3.2]{McNe:Frey:Embr:Quan:2005}, the MSGH distribution belongs to the MNSM family if $V$ in \eqref{eq:nvm} has a generalized inverse Gaussian distribution; in symbols, $V\sim \mathcal{N}^{-}\left(\lambda,\chi,\psi\right)$.
In particular, if $\bX$ has a MSGH distribution, in symbols $\bX \sim \mathcal{SGH}_d\left(\bmu,\bSigma,\lambda,\chi,\psi\right)$, then its pdf is
\begin{equation}
f_{\text{MSGH}}\left(\bx ;\bmu,\bSigma,\lambda,\chi,\psi \right) = \frac{\left(\sqrt{\chi \psi}\right)^{-\lambda}\psi^{\frac{d}{2}}}{\left(2\pi\right)^{\frac{d}{2}}\left|\bSigma\right|^{\frac{1}{2}}K_{\lambda}\left(\sqrt{\chi \psi}\right)} 
\frac{K_{\lambda-\frac{d}{2}}\left(\sqrt{\left[\chi+\delta\left(\bx;\bmu,\bSigma\right)\right]\psi}\right)}{\left\{\sqrt{\left[\chi+\delta\left(\bx;\bmu,\bSigma\right)\right]\psi}\right\}^{\frac{d}{2}-\lambda}},
\label{eq:denMSGH}
\end{equation}
where $K_\lambda(\cdot)$ is the modified Bessel function of the third kind with index $\lambda$ \citep[see][for details]{Abra:Steg:Hand:1965}.
The parameters $\lambda$, $\chi$, and $\psi$ satisfy the following conditions: if $\lambda< 0$, then $\chi>0$ and $\psi\geq 0$; if $\lambda = 0$, then $\chi>0$ and $\psi > 0$; if $\lambda > 0$, then $\chi\geq 0$ and $\psi\geq 0$.

In the following we provide the kurtosis for the MSGH distribution (Theorem~\ref{theo:Kurtosis MSGH}) as well as for the nested SNIG, MSH (Corollary~\ref{cor: SNIG and MSH kurtosis}), and MSVG (Corollary~\ref{cor: MSVG kurtosis}) distributions. 
\begin{theorem}[MSGH: kurtosis]\label{theo:Kurtosis MSGH}
If $\bX \sim \mathcal{SGH}_d\left(\bmu,\bSigma,\lambda,\chi,\psi\right)$, then
\begin{equation}
\mbox{Kurt}\left(\bX\right) = d\left(d+2\right) \frac{K_{\lambda+2}\left(\sqrt{\chi\psi}\right) K_{\lambda}\left(\sqrt{\chi\psi}\right)}{K_{\lambda+1}\left(\sqrt{\chi\psi}\right)^2}. \label{eq:MSGH kurtosis}
\end{equation}
\end{theorem}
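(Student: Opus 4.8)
The plan is to reduce everything to the general moment formula for multivariate normal scale mixtures established in Theorem~\ref{theo:MNSM moments} and then insert the known moments of the generalized inverse Gaussian mixing variable. By the opening remark of this appendix, $\bX\sim\mathcal{SGH}_d\left(\bmu,\bSigma,\lambda,\chi,\psi\right)$ is a MNSM whose mixing variable $V$ in \eqref{eq:nvm} satisfies $V\sim\mathcal{N}^{-}\left(\lambda,\chi,\psi\right)$. Hence Theorem~\ref{theo:MNSM moments} applies directly and gives
\begin{equation*}
\mbox{Kurt}\left(\bX\right)=d\left(d+2\right)\frac{E\left(V^2\right)}{\left[E\left(V\right)\right]^2},
\end{equation*}
so the entire task is to evaluate the ratio $E\left(V^2\right)/\left[E\left(V\right)\right]^2$ for a GIG random variable.

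First I would recall the classical moment formula for the GIG distribution: for $V\sim\mathcal{N}^{-}\left(\lambda,\chi,\psi\right)$ and any real $k$,
\begin{equation*}
E\left(V^k\right)=\left(\frac{\chi}{\psi}\right)^{\frac{k}{2}}\frac{K_{\lambda+k}\left(\sqrt{\chi\psi}\right)}{K_{\lambda}\left(\sqrt{\chi\psi}\right)},
\end{equation*}
which is obtained by integrating $v^k$ against the GIG density and recognizing the resulting integral as another (unnormalized) GIG density with index $\lambda+k$, the Bessel function arising from its normalizing constant. Specializing to $k=1$ and $k=2$ then yields closed forms for $E\left(V\right)$ and $E\left(V^2\right)$ in terms of $K_{\lambda+1}$, $K_{\lambda+2}$, and $K_{\lambda}$ evaluated at $\sqrt{\chi\psi}$.

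Finally I would form the ratio: the prefactors $\left(\chi/\psi\right)^{1/2}$ combine so that the powers of $\chi/\psi$ cancel, while the single factor $1/K_{\lambda}\left(\sqrt{\chi\psi}\right)$ from $E\left(V^2\right)$ divided by the squared factor $1/K_{\lambda}\left(\sqrt{\chi\psi}\right)^2$ from $\left[E\left(V\right)\right]^2$ leaves one surviving copy of $K_{\lambda}\left(\sqrt{\chi\psi}\right)$ in the numerator, so that
\begin{equation*}
\frac{E\left(V^2\right)}{\left[E\left(V\right)\right]^2}=\frac{K_{\lambda+2}\left(\sqrt{\chi\psi}\right)\,K_{\lambda}\left(\sqrt{\chi\psi}\right)}{K_{\lambda+1}\left(\sqrt{\chi\psi}\right)^2}.
\end{equation*}
Substituting this into the MNSM kurtosis formula completes the proof. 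There is no genuine analytic obstacle here; the only points requiring care are the bookkeeping of the $\left(\chi/\psi\right)$ and $K_{\lambda}\left(\sqrt{\chi\psi}\right)$ factors when forming the ratio, and citing the GIG moment identity while noting that it remains valid on the boundary cases $\chi=0$ or $\psi=0$ through the appropriate limiting forms of the Bessel function, so that the formula also covers the nested SNIG, MSH, and MSVG subfamilies treated in the subsequent corollaries.
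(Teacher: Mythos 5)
Your proof is correct and follows essentially the same route as the paper: reduce to the MNSM kurtosis formula of Theorem~\ref{theo:MNSM moments} and substitute the GIG moments $E\left(V^{\alpha}\right)=\left(\chi/\psi\right)^{\alpha/2}K_{\lambda+\alpha}\left(\sqrt{\chi\psi}\right)/K_{\lambda}\left(\sqrt{\chi\psi}\right)$ for $\alpha=1,2$. Incidentally, your statement of the GIG moment identity is the correct one --- the paper's displayed version contains a typo, writing $K_{\lambda+\alpha}$ in the denominator instead of $K_{\lambda}$ --- and your closing remark about handling the boundary cases $\chi=0$ or $\psi=0$ by limiting forms is a point the paper's proof leaves implicit.
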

\begin{proof}
If $\bX \sim \mathcal{SGH}_d\left(\bmu,\bSigma,\lambda,\chi,\psi\right)$, then it can be expressed as in \eqref{eq:nvm} with $V\sim \mathcal{N}^{-}(\lambda,\chi,\psi)$.
For the non-limiting case $\chi>0$ and $\psi>0$, it results \citep[see][Section~A.2.5]{McNe:Frey:Embr:Quan:2005}
\begin{equation}
 \mbox{E} \left(V^\alpha \right) =  \left(\frac{\chi}{\psi}\right)^{\frac{\alpha}{2}} \frac{K_{\lambda+\alpha}\left(\sqrt{\chi\psi}\right)}{K_{\lambda+\alpha}\left(\sqrt{\chi\psi}\right)}. \quad \alpha\in \real.
\label{eq:momGIG}
\end{equation}
Then, setting $\alpha=1$ and $\alpha=2$ in \eqref{eq:momGIG}, the first and second moments of $V$ are respectively obtained.
Once $\mbox{E} \left(V \right)$ and $\mbox{E} \left(V^2 \right)$ are substituted in \eqref{eq:appkurtosis2}, \eqref{eq:MSGH kurtosis} is obtained.
\end{proof}

\begin{cor}[SNIG and MSH: kurtosis]\label{cor: SNIG and MSH kurtosis}
The kurtosis of the SNIG and MSH distributions are
\begin{equation}
\mbox{Kurt}\left(\bX\right) = d\left(d+2\right) \frac{K_{\frac{3}{2}}\left(\sqrt{\chi\psi}\right) K_{-\frac{1}{2}}\left(\sqrt{\chi\psi}\right)}{K_{\frac{1}{2}}\left(\sqrt{\chi\psi}\right)^2}
\label{eq:SNIG kurtosis}
\end{equation}
and
\begin{equation}
\mbox{Kurt}\left(\bX\right) = d\left(d+2\right) \frac{K_{3}\left(\sqrt{\chi\psi}\right) K_{1}\left(\sqrt{\chi\psi}\right)}{K_{2}\left(\sqrt{\chi\psi}\right)^2},
\label{eq:MSH kurtosis}
\end{equation}
respectively.
\begin{proof}
If $\lambda=-0.5$ and $\lambda=1$ are substituted in \eqref{eq:MSGH kurtosis}, then the results in \eqref{eq:SNIG kurtosis} and \eqref{eq:MSH kurtosis} are obtained, respectively.
\end{proof}
\end{cor}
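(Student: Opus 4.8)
The plan is to recognize that both the SNIG and MSH distributions are nested within the MSGH family as special cases corresponding to particular values of the index parameter $\lambda$, and then simply to specialize the general kurtosis formula \eqref{eq:MSGH kurtosis} of Theorem~\ref{theo:Kurtosis MSGH} accordingly. The key preliminary fact, which is standard in the generalized hyperbolic literature \citep[see][Section~3.2]{McNe:Frey:Embr:Quan:2005}, is that the symmetric normal inverse Gaussian (SNIG) arises from $\mathcal{SGH}_d\left(\bmu,\bSigma,\lambda,\chi,\psi\right)$ by setting $\lambda=-1/2$, while the multivariate symmetric hyperbolic (MSH) arises by setting $\lambda=1$. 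Since the kurtosis in \eqref{eq:MSGH kurtosis} depends on the generalized inverse Gaussian mixing distribution only through $\lambda$, $\chi$, and $\psi$, no further derivation is required beyond these identifications.

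First I would record the parameter correspondences explicitly, noting that under the mixing representation $V\sim\mathcal{N}^{-}\left(\lambda,\chi,\psi\right)$ of \eqref{eq:nvm} the SNIG and MSH are obtained for $\lambda=-1/2$ and $\lambda=1$, respectively. Next I would substitute $\lambda=-1/2$ into \eqref{eq:MSGH kurtosis}, so that the shifted indices become $\lambda+2=3/2$, $\lambda+1=1/2$, and $\lambda=-1/2$, which yields \eqref{eq:SNIG kurtosis} directly. Then, substituting $\lambda=1$ gives $\lambda+2=3$, $\lambda+1=2$, and $\lambda=1$, producing \eqref{eq:MSH kurtosis}. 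Both substitutions are purely algebraic once the general formula is in hand.

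The only step requiring any care, and hence the principal ``obstacle,'' is justifying the parameter identifications themselves: namely that the SNIG and MSH densities in \eqref{eq:denMSGH} genuinely coincide with the stated values of $\lambda$. This is not a computation but rather an appeal to the known classification of subfamilies of the generalized hyperbolic distribution, so I would simply cite the relevant reference rather than re-derive the densities. Because the moment formula \eqref{eq:momGIG} for $V^\alpha$ holds for all admissible $\lambda$, and because the kurtosis \eqref{eq:MSGH kurtosis} was derived for general $\lambda$ via Theorem~\ref{theo:MNSM moments}, the specializations inherit full validity with no additional regularity conditions to check; the half-integer order Bessel functions appearing in the SNIG case are well defined and pose no difficulty.
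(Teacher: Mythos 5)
Your proposal is correct and follows exactly the paper's own argument: the corollary is proved by substituting $\lambda=-1/2$ (SNIG) and $\lambda=1$ (MSH) into the general MSGH kurtosis formula of Theorem~\ref{theo:Kurtosis MSGH}. The extra care you take in justifying the parameter identifications via the standard classification of generalized hyperbolic subfamilies is a reasonable elaboration but does not change the route.
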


As concerns the kurtosis of the MSVG distribution, we have a particular simplification of \eqref{eq:MSGH kurtosis} which is detailed in the following corollary. 

\begin{cor}[MSVG: kurtosis]\label{cor: MSVG kurtosis}
If $\lambda>0$ and $\chi=0$ in Theorem~\ref{theo:Kurtosis MSGH}, leading \eqref{eq:denMSGH} to be the pdf of the MSVG distribution, then
\begin{equation}
\mbox{Kurt}\left(\bX\right) = d\left(d+2\right) \frac{\lambda+1}{\lambda}. 
\label{eq:kurtMSVG}
\end{equation}
\end{cor}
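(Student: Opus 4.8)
The plan is to obtain the corollary as the limiting case $\chi\to 0^+$ of the MSGH kurtosis formula \eqref{eq:MSGH kurtosis} established in Theorem~\ref{theo:Kurtosis MSGH}, which is legitimate because the MSVG distribution arises precisely from \eqref{eq:denMSGH} upon letting $\chi=0$ with $\lambda>0$. Writing $z=\sqrt{\chi\psi}$, the only quantity that needs attention is the Bessel ratio $K_{\lambda+2}(z)K_{\lambda}(z)/K_{\lambda+1}(z)^2$ as $z\to 0^+$; everything else in \eqref{eq:MSGH kurtosis} is already the desired prefactor $d(d+2)$.

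First I would invoke the small-argument asymptotic of the modified Bessel function of the third kind, namely $K_\nu(z)\sim \tfrac{1}{2}\Gamma(\nu)(2/z)^{\nu}$ as $z\to 0^+$, valid for every $\nu>0$. Since $\lambda>0$, all three indices $\lambda$, $\lambda+1$, $\lambda+2$ are strictly positive, so this equivalent applies to each Bessel factor. Substituting them into the ratio, the powers of $2/z$ contribute $(\lambda+2)+\lambda$ in the numerator against $2(\lambda+1)$ in the denominator and therefore cancel exactly, as do the factors of $1/2$; what survives is the purely Gamma-function expression $\Gamma(\lambda+2)\Gamma(\lambda)/\Gamma(\lambda+1)^2$. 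A routine simplification via the recurrence $\Gamma(a+1)=a\Gamma(a)$, writing $\Gamma(\lambda+1)=\lambda\Gamma(\lambda)$ and $\Gamma(\lambda+2)=(\lambda+1)\lambda\Gamma(\lambda)$, then reduces this to $(\lambda+1)/\lambda$, and plugging back into \eqref{eq:MSGH kurtosis} yields $\mbox{Kurt}(\bX)=d(d+2)(\lambda+1)/\lambda$, as claimed.

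As an independent cross-check I would verify the same value directly from \eqref{eq:appkurtosis2}. In the limit $\chi\to 0$ with $\lambda>0$, the generalized inverse Gaussian mixing law $\mathcal{N}^{-}(\lambda,\chi,\psi)$ degenerates to a gamma distribution with shape $\lambda$ and rate $\psi/2$, for which $\mbox{E}(V)=2\lambda/\psi$ and $\mbox{E}(V^2)=4\lambda(\lambda+1)/\psi^2$, so that $\mbox{E}(V^2)/[\mbox{E}(V)]^2=(\lambda+1)/\lambda$ independently of $\psi$. This recovers the formula without any reference to Bessel asymptotics and confirms, reassuringly, that the answer does not depend on $\psi$, consistent with the single-parameter $\lambda$ appearing in \eqref{eq:kurtMSVG}.

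I expect the only genuine subtlety to be the justification that passing to the limit $\chi\to 0$ in the already-proven formula is permissible, i.e.\ that the kurtosis of the MSVG is the continuous limit of the MSGH kurtosis rather than a quantity that must be recomputed from scratch; this is exactly what the wording of the corollary (``if $\lambda>0$ and $\chi=0$ in Theorem~\ref{theo:Kurtosis MSGH}'') presupposes. Once the correct small-$z$ Bessel equivalent is in hand the algebra is immediate, and the agreement of the two independent routes removes any doubt about the cancellations.
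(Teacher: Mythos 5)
Your proposal is correct, but your primary route differs from the paper's. The paper does not take a limit of the Bessel-ratio formula \eqref{eq:MSGH kurtosis} at all: it observes that for $\lambda>0$ and $\chi=0$ the mixing variable $V$ is exactly gamma with shape $\lambda$ and rate $\psi/2$, computes $\mbox{E}(V)=2\lambda/\psi$ and $\mbox{E}(V^2)=4\lambda(\lambda+1)/\psi^2$, and substitutes these directly into the general MNSM kurtosis formula \eqref{eq:appkurtosis2} --- which is precisely your ``independent cross-check.'' Your main argument, sending $\chi\to 0^+$ in \eqref{eq:MSGH kurtosis} via the small-argument asymptotic $K_\nu(z)\sim\tfrac{1}{2}\Gamma(\nu)(2/z)^{\nu}$ and simplifying $\Gamma(\lambda+2)\Gamma(\lambda)/\Gamma(\lambda+1)^2=(\lambda+1)/\lambda$, is also valid, and it has the merit of exhibiting the MSVG result as a continuous degeneration of the MSGH one. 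However, it carries exactly the subtlety you flag: the paper's moment formula \eqref{eq:momGIG} is stated only for the non-limiting case $\chi>0$, $\psi>0$, so Theorem~\ref{theo:Kurtosis MSGH} as proved does not literally cover $\chi=0$, and one must separately justify that the kurtosis (equivalently, the GIG moments) is continuous in $\chi$ at $0$. The paper's direct computation avoids that issue entirely, which is why it is the cleaner proof here; your cross-check supplies the same argument, so nothing is missing, but if you had to choose one route the gamma-moment computation should be the main one rather than the corroboration.
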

\begin{proof}
If $\lambda>0$ and $\chi=0$, then the random variable $V$ has a gamma distribution with parameters $\alpha=\lambda$ and $\beta=\frac{1}{2}\psi$.
In this case, the MSGH distribution in \eqref{eq:denMSGH} reduces to the MSVG distribution.
Then, by substituting the first two moments 
$$
\mbox{E}(V)=\frac{2\lambda}{\psi}  \quad  \mbox{and}  \quad \mbox{E} \left(V^2 \right)=\frac{4\lambda(\lambda+1)}{\psi^2}
$$
in \eqref{eq:appkurtosis2}, the kurtosis in \eqref{eq:kurtMSVG} is straightforwardly obtained.
\end{proof}

\section{Partial derivatives of $l\left(\bPsi^*\right)$ in \eqref{eq:MTIN observed-data log-likelihood rep}}
\label{app:Partial derivatives}

The first order partial derivatives of $l\left(\bPsi^*\right)$ in \eqref{eq:MTIN observed-data log-likelihood rep}, with respect to $\bmu$, $\bOmega $ and $\gamma$, are respectively given by
\begin{align}
\frac{ \partial\: l\left(\bPsi^*\right)}{\partial \bmu } = & 2 \left(\frac{d}{2}+1\right)\sum_{i=1}^n \frac{1}{\delta\left(\bx_i;\bmu,\bOmega' \bOmega \right)} \bOmega ^{-1}  \left(\bOmega'\right)^{-1}  \left(\bx_i-\bmu  \right) \nonumber \\
& + \sum_{i=1}^n \nu\left(\bx_i,\bmu,\bOmega ,\gamma\right)  \bOmega ^{-1}  \left(\bOmega'\right)^{-1}  \left(\bx_i - \bmu \right) ,
\nonumber
\\
\frac{ \partial\: l\left(\bPsi^*\right)}{\partial \bOmega } = & - \frac{n}{2} \left(\bOmega'\right)^{-1}+  2 \left(\frac{d}{2}+1\right)\sum_{i=1}^n \frac{1}{\delta\left(\bx_i;\bmu,\bOmega' \bOmega \right)}  \left(\bOmega'\right)^{-1}  \left(\bx_i-  \bmu  \right) \left(\bx_i-  \bmu  \right)'  \bOmega ^{-1}  \left(\bOmega'\right)^{-1}  \nonumber \\
  & +\sum_{i=1}^n \nu\left(\bx_i,\bmu,\bOmega ,\gamma\right) \left(\bOmega'\right)^{-1}  \left(\bx_i-  \bmu  \right) \left(\bx_i-  \bmu  \right)'  \bOmega ^{-1}  \left(\bOmega'\right)^{-1}, \nonumber
	\\
	\frac{ \partial\: l\left(\bPsi^*\right)}{\partial \gamma} = & - \frac{n\exp(\gamma)}{1+\exp(\gamma)} + \left[1+\exp(\gamma)\right]^{-\frac{d}{2}-1} 
	\sum_{i=1}^n 
\frac{\left[\frac{\delta\left(\bx_i;\bmu,\bOmega' \bOmega \right)}{2}\right]^{\frac{d}{2}+1}\exp\left\{-\frac{\delta\left(\bx_i;\bmu,\bOmega' \bOmega \right)}{2\left[1+\exp(\gamma)\right]}\right\}}
{\Gamma\left[\frac{d}{2}+1,\frac{\delta\left(\bx_i;\bmu,\bOmega' \bOmega \right)}{2 \left[1+\exp(\gamma)\right]}\right] - \Gamma\left[\frac{d}{2}+1,\frac{\delta\left(\bx_i;\bmu,\bOmega' \bOmega \right)}{2} \right]}, 
\label{dergamma}
\end{align}
where 
$$
\nu\left(\bx_i,\bmu,\bOmega ,\gamma\right) =
\frac{\left[\frac{\delta\left(\bx_i;\bmu,\bOmega' \bOmega \right)}{2}\right]^\frac{d}{2}\exp\left[-\frac{\delta\left(\bx_i;\bmu,\bOmega' \bOmega \right)}{2}\right]\left\{\left[1+\exp(\gamma)\right]^{-\frac{d}{2}-1}\exp\left(\frac{\exp(\gamma)\delta\left(\bx_i;\bmu,\bOmega' \bOmega \right)}{2\left[1+\exp(\gamma)\right]}\right)-1\right\}}
{\Gamma\left[\frac{d}{2}+1,\frac{\delta\left(\bx_i;\bmu,\bOmega' \bOmega \right)}{2 \left[1+\exp(\gamma)\right]}\right] - \Gamma\left[\frac{d}{2}+1,\frac{\delta\left(\bx_i;\bmu,\bOmega' \bOmega \right)}{2} \right] }.
$$

\end{document}